\documentclass[12pt]{article}
\usepackage[english]{babel}

\usepackage{collcell}
\usepackage{mathtools}
\usepackage{diagbox}
\RequirePackage{amsthm,amsmath,amsfonts,amssymb}
\mathtoolsset{showonlyrefs}

\usepackage{xcolor}
\usepackage{graphicx}
\usepackage{listings}
\usepackage{algorithm}
\usepackage{algpseudocode}

\usepackage{bm}
\usepackage{url}

\usepackage{booktabs}
\usepackage{caption}
\usepackage{array}
\usepackage{enumitem}

\usepackage{algorithm}
\usepackage{algpseudocode}

\RequirePackage[authoryear]{natbib}

\RequirePackage{bm} 
\RequirePackage{subcaption} 

\DeclareMathOperator*{\argmin}{arg\,min}
\DeclareMathOperator{\Log}{Log}
\DeclareMathOperator{\Exp}{Exp}

\usepackage{scalerel}
\newcommand{\E}{{\mathbb E}}
\newcommand{\R}{\mathbb{R}}
\newcommand{\bR}{\mathbf{R}}
\newcommand{\bSigma}{\mathbf{\Sigma}}
\newcommand{\bM}{\mathbf{M}}

\newcommand{\x}{\mathbf{x}}

\newcommand{\iw}{\mathcal{IW}}
\newcommand{\siw}{\mathcal{SIW}}
\renewcommand{\P}{\mathrm{P}}

\newcommand{\ind}{\perp\!\!\!\!\perp}
\newcommand{\iidsim}{\stackrel{iid}{\sim}}

\newcommand{\cov}{covariance }
\renewcommand{\arraystretch}{1.4}

 \usepackage[table]{xcolor}

\definecolor{lightgray}{rgb}{0.96,0.96,0.96}
\definecolor{lightergray}{rgb}{0.98,0.98,0.98}

\usepackage{fancyhdr}
\usepackage{graphicx}
\usepackage{color}

\usepackage{thmtools}
\usepackage{thm-restate}
\newtheorem{theorem}{Theorem}
\newtheorem{prop}[theorem]{Proposition}
\newtheorem{definition}{Definition}[section]
\newtheorem{lemma}[theorem]{Lemma}
\newtheorem{property}[theorem]{Property}
\newtheorem{corollary}[theorem]{Corollary}
\newtheorem{remark}{Remark}
\usepackage[colorlinks=true, allcolors=blue]{hyperref}

\usepackage[left=2.5cm,right=2.5cm,top=2.5cm,bottom=2.5cm]{geometry}
\begin{document}
\renewcommand{\headrulewidth}{0pt}

\fancyhead[R]{ }


\begin{center}
{\Large
	{\sc  Prior elicitation for Bayesian estimation of single-subject connectivity networks}
}
\bigskip

 Yiye Jiang$^{1}$, Alice Chevaux$^{1}$, Wendy Meiring$^{3}$, Alex Petersen$^{4}$, Guillaume Kon Kam King$^{2}$, Julyan Arbel$^{1}$, Sophie Achard$^{1}$
\bigskip

{\it
$^{1}$ Univ. Grenoble Alpes, Inria, CNRS, Grenoble INP, LJK, 38000 Grenoble, France, \\
yiye.jiang@inria.fr, alice.chevaux@inria.fr, 
julyan.arbel@inria.fr, sophie.achard@cnrs.fr \\
$^{2}$ Université Paris-Saclay, INRAE, MaIAGE, 78350, Jouy-en-Josas, France, guillaume.konkamking@inrae.fr \\
$^{3}$ University of California, Santa Barbara, Santa Barbara, 93106, CA, USA, meiring@ucsb.edu\\
$^{4}$ Brigham Young University, Provo, 84602, UT, USA, petersen@stat.byu.edu

}
\end{center}
\bigskip




{\bf Abstract.} 
Inference of brain functional connectivity networks from resting-state fMRI data is a key focus in neuroimaging. This paper introduces new Bayesian approaches for inferring a functional connectivity graph from multivariate resting-state fMRI time series of a single subject. Our methods rely on novel Bayesian priors on correlation matrices and a dedicated prior elicitation framework, which translates prior beliefs about the expected level and variability of correlations into interpretable hyperparameter choices, enabling the construction of expert-informed priors. When combined with a Gaussian likelihood, these priors also exhibit computational advantages. Compared to most existing methods for this problem that estimate constant weights, our model provides distributional weights defined by the posterior distributions for the connectivity graph, yielding more robust point estimates through the regularizing effect of expert-informed priors, evaluating uncertainty, and enabling a range of post-inference analyses. In particular, we derive a procedure for identifying significant connectivities based on posterior distributions of weights and credible sets. To the best of our knowledge, only one existing Bayesian functional connectivity model is applicable to single-subject resting-state fMRI data, making our approach a valuable addition to the field and demonstrating superior performance in our experiments.

{\bf Keywords.} Brain functional connectivity inference, Bayesian modeling, edge detection, correlation priors, prior elicitation.

\bigskip\bigskip


\newpage

\tableofcontents

\newpage

\section{Introduction}
Functional connectivity  seeks to quantify the extent to which different parts of the brain show similar temporal characteristics. It is often measured by the level of temporal dependence during functional magnetic resonance imaging (fMRI). Statistical methods play a crucial role in learning and analyzing functional connectivity from fMRI data \citep{lazar2008statistical}. Classical methods include calculating temporal correlations between regions of interest, clustering, and Gaussian graphical lasso \citep{varoquaux2010brain,belilovsky2016testing}.
Among others, Bayesian approaches have emerged as an influential framework for functional connectivity inference since one of their earliest works \citep{patelBayesianApproachDetermining2006a}. Advantages of the Bayesian framework include  a built-in measure of uncertainty provided by the posterior distributions, 
which may be used to derive more robust decision rules, and the flexibility to develop realistic models. 

Although many works have shown the success of Bayesian methods in inferring functional connectivity networks, most of them are proposed for task-based fMRI \citep{bowmanBayesianHierarchicalFramework2008,warnickBayesianApproachEstimating2018,bergerBayesianAnalysisCovariance2020,renBayesianVaryingeffectsVector2024,kunduFlexibleBayesianProduct2024}.In such settings, methodological developments typically concentrate on modeling task-related fixed effects, while the functional connectivity structure is often treated as a secondary component. By contrast, the present work carefully develops a Bayesian framework for the connectivity component for resting-state fMRI. Moreover, it has the potential to be incorporated into existing task-based models by refining their treatment of connectivity, including models that consider dynamic connectivity, which often represent temporal variation through a finite set of discrete static connectivity states. In the following, we review existing Bayesian models for inferring functional connectivity structure from resting-state fMRI.


Two main types of observations are considered in Bayesian inference of functional connectivity: iid fMRI samples or Fisher’s z-transformed pairwise correlations. 
Based on iid fMRI samples, early work \citep{zhangInferringFunctionalInteraction2014} considers the inference of chain and V structures in static functional connectivity connectivity among regions for a single subject. The model supposes that all regions can be partitioned into three groups: A, B, and
C. Among A, B, and C, either a chain or a V is present. The output of inference is the posterior probabilities of each region being in the three groups, together with the probability of the whole brain forming a chain or a V. \cite{kunduScalableBayesianMatrix2021} considers the inference of an overall static functional connectivity graph among regions for a group of subjects. In particular, the fMRI signals are only considered iid across subjects, but not in time. To handle spatio-temporal dependence, a matrix-variate Gaussian distribution is adopted as the likelihood function, with the overall precision matrix decomposed into an functional connectivity and a temporal component via a Kronecker product. A Bayesian graphical lasso prior \citep{wangBayesianGraphicalLasso2012} and an Inverse-Wishart prior are then assigned respectively to the functional connectivity and temporal components.  \cite{kampmanTimevaryingFunctionalConnectivity2024,mengDynamicCovarianceEstimation2023} both consider the inference of a dynamic weighted functional connectivity graph among regions for a single subject based on a Gaussian likelihood and a Wishart process-type prior.

Among Bayesian approaches based on transformed correlations, \cite{chenBayesianHierarchicalFramework2016} and \cite{gorbachHierarchicalBayesianMixture2020} both consider static functional connectivity analysis at the group level while incorporating subject-specific clinical covariates, such as age, sex, and psychiatric status. In these models, latent binary variables are introduced to indicate the presence or absence of a connection, either between pairs of voxels \citep{chenBayesianHierarchicalFramework2016} or between pairs of regions \citep{gorbachHierarchicalBayesianMixture2020}. Each transformed correlation is then modeled through a mixture distribution whose components correspond to connected and non-connected pairs.
In particular, \cite{gorbachHierarchicalBayesianMixture2020} specifies a Lognormal distribution for the component associated with the presence of a connection, in order to reflect the prior belief that only positive correlations correspond to reliable connections:
\begin{equation}\label{eq:gorbach}
\begin{aligned}
    &\rho_{i,kk'} \mid w_{i,kk'} \iidsim w_{i,kk'}\text{lognormal}(\alpha^\top\mbox{x}_{i} + a_i, \sigma_{1i}^2) + (1-w_{i,kk'})\mathcal{N}(\mu_{0i}, \sigma_{0i}^2),    \\
    &w_{i,kk'} \mid \mbox{x}_{i} \iidsim \mathcal{B}(\Phi(\delta^\top\mbox{x}_{i} + d_i)), \;  \mu_{0i} \iidsim N(0,c_0^2), \;   \sigma^2_{0i}\iidsim IG(1.5,10^{-3}),\\
     &c_0^2\iidsim IG(1.5,10^{-3}).
\end{aligned}
\end{equation}
where $\rho_{i,kk'}$ denotes the Fisher $z$-transformed correlation between ROIs $k$ and $k'$ for subject $i$, $\mbox{x}_{i}$ is the vector of covariates, and $\Phi$ denotes the Gaussian cumulative distribution function.
Marginally, the connection indicator follows a Bernoulli distribution whose success probability depends on the covariates through a generalized linear model.
A key strength of these approaches is that they allow region pairs with no true connection to exhibit nonzero observed correlations, while still distinguishing them from correlations arising from genuinely connected pairs. This feature is arguably more realistic than assuming zero correlation for unconnected pairs. However, these methods treat correlation estimation as a preliminary step rather than as part of the inferential model itself. The empirical correlations are treated as observed data, without incorporating the uncertainty from preprocessing. In addition, the transformed correlations are modeled separately, thereby ignoring the global dependence structure they must satisfy as entries of a correlation matrix. These limitations are substantial.  Neglecting the positive-definiteness constraint results in an estimated network that does not correspond to a valid joint multivariate distribution, which undermines the mathematical coherence of the generative model and restricts its utility for subsequent multivariate network analyses.

Lastly, in addition to the two types of observations, \cite{acharyyaBayesianHierarchicalModeling2023} considers a collection of iid covariance matrices of a subject to infer a static functional connectivity graph among regions. The covariance matrices are calculated using a sliding-window method. A Wishart distribution is assigned as the likelihood function, where a shrinkage prior is adopted for the mean of the Wishart so as to encourage a low effective rank structure. 

We consider modeling directly fMRI samples, because processing will generally cause a loss of information. The modeling is on the region level. We rely on a Gaussian likelihood function as it is a common choice in literature, and measure the inter-regional connectivity by the correlation matrix. In particular, we wish to take into account possible expert knowledge on the values of correlation by controlling the means and variances of marginal prior distributions of correlation entries. In the related work, only \cite{chenBayesianHierarchicalFramework2016,gorbachHierarchicalBayesianMixture2020} consider similar controls of correlation entries. They proposed assigning the marginal distribution independently to each correlation entry, as in Equation \eqref{eq:gorbach}. All correlations are assumed to be iid, which facilitates the designation of the desired marginal prior to each $\rho_{i,kk'}$. However, as mentioned above, this assumption is philosophically and practically problematic, as it cannot yield a bona fide positive-definite correlation matrix. Thus, when we propose a marginal prior to a correlation entry, we should take into account the impact on the values of other correlation entries. This raises a significant technical challenge to the proposed model.

In this work, we derive two new priors supported on correlation matrices; however, they still allow for direct control of the marginal prior distribution of each correlation entry. These priors are then used to construct Bayesian models for inferring static functional connectivity networks from resting-state fMRI data at the single-subject level. This leads to our first contribution, which is of a methodological nature. The second contribution is of an applicative nature. To the best of our knowledge, only a very limited number of Bayesian approaches have been proposed for single-subject functional connectivity graph learning in the resting-state setting \citep{acharyyaBayesianHierarchicalModeling2023,zhangInferringFunctionalInteraction2014}. Moreover, the approach of \cite{zhangInferringFunctionalInteraction2014} does not appear to be supported by publicly available code, which limits its practical reproducibility. We therefore compared our method with the approach of \cite{acharyyaBayesianHierarchicalModeling2023} on real data. The results show that the proposed method yields improved performance, thereby providing a useful addition to the existing literature.


\paragraph{Organization.} In Section \ref{sec: iw}, we derive a new prior on correlation matrices based on the Inverse Wishart distribution. In Section \ref{sec: siw0}, we generalize the prior via the Shrinkage Inverse Wishart distribution. In Section \ref{sec: app}, we test the derived models with real data, where we show an application of posterior distributions-edge detection. We also compare with \cite{acharyyaBayesianHierarchicalModeling2023} in the empirical studies section\footnote{For the sake of reproducible research, R code is available at
\url{https://github.com/yiyej/Bayesian-single-subject-functional connectivity-models}, which implements the posterior sampling procedure developed in Section \ref{sec: sampling} and the empirical studies carried out in this paper.}. 

\paragraph{Notations.} To facilitate reading, we use bold font for random values, plain font for constants. 
$A \succ 0$ denotes that $A$ is a positive definite matrix, and $A_{kk'}$ denotes the $(kk')$-entry of matrix $A$. To index the $(kk')$-entry of a matrix already indexed, such as $P_0$ and $P_{0,n}$, we add $kk'$ at the end, namely, $P_{0,kk'}$ and $P_{0,n,kk'}$. $R(\Sigma)$ denotes the induced correlation matrix of covariance matrix $\Sigma$.

\section{Elicitation for simple prior}\label{sec: iw}

\subsection{Connectivity model}
Let us define wavelet-decomposed fMRI measurements at time $i$ and region $k$ by random variables describing wavelet coefficients $\x_{ki}, i = 1, \ldots, n, k = 1, \ldots, K$. We assume a Gaussian likelihood for $\bm x_{ki}$ as it is a common choice in the literature:
\vspace{-0.1in}
\begin{equation}\label{eq:model}
\x_i|\bSigma\stackrel{iid}{\sim} \mathcal{N}\left(0,\bSigma\right),
\end{equation}
where $\x_i = \{\bm x_{1i}, \ldots, \bm x_{Ki}\}^\top$.
The iid assumption is convenient, and proves to be approximately valid by performing basic wavelet checks using existing packages such as $\mathtt{waveslim}$ in R \citep{achard2016multivariate}. In addition, there is no loss of generality in assuming the mean equal to zero. 

For the functional connectivity metric, we consider the induced correlation matrix $\bR = R(\bSigma)$ defined by 
\[\bR_{kk'} = \frac{\bSigma_{kk'}}{\sqrt{\bSigma_{kk}}\sqrt{\bSigma_{k'k'}}}.\] When taking $\bR$ as the adjacency matrix of graph, a fully-connected weighted undirected graph can be retrieved to represent the overall (static) functional connectivity levels among the studied regional pairs $(k,k')$, where $\forall k, k' = 1, \ldots, K$ during the course of fMRI scan $i = 1, \ldots, n$. The posterior distribution of each edge weight $\bR_{kk'}$ provides a deeper insight into the functional connectivity of pair $(k,k')$ than a simple point estimate, especially on the uncertainty. 

Even though the variable of interest is the correlation $\bR$, we first work with priors on the covariance $\bSigma$. This is because such priors on $\bSigma$ can be conjugate to the normal likelihood, thus leading to fast inference. We then study the properties of the prior induced on $\bR$. In this context, we need a prior on $\bSigma$, parametrized in such a way that the parameters 
can control directly the prior induced on $\bR$ and its marginals of $\bR_{kk'}$. 
In the following, we use the Inverse-Wishart ($\iw$) distribution to construct the desired prior. The $\iw$ is a classical distribution on positive-definite matrices, widely used as the prior for the \cov matrix of a Gaussian likelihood due to the conjugacy. We recall in the following its definition.

\subsection{Inverse-wishart prior}

\paragraph{Inverse-Wishart distribution.}
A random matrix $\bSigma \succ 0$ of size $K\times K$ follows an \emph{Inverse-Wishart} distribution with degrees of freedom $\nu > K + 3$ and scale matrix $\Psi \succ 0$, if its density is
\begin{equation}\label{eq: iw den}
\pi(\bSigma | \Psi, \nu) \;=\;
\frac{|\Psi|^{\nu/2}}{2^{\nu K/2}\,\Gamma_K(\nu/2)}
\,|\bSigma|^{-(\nu+K+1)/2}\,
\exp\!\left(-\frac{1}{2}\operatorname{tr}\!\big(\Psi\,\bSigma^{-1}\big)\right),    
\end{equation}
where $\Gamma_K(\cdot)$ is the multivariate gamma function. We denote this by $\bSigma \sim \mathcal{IW}(\Psi, \nu)$. 

To be well-defined, this family requires only that $\nu > K - 1$. Our additional constraint, $\nu > K + 3$, ensures the distribution has finite variance, which we will use in the following development.

\paragraph{Reparametrization.} An equivalent parametrization of $\iw$ is in terms of $(\P, \Vec{\sigma},\nu)$. Here, $\P = R(\Psi)$, so that $
\P_{kk'} = \Psi_{kk'}/(\sqrt{\Psi_{kk}}\sqrt{\Psi_{k'k'}})$ and $\Vec{\sigma} = (\sigma_1, \sigma_2, \ldots, \sigma_K)$ with $\sigma_k = \Psi_{kk}$, $\forall k, k' = 1, \ldots, K$. Thus $\P$ takes the form of a correlation matrix, and $\sigma_k > 0$. In this paper, we will adopt the correlation-variance parameterization for $\iw$, denoted by $\iw(\P, \Vec{\sigma},\nu).$

We propose to use $\iw(\P, \Vec{\sigma},\nu)$ as the prior for $\bm\Sigma$, the covariance matrix in the Gaussian likelihood model \eqref{eq:model}. The parametrization in $(\P, \Vec{\sigma},\nu)$ leads into the direct control on the prior induced on $\bR$ as stated in Proposition  \ref{prop: IW prior}. 

\begin{prop}\label{prop: IW prior} 
Let $\bSigma \sim \mathcal{IW}(\P, \Vec{\sigma},\nu)$, and denote $\bR = R(\bSigma)$. Then we have for $k\neq k'$ and as $\nu \rightarrow \infty$:
 \begin{enumerate}
 \vspace{-0.05in}
 \item $\bR \rightarrow \P$ in probability,
    \item $ \mathbb{E}(\bR_{kk'}) =
    \P_{kk'} \left(1-\frac{1-(\P_{kk'})^2}{2(\nu - K+2)}\right) +  o\left(\frac{1}{\nu^2}\right),$ 
     \vspace{-0.05in}
    \item $\mathbb{V}
    (\bR_{kk'}) = \frac{(1-(\P_{kk'})^2)^2}{\nu - K+1} + O\left(\frac{1}{\nu^3}\right).$ 
\end{enumerate}   
\end{prop}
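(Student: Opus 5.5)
The plan is to reduce everything to a bivariate problem and then use classical results on the sample correlation coefficient of a Gaussian sample.

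\textbf{Step 1: reduction to a $2\times 2$ block.} Fix $k\neq k'$ and consider the $2\times 2$ principal submatrix $\bSigma_{[kk']}$ indexed by $\{k,k'\}$. A standard property of the Inverse-Wishart is that principal submatrices are again Inverse-Wishart, with the degrees of freedom reduced by the number of deleted coordinates. In the convention of \eqref{eq: iw den} this gives
\[
\bSigma_{[kk']}\sim \iw(\Psi_{[kk']},\nu'), \qquad \nu'=\nu-K+2 .
\]
Since $\bR_{kk'}$ depends only on $\bSigma_{[kk']}$, we may take $K=2$.

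\textbf{Step 2: removing the scale.} The map $R(\cdot)$ is invariant under congruence by positive diagonal matrices $D$. Moreover, $D\bSigma D\sim\iw(D\Psi D,\nu')$. Choosing $D$ appropriately, we may assume $\Psi_{[kk']}=\P_{[kk']}$, which has unit diagonal and off-diagonal entry $\rho:=\P_{kk'}$. So the law of $\bR_{kk'}$ depends only on $(\rho,\nu')$, and $\Vec\sigma$ plays no role.

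\textbf{Step 3: passage to a Wishart and a sample correlation.} Set $\mathbf W=\bSigma_{[kk']}^{-1}$, so that $\mathbf W\sim$ Wishart$(\P_{[kk']}^{-1},\nu')$. For a $2\times2$ matrix, the inverse has off-diagonal of opposite sign and swapped diagonal, so
\[
R(\bSigma)_{12}=-\frac{\mathbf W_{12}}{\sqrt{\mathbf W_{11}\mathbf W_{22}}}.
\]
The matrix $\P_{[kk']}^{-1}$ has correlation $-\rho$. Realize $\mathbf W=\sum_{i=1}^{\nu'} \mathbf z_i\mathbf z_i^\top$ with $\mathbf z_i\iidsim\mathcal N(0,\P_{[kk']}^{-1})$. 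Then $\bR_{kk'}\stackrel{d}{=} r$, where $r$ is the uncentered sample correlation of $\nu'$ iid bivariate normal pairs with population correlation $\rho$. An uncentered sample of size $\nu'$ has exactly the law of the usual centered sample correlation with $N=\nu'+1=\nu-K+3$ observations, since centering removes one degree of freedom of the Wishart.

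\textbf{Step 4: the three claims.}
\begin{itemize}
\item \emph{Item 1.} This follows from the law of large numbers: $\mathbf W/\nu'\to \P_{[kk']}^{-1}$ almost surely. Continuity of the correlation map then gives $\bR_{kk'}\to\rho$, and taking all pairs simultaneously gives $\bR\to\P$ in probability.
\item \emph{Items 2 and 3.} Here I would invoke Fisher's and Hotelling's expansions of the moments of the sample correlation, which come from the exact density of $r$ (Gaussian hypergeometric form):
\[
\E r=\rho\Bigl(1-\frac{1-\rho^2}{2(N-1)}\Bigr)+O(N^{-2}), \qquad \mathbb V(r)=\frac{(1-\rho^2)^2}{N-2}+O(N^{-2}).
\]
With $N-1=\nu-K+2$ and $N-2=\nu-K+1$, these match the displayed leading terms.
\end{itemize}

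\textbf{Main obstacle.} The hard step is certifying the stated remainder orders, $o(\nu^{-2})$ for the mean and $O(\nu^{-3})$ for the variance. A crude delta method only gives $O(\nu^{-2})$. I would first try to read the next terms off Hotelling's full asymptotic series of $\E r$ and $\E r^2$ in powers of $1/(N-1)$, written in the shifted variable $\nu-K+2$ (respectively $\nu-K+1$). This keeps track of which orders are absorbed by choosing exactly these denominators. I would then check whether the statement should be read with respect to that expansion variable; that is, whether the leading term is chosen so that the next-order corrections cancel or are absorbed into the shift. If the remainders cannot be sharpened this way, I would state the result with the standard $O(\nu^{-2})$ remainders that the argument above does yield.
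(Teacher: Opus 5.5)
Your Steps 1--3 coincide with the paper's proof. Both use the marginal $\iw$ law of the $2\times 2$ block with $\nu-K+2$ degrees of freedom, invert that block, and identify $\bR_{kk'}$ with the uncentered sample correlation of $n=\nu-K+2$ Gaussian pairs; Item 1 then follows by consistency, as in the paper. The routes differ only at Items 2--3. The paper runs an explicit second-order delta method on $g(\bm\theta)=-\bm s_{12}/\sqrt{\bm s_{11}\bm s_{22}}$, with the Hessian at $\theta_0$ and the covariance of $\bm\theta=(\bm s_{11},\bm s_{22},\bm s_{12})$ computed via Isserlis. You instead quote the Fisher--Hotelling expansions. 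Your route is the better one here, because the exact hypergeometric formulas give the full series, and that is what settles your ``main obstacle.''

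It settles it negatively. The remainders $o(\nu^{-2})$ and $O(\nu^{-3})$ in the statement are false, so your fallback with $O(\nu^{-2})$ remainders is the correct end point. Do not try to absorb the corrections into the shift. With $\rho=\P_{kk'}$ and $n=\nu-K+2$,
\[
\E\bR_{kk'}=\frac{2\rho}{n}\Bigl(\frac{\Gamma(\frac{n+1}{2})}{\Gamma(\frac{n}{2})}\Bigr)^{2}{}_2F_1\bigl(\tfrac12,\tfrac12;\tfrac n2+1;\rho^2\bigr)
=\rho\Bigl[1-\frac{1-\rho^2}{2n}+\frac{(1-\rho^2)(1-9\rho^2)}{8n^2}\Bigr]+O(n^{-3}),
\]
\[
\mathbb V(\bR_{kk'})=\frac{(1-\rho^2)^2}{n}\Bigl(1+\frac{11\rho^2}{2n}\Bigr)+O(n^{-3}).
\]
So the error in Item 2 is of exact order $\nu^{-2}$ unless $\rho\in\{0,\pm\frac13\}$. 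In Item 3 the gap to $(1-\rho^2)^2/(n-1)$ is $(1-\rho^2)^2(\frac{11}{2}\rho^2-1)/n^2$. Both coefficients depend on $\rho$, so no fixed shift of the denominator can absorb them.

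The quickest check is $\rho=0$. Then $\bR_{kk'}$ is the cosine of the angle between two independent standard Gaussian vectors in $\R^n$, so $\mathbb V(\bR_{kk'})=1/n=1/(\nu-K+2)$ exactly. This differs from the stated $1/(\nu-K+1)$ by $1/((\nu-K+1)(\nu-K+2))$, which is of order $\nu^{-2}$, not $\nu^{-3}$.

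The paper's own proof reaches the sharper orders only through unjustified steps:
\begin{itemize}
\item In the mean, it treats the Taylor remainder as $o(n^{-2})$ in expectation. In fact the cubic and quartic terms contribute at order $n^{-2}$, since third and fourth central moments of the $\bm s$'s are $O(n^{-2})$.
\item In the variance, its delta method actually yields the denominator $n=\nu-K+2$, not $\nu-K+1$.
\item The neglected cross term $\tfrac12\E[\nabla g(\theta_0)^\top(\bm\theta-\theta_0)(\bm\theta-\theta_0)^\top H_g(\theta_0)(\bm\theta-\theta_0)]$ is $O(n^{-2})$, not $O(n^{-3})$.
\end{itemize}
At the $O(\nu^{-2})$ level, $\nu-K+1$ and $\nu-K+2$ are interchangeable in the variance, so your fallback statement is the strongest true version.

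One technical point is shared with the paper: the representation $\mathbf W=\sum_{i=1}^{\nu'}\mathbf z_i\mathbf z_i^\top$ requires $\nu'$ to be an integer. For real $\nu$, work from the $2\times 2$ Wishart density or the Bartlett decomposition. The density of $r$, and hence the formulas above, hold for real $n>1$, and Item 1 then follows from them by Chebyshev.
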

\begin{remark}\label{rem: mean var iw}
Proposition \ref{prop: IW prior} can be more generic with $K$ growing as well. In this case, the results are maintained given $\nu - K \rightarrow \infty$, and with $o\left(\frac{1}{\nu^2}\right), O\left(\frac{1}{\nu^3}\right)$ in Points 2 and 3 changed to $o\left(\frac{1}{(\nu-K)^2}\right), O\left(\frac{1}{(\nu-K)^3}\right)$. See Appendix \ref{app: proof mean var iw} for a proof. 
\end{remark}

\paragraph{Hyperparameter elicitation.} Proposition~\ref{prop: IW prior} shows that, at first order as $\nu\to\infty$, $\mathbb{E}(\bR_{kk'})$ and $\mathbb{V}(\bR_{kk'})$ are determined by $\P_{kk'}$ and $\nu$. This provides a practical rule for eliciting the hyperparameters $\P_{kk'}$ and $\nu$ from prior knowledge about the level and variability of the correlations. More specifically, one may specify the prior mean $\mathbb{E}(\bR_{kk'})$ and prior variance $\mathbb{V}(\bR_{kk'})$ using expert knowledge or information from a comparable dataset. For example, in a healthy-subject resting-state fMRI study, one may use empirical correlations computed from a reference cohort such as the Human Connectome Project. In a clinical application involving aging, mild cognitive impairment, or Alzheimer's disease, one may instead use a cohort from ADNI or another study with comparable acquisition and preprocessing pipelines. Then by inverting the formulas of $\mathbb{E}(\bR_{kk'})$ and $\mathbb{V}(\bR_{kk'})$, the values of the hyperparameters $\P$ and $\nu$, which encode the information a priori, are obtained. The resulting prior furthermore contributes to improving the precision of the posterior estimation. In practice, the following simple approximate formulas are sufficient for the elicitation.
   $$ \mathbb{E}(\bR_{kk'}) \approx
    \P_{kk'} , \quad \mathbb{V}
    (\bR_{kk'}) \approx \frac{(1-(\P_{kk'})^2)^2}{\nu - K+1}.$$
Considering $\Vec{\sigma} = (\sigma_1, \ldots, \sigma_K)$, this hyperparameter does not have an impact on the prior moments of the correlation. It will be shown in the next section (Theorem \ref{thm: IW post}) that it controls the relative weight of the prior correlation directly in the posterior correlation mean. 
We discuss a tuning strategy for $\Vec{\sigma}$ in the next section.

\paragraph{A special case of the $\iw$ prior: encoding a global correlation level.}

When prior information on the level of each pairwise correlation is not available, it may
be difficult to specify a fully structured prior matrix $\P$. In such situations, a
conservative strategy is to encode only a global prior belief on the overall correlation
level, rather than pair-specific information. We therefore consider a practical special
case of the general prior $\mathcal{IW}(\P,\Vec{\sigma},\nu)$, designed for hyperparameter
elicitation when only coarse prior information is available.

\begin{definition}\label{def: iw}
We define a prior distribution $\bSigma \sim \mathcal{IW}(\P, \Vec{\sigma}, \nu)$,
with $\nu > K + 3$, by setting
\begin{equation}\label{eq: prior}
\P = (1-\rho)I_K + \rho J_K,
\end{equation}
where $\rho \in \left(-\frac{1}{K-1},1\right)$, $I_K$ is the identity matrix of size $K$,
and $J_K = \bm{1} \bm{1}^\top$, with $\bm{1} = (1,\ldots,1) \in \R^K$.
\end{definition}

The domain of $\rho$ ensures that $\P \succ 0$, and hence that the corresponding
$\iw(\Psi,\nu)$ distribution is well defined. In this specification, the single parameter
$\rho$ encodes the prior belief on the overall level of correlations. The following
corollary gives the induced marginal prior mean and variance of the correlation entries,
and can therefore be used for hyperparameter elicitation.

\begin{corollary}\label{coro: IW prior}
The prior defined in Definition~\ref{def: iw} induces the following properties on the
prior correlation matrix. For $k \neq k'$, and as $\nu \rightarrow \infty$: 
\begin{enumerate}
    \vspace{-0.05in}
    \item 
    $\mathbb{E}(\bR_{kk'})
    =
    \rho \left(1-\frac{1-\rho^2}{2(\nu - K+2)}\right)
    + o\left(\frac{1}{\nu^2}\right).$
    \vspace{-0.05in}
    \item 
    $ \mathbb{V}(\bR_{kk'})
    =
    \frac{(1-\rho^2)^2}{\nu - K+1}
    + O\left(\frac{1}{\nu^3}\right).$
\end{enumerate}
\end{corollary}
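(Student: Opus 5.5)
This corollary is a direct specialization of Proposition~\ref{prop: IW prior}, so I will obtain it by substitution rather than by a new asymptotic analysis. The argument has three steps.

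\textbf{Step 1: the prior is admissible.} I check that the prior in Definition~\ref{def: iw} satisfies the hypotheses of Proposition~\ref{prop: IW prior}.

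The matrix $\P = (1-\rho)I_K + \rho J_K$ has unit diagonal, so it has the form of a correlation matrix. Its eigenvalues are $1-\rho$, with multiplicity $K-1$, and $1+(K-1)\rho$, with multiplicity one. Both are positive exactly when $\rho \in \left(-\frac{1}{K-1},1\right)$, so $\P \succ 0$ on that range.

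Given any $\Vec{\sigma}$ with positive entries, the scale matrix $\Psi = D^{1/2}\P D^{1/2}$, with $D = \mathrm{diag}(\Vec{\sigma})$, is therefore positive definite. Its induced correlation is $R(\Psi) = \P$. Together with $\nu > K+3$, this means $\bSigma \sim \iw(\P,\Vec{\sigma},\nu)$ is exactly an instance of Proposition~\ref{prop: IW prior}.

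\textbf{Step 2: substitute.} For $k \neq k'$, the off-diagonal entry is $\P_{kk'} = (1-\rho)\cdot 0 + \rho \cdot 1 = \rho$. Substituting $\P_{kk'} = \rho$ into Points~2 and~3 of Proposition~\ref{prop: IW prior} gives the two displayed expansions verbatim.

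\textbf{Step 3: uniformity of the remainders.} The only point needing care is that the remainder terms $o(1/\nu^2)$ and $O(1/\nu^3)$ are inherited correctly. For fixed $K$ and fixed $\rho$ this is immediate, since the proposition applies to each pair $(k,k')$ and there are finitely many pairs.

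If one also wants uniformity across pairs, the remainders depend on the pair only through $\P_{kk'}$. Every off-diagonal entry here equals the same value $\rho$, so the remainder is the same for all pairs.

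If $K$ is allowed to grow, I would instead invoke Remark~\ref{rem: mean var iw}, under the condition $\nu - K \to \infty$, with the remainders rescaled accordingly. This is not required by the statement.

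I expect no real obstacle. The only step needing verification is that the exchangeable $\P$ is a legitimate correlation matrix, which the eigenvalue computation in Step~1 settles.
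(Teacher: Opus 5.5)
Your proposal is correct and follows the paper's route: the paper also treats the corollary as an immediate specialization of Proposition~\ref{prop: IW prior}, obtained by noting that every off-diagonal entry of $\P=(1-\rho)I_K+\rho J_K$ equals $\rho$ and substituting. Your eigenvalue check that $\P\succ 0$ for $\rho\in\left(-\frac{1}{K-1},1\right)$ and your remark on uniformity of the remainders are extra details that the paper states only briefly or omits; they do not change the argument.
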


Corollary~\ref{coro: IW prior} follows directly from Proposition~\ref{prop: IW prior}.
It provides a simple way to translate coarse prior information into the hyperparameters
$\rho$ and $\nu$. For instance, if one only expects significant functional correlations to
be mostly positive, without knowing which pairs should have larger prior correlations, one
can choose a common positive value of $\rho$. The parameter $\nu$ then controls the prior
variability around this common correlation level.

Moreover, because all
off-diagonal entries of $\P$ are identical, the induced prior on the correlation matrix is
exchangeable in the following sense.

\begin{prop}\label{prop:exch}
Let $\bSigma \sim \mathcal{IW}(\P, \Vec{\sigma},\nu)$, and denote $\bR = R(\bSigma)$. If the off-diagonal terms of $\P$ are all equal, that is
$\P_{kk'} \equiv \mathrm{constant}$ for all $k\neq k'$, then
$(\bR_{kk'})_{k < k'}$ is exchangeable. Equivalently,
\begin{equation}
Q\bR Q^{-1} \stackrel{d}{=} \bR,
\end{equation}
for any permutation matrix $Q$.
\end{prop}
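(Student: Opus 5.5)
The plan is to reduce the claim to two invariance properties of the Inverse-Wishart family: equivariance under congruence by an invertible matrix, and invariance of the induced correlation map $R(\cdot)$ under positive diagonal rescaling. Throughout, write $\Psi = D^{1/2}\P D^{1/2}$ with $D=\operatorname{diag}(\Vec{\sigma})$, so that $\bSigma\sim\iw(\Psi,\nu)$ in the original parametrization.

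\textbf{Step 1 (the law of $\bR$ depends only on $\P$).} I will first check, directly from the density \eqref{eq: iw den} by the change of variables $\bSigma\mapsto A\bSigma A^\top$, that $A\bSigma A^\top\sim\iw(A\Psi A^\top,\nu)$ for any invertible $A$. The Jacobian of this map is $|A|^{K+1}$, and the trace term is unchanged because $\operatorname{tr}(A\Psi A^\top (A\bSigma A^\top)^{-1})=\operatorname{tr}(\Psi\bSigma^{-1})$. Taking $A=D^{-1/2}$ gives $D^{-1/2}\bSigma D^{-1/2}\sim\iw(\P,\nu)$. Since $R(\Delta M\Delta)=R(M)$ for every positive diagonal $\Delta$, it follows that $\bR=R(\bSigma)=R(D^{-1/2}\bSigma D^{-1/2})$. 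Hence the law of $\bR$ is that of $R(\bSigma_0)$ with $\bSigma_0\sim\iw(\P,\nu)$, and it does not depend on $\Vec{\sigma}$.

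\textbf{Step 2 (permutation equivariance).} Let $Q$ be a permutation matrix, so that $Q^{-1}=Q^\top$. Applying Step 1 with $A=Q$ gives $Q\bSigma_0Q^\top\sim\iw(Q\P Q^\top,\nu)$. Conjugation by $Q$ only relabels indices, so $R(Q M Q^\top)=Q R(M) Q^\top$. If all off-diagonal entries of $\P$ are equal, then $\P=(1-\rho)I_K+\rho J_K$ for some $\rho$, and $Q\P Q^\top=\P$ because $QI_KQ^\top=I_K$ and $QJ_KQ^\top=J_K$. Therefore
\[
Q\bR Q^{-1}\stackrel{d}{=} Q R(\bSigma_0) Q^\top = R(Q\bSigma_0 Q^\top)\stackrel{d}{=} R(\bSigma_0)\stackrel{d}{=}\bR .
\]
The first and last equalities in distribution come from Step 1, and the middle one uses $Q\bSigma_0Q^\top\stackrel{d}{=}\bSigma_0$.

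\textbf{Step 3 (main obstacle: what ``exchangeable'' can mean).} The invariance just proved is joint exchangeability under relabeling of the regions: the vector $(\bR_{kk'})_{k<k'}$ is invariant in law under the permutations of pairs induced by permutations of $\{1,\dots,K\}$. It is not invariant under arbitrary permutations of the $K(K-1)/2$ pairs. Invariance under arbitrary pair permutations cannot hold in general, because positive-definiteness couples entries that share an index differently from disjoint pairs. So the proof will read ``exchangeable'' in the sense stated by the displayed equivalence $Q\bR Q^{-1}\stackrel{d}{=}\bR$. As a consequence it records that all off-diagonal entries $\bR_{kk'}$ share a common marginal law, since the induced action is transitive on pairs.

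The only genuine computation is the Jacobian in Step 1. Alternatively, one can use the representation $\bSigma^{-1}\sim\mathcal{W}(\Psi^{-1},\nu)$ as $\sum_{i=1}^{\nu} z_iz_i^\top$ with $z_i$ i.i.d.\ $\mathcal N(0,\Psi^{-1})$ (valid for integer $\nu$), together with the linear transformation of Gaussians.
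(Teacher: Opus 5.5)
Your proof is correct. It has the same two-step structure as the paper's appendix proof: first eliminate $\Vec{\sigma}$ using the invariance of $R(\cdot)$ under positive diagonal rescaling, then use $Q\P Q^\top=\P$ for compound-symmetric $\P$. The mechanism that transfers these invariances to $\bR$ is different, though. The paper uses the Wishart sample representation $\bSigma\stackrel{d}{=}\bigl[\sum_{i=1}^{\nu}\mathbf{x}_i\mathbf{x}_i^\top\bigr]^{-1}$ with $\mathbf{x}_i\sim\mathcal{N}(0,\P^{-1})$. It moves $Q$ onto the Gaussian vectors and writes out $\P^{-1}$ entrywise to check that it is permutation invariant. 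You instead prove congruence equivariance $A\bSigma A^\top\sim\iw(A\Psi A^\top,\nu)$ once, from the density, and apply it with $A=D^{-1/2}$ and with $A=Q$. This is the same density-level argument the paper itself uses for the $\siw_1$ analogue (Proposition~\ref{propExchange}).

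Your route buys three things:
\begin{itemize}
\item It holds for every real $\nu>K-1$. A sum of $\nu$ outer products literally requires integer $\nu$, even though the statement allows any real $\nu>K+3$.
\item It makes the formula for $\P^{-1}$ unnecessary. That step is redundant anyway, since $Q\P Q^\top=\P$ already gives $Q\P^{-1}Q^\top=\P^{-1}$.
\item The one-line identity $R(\Delta M\Delta)=R(M)$ replaces the paper's chain of rescalings, whose displayed normalizing factors carry the exponent $1/2$ where $-1/2$ is meant.
\end{itemize}
The paper's route, in exchange, is constructive and reuses the representation behind points (1)--(3) of Proposition~\ref{prop: IW prior}. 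Your convention $\Psi=D^{1/2}\P D^{1/2}$ follows the paper's definition $\sigma_k=\Psi_{kk}$, while the appendix uses $\Psi=\Delta\P\Delta$; your Step 1 works verbatim for either.

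Your Step 3 is a real correction to the statement, not a technicality. Both proofs establish only invariance under relabeling of the regions. That coincides with exchangeability of the vector $(\bR_{kk'})_{k<k'}$ only for $K\le 3$, so the paper's ``Equivalently'' overstates what is proved.

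Your heuristic can be made concrete for every $K\ge 4$ and every admissible $\rho$, by comparing a triangle of pairs with a star of pairs:
\begin{itemize}
\item The triple $(\bR_{12},\bR_{13},\bR_{23})$ never comes near $(-0.6,-0.6,-0.6)$, because the $3\times 3$ determinant there equals $1-3(0.36)-2(0.216)<0$.
\item The triple $(\bR_{12},\bR_{13},\bR_{14})$ falls in any neighbourhood of that point with positive probability. A positive-definite correlation matrix with $R_{12}=R_{13}=R_{14}=-0.6$ exists: set $R_{23}=R_{24}=R_{34}=1/2$ and pad with the identity. The $\iw$ density is positive on the whole cone, and $R(\cdot)$ is continuous and fixes correlation matrices, so the law of $\bR$ charges every nonempty open set of correlation matrices.
\end{itemize}
Hence no permutation of pairs sending the triangle $\{12,13,23\}$ to the star $\{12,13,14\}$ preserves the law of $\bR$.
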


See Appendix~\ref{app: proof mean var iw} for a proof. Proposition~\ref{prop:exch}
implies that, under this prior, no pair of variables is favored a priori over another. This is consistent with the goal of encoding only a global correlation level
through $\rho$, rather than a detailed pair-specific prior structure.

\subsection{Posterior moments}
Under the Gaussian likelihood $\x_i
|\bSigma
\stackrel{iid}{\sim} \mathcal{N}\left(
0
,
\bSigma\right), \, i = 1, \ldots, n$, and the $\iw$ prior $\bSigma \sim  \iw(\P, \Vec{\sigma}, \nu)$, denoting the induced correlation matrix by $\bR = R(\bSigma)$, we have the following approximations. 
\begin{theorem}\label{thm: IW post}
Denote $\nu_n = \nu + n$, for $k \neq k'$ and  as $\nu_n \rightarrow \infty$:
\begin{align*}
    \mathbb{E}(\bR_{kk'} \mid \x_{1:n}) &= \mathbf{P}_{n,kk'} \left(1-\frac{1-(\mathbf{P}_{n,kk'})^2}{2(\nu_{n} - (K-2))}\right) +  o\left(\frac{1}{\nu_{n}^2}\right), \\
    \mathbb{V}(\bR_{kk'} \mid \x_{1:n}) &= \frac{(1- (\mathbf{P}_{n,kk'})^2)^2}{\nu_n - K + 1} + O\left(\frac{1}{\nu_n^3}\right),  
\end{align*}
where $\x_{1:n}$ denotes $(\x_{1}, \x_{2}, \ldots, \x_{n})$, and 
\begin{equation}\label{eq: sigma_sample_var0}
\mathbf{P}_{n,kk'} = \frac{\sigma_k  \sigma_{k'}  \P_{kk'} + n\bm S_{kk'}  }{\bm\sigma_{n,k}\bm\sigma_{n,k'}}, \mbox{ with }   \bm\sigma_{n,k} = \sqrt{\sigma_{k}^2 + n \bm S_{kk}}, \mbox{ and }  \bm S_{kk'} = \frac{\sum_{i=1}^n\bm x_{ki}\bm x_{k'i}}{n}.
\end{equation}
\end{theorem}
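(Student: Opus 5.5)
The plan is to reduce Theorem~\ref{thm: IW post} to Proposition~\ref{prop: IW prior} through conjugacy. The argument has two steps: identify the posterior as an $\iw$ distribution, then rewrite its parameters in the correlation--variance form.

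\textbf{Step 1: the posterior is Inverse-Wishart.} Under the likelihood \eqref{eq:model}, multiplying the density \eqref{eq: iw den} by $\prod_{i=1}^n \mathcal{N}(\x_i;0,\bSigma)$ gives a kernel
\[
|\bSigma|^{-(\nu+n+K+1)/2}\exp\!\big(-\tfrac12\operatorname{tr}((\Psi+n\bm S)\bSigma^{-1})\big).
\]
Hence $\bSigma\mid\x_{1:n}\sim\iw(\Psi_n,\nu_n)$, with $\Psi_n=\Psi+n\bm S$ and $\nu_n=\nu+n$. The condition $\nu_n>K+3$ is inherited from $\nu>K+3$.

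\textbf{Step 2: convert to the $(\P,\Vec{\sigma},\nu)$ parametrization.} We need to express $\Psi_n$ in terms of the hyperparameters. Writing the scale so that $\Psi_{kk'}=\sigma_k\sigma_{k'}\P_{kk'}$ (i.e.\ $\Psi_{kk}=\sigma_k^2$), the diagonal of $\Psi_n$ is $\sigma_k^2+n\bm S_{kk}=\bm\sigma_{n,k}^2$. Its induced correlation matrix is therefore $R(\Psi_n)_{kk'}=(\sigma_k\sigma_{k'}\P_{kk'}+n\bm S_{kk'})/(\bm\sigma_{n,k}\bm\sigma_{n,k'})=\mathbf{P}_{n,kk'}$, which is exactly \eqref{eq: sigma_sample_var0}. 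So conditionally on the data, $\bSigma\sim\iw(\mathbf{P}_n,\Vec{\bm\sigma}_n,\nu_n)$. Here $\mathbf{P}_n$ is a genuine correlation matrix, since $\Psi_n\succ0$.

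\textbf{Step 3: apply Proposition~\ref{prop: IW prior}.} Since the data are fixed under conditioning, $\mathbf{P}_n$ is a deterministic hyperparameter. Applying Points 2 and 3 of Proposition~\ref{prop: IW prior} with $(\P,\nu)$ replaced by $(\mathbf{P}_n,\nu_n)$ yields both expansions, with $\nu_n-K+2=\nu_n-(K-2)$. The remainders $o(\nu_n^{-2})$ and $O(\nu_n^{-3})$ carry over because the expansions in Proposition~\ref{prop: IW prior} depend on $\P$ only through the bounded quantity $\P_{kk'}\in(-1,1)$.

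\textbf{Main obstacle: the parametrization bookkeeping.} The only delicate point is a notational discrepancy in the paper. The reparametrization defines $\sigma_k=\Psi_{kk}$, whereas \eqref{eq: sigma_sample_var0} implicitly uses $\Psi_{kk}=\sigma_k^2$. I would state the convention $\Psi=\mathrm{diag}(\Vec{\sigma})\,\P\,\mathrm{diag}(\Vec{\sigma})$ explicitly, because this is the one under which the formula for $\mathbf{P}_{n,kk'}$ holds.

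A second point concerns uniformity. If one wants the remainder terms to be uniform while $\mathbf{P}_n$ itself varies with $n$, one must check that the constants in Proposition~\ref{prop: IW prior} are uniform over $\P_{kk'}$ in compact subsets of $(-1,1)$. Alternatively, one can simply interpret the statement as conditional on a fixed $\mathbf{P}_n$, in which case the proposition applies directly.
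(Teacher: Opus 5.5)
Your proposal is correct and follows the paper's own argument: conjugacy gives $\bSigma\mid\x_{1:n}\sim\iw(\Psi+n\bm S,\nu+n)$, which in correlation--variance form is $\iw(\mathbf{P}_n,\Vec{\bm\sigma}_n,\nu_n)$, and Proposition~\ref{prop: IW prior} is then applied with $(\P,\nu)$ replaced by $(\mathbf{P}_n,\nu_n)$. You are also right that the formula for $\mathbf{P}_{n,kk'}$ requires the convention $\Psi=\operatorname{Diag}(\Vec{\sigma})\,\P\,\operatorname{Diag}(\Vec{\sigma})$ rather than $\sigma_k=\Psi_{kk}$; this is the convention the paper itself uses later, for example $\Psi_0=\Delta_0P_0\Delta_0$ in Proposition~\ref{propL}.
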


The posterior results are also a direct application of Proposition \ref{prop: IW prior}, because $\iw(\P, \Vec{\sigma}, \nu)$ is a conjugate prior for the Gaussian likelihood, with the posterior defined by $\iw(\P_n, \Vec{\sigma}_{n}, \nu_n)$.
The posterior is a compromise between the prior information over correlation values specified via $\P$ and the data represented by $\bm S$. Compared to a Frequentist estimator such as the sample empirical correlation $\frac{\bm S_{kk'}}{\sqrt{\bm S_{kk}}\sqrt{\bm S_{k'k'}}}$, the posterior mean $\mathbb{E}(\bR_{kk'} \mid \x_{1:n})$ offers better point estimation thanks to prior knowledge. Additionally, the prior offers a regularization.  In high dimension, the sample correlation is singular with abnormal values of $\pm 1$. Aggregating with $\P$, a positive-definite matrix, yields a well-conditioned estimate, thereby avoiding extreme correlation values. More generally, aggregation makes the output estimation less sensitive to outliers. Thus, Bayesian estimation works both in high and low dimensions and is generally robust. By naturally shrinking the empirical estimates towards the structured prior, this Bayesian approach typically strictly dominates the standard frequentist sample correlation in terms of mean squared error, a well-known phenomenon in multivariate estimation\citep[see, e.g.,][]{robert2007bayesian}.

A simple and easy-to-interpret closed-form expression for the leading term in the posterior mean is available if we do not use expert knowledge to elicit $\Vec{\sigma}$, but instead use an Empirical Bayes-inspired approach, setting $\sigma_k$ to the sample standard deviation up to a constant $\sqrt{n_0}$. 
In this case, we have:
\begin{equation}\label{eq: sigma_sample_var}
 \mathbf{P}_{n,kk'} 
 = \frac{n_0}{n_0+n}\P_{kk'} + \frac{n}{n_0+n} \frac{\bm S_{kk'}}{\sqrt{\bm S_{kk}}\sqrt{\bm S_{k'k'}}}. 
\end{equation}
Note that this expression of $\mathbf{P}_{n,kk'}$ is a convex combination between the prior guess $\mathbf{P}_{kk'}$ weighted by $n_0$, and the sample correlation weighted by the sample size $n$. This dichotomy allows interpreting $n_0$ as a prior sample size, in the standard sense that conjugate hyperparameters in exponential families act as pseudo-counts/pseudo-sufficient statistics \citep{diaconis1979conjugate}.

When this strategy is used, we propose to define $n_0$ in terms of $\nu$, so that $\nu$ can have the same role of controlling the impact of $\P_{kk'}$ in the posterior mean of correlation as it did for the prior and posterior variances shown by Corollary \ref{coro: IW prior} and Theorem \ref{thm: IW post}. In this way, the model is easier to interpret. However, when one needs to control the mean and the variances separately, $n_0$ can be regarded as an additional hyperparameter. For the specific definition, we propose $n_0 = \nu - K - 1$, following the property: $\mathbb{E} (\bm \Sigma_{kk}) = \sigma_{k}^2/(\nu - K +1)$, giving the hyperparameter assignment $\sigma_k^2 = (\nu - K -1)\bm S_{kk}.$


\section{Connectivity model with $\iw / \siw_1$ mixture prior}\label{sec: siw0}
In this section, we extend the  $\iw$ framework by considering the Shrinkage Inverse-Wishart ($\siw$) distribution for covariance matrices introduced by \cite{berger2020Bayesian}. The $\siw$ family generalizes $\iw$ while preserving conjugacy with the Gaussian likelihood.

The motivation for this generalization can be understood from the spectral reparametrization of the $\iw$ density. Let $\Sigma = \Gamma \Lambda \Gamma^\top$ be the eigendecomposition of $\Sigma$, where $\Lambda = \mathrm{diag}(\lambda_1,\dots,\lambda_K)$, the $\iw$ density of Equation \eqref{eq: iw den} after changing variables in $(\Lambda,\Gamma)$ can be written as
\[
\pi(\Lambda, \Gamma \mid \Psi, \nu, b)
\propto
\prod_{i<j}(\lambda_i-\lambda_j)\,
\frac{
\exp\left\{-\frac12 \mathrm{tr}\!\left(\Psi \Gamma \Lambda^{-1}\Gamma^\top\right)\right\}
}{
|\Lambda|^{\nu}
}.
\]
The Vandermonde term $\prod_{i<j}(\lambda_i-\lambda_j)$, which arises from the Jacobian of the change of variables, implies little probability mass over matrices of concentrated eigenvalues.  In particular, if
\[
A_\delta=\{\Sigma:\ \exists\, i<j \text{ such that } |\lambda_i-\lambda_j|<\delta\},
\]
then $\mathbb{P}(\Sigma\in A_\delta)$ is very small for small $\delta$. It is a well-known limitation of the inverse-Wishart prior \citep{berger2020Bayesian}. The $\siw$ family was introduced precisely to alleviate this behavior. 

Reweighting the density to cancel this limitation, \cite{berger2020Bayesian} propose $\siw$, whose density's kernel is  
\begin{equation}\label{eq: siw den}
    \pi(\Sigma | \Psi, \nu, b) \propto \frac{\exp\left(-\frac{1}{2}\mbox{tr}\left( \Psi \Sigma^{-1}\right)\right)}{| \Sigma|^\nu \left[\prod_{i < j} (\lambda_i - \lambda_j)\right]^b},
\end{equation} 
where $\lambda_1 > \cdots > \lambda_K > 0$ are the eigenvalues of $\Sigma$, $\Psi \succ 0$, $b \in [0,1]$ and $\nu \in \R$. Equation \eqref{eq: siw den} is $\iw$ density's kernel divided by $\prod_{i < j} (\lambda_i - \lambda_j)^b$, with $b$ additional parameter controlling the mass on \cov matrices of small eigengaps. Similarly to $\iw$, we reparametrize the parameter $\Psi$ into $(\P, \Vec{\sigma})$. We thus use the notation, $\bSigma \sim \siw_b(\P, \Vec{\sigma}, \nu)$.

Even though the proposed rebalance by $\siw_b$ is for covariance matrices, it induces a reshape in the induced distribution of correlations, allowing us to handle more data cases and yielding benefits in the inference of correlations. A potentially ideal approach would be to use $\siw_b$ with $b \in (0,1)$ to control the desired eigengap of $\Sigma$. However, sampling from a $\siw_b$ with $b \in (0,1)$ is more time-consuming than $\siw_1$ and $\siw_0$ (namely $\iw$), and no algorithms exist at the moment. Note that the only existing algorithms for $\siw_1$, \cite{berger2020Bayesian} and \cite{jiang2025bayesian}, are both for the case $b=1$. Thus, we propose a computationally feasible alternative, which is a mixture of $\iw$ and $\siw_1$ as follows.
\begin{definition} We define the mixture $\bSigma \sim \iw/\siw_1(\eta, \P_r, \Vec{\sigma}_r, \nu_r, r=0,1)$ by 
\begin{equation}\label{eq: IW/SIW}
 \bSigma = \bm z \bSigma_0 + (1 - \bm z)\bSigma_1,    
\end{equation}
where 
\begin{equation}
\begin{cases}
    &\bSigma_0 \sim \mathcal{IW}(\P_0, \Vec{\sigma}_0, \nu_0), \\ 
    &\bSigma_1 \sim \siw_1(\P_1, \Vec{\sigma}_1, \nu_1), \\
    & \bm z \sim \mathcal{B}(\eta)
\end{cases}
\end{equation}
with $\eta \in [0,1]$, $\nu_0 > K - 1$, and $\nu_1 \in \R$.    
\end{definition}
This mixture framework has two computational advantages. First, it remains conjugate to the Gaussian likelihood, by inheritance from both mixture components, that we will show later on in Section \ref{sec: IW/SIW}. Second, posterior sampling from this model reduces to sampling separately from the $\iw$ component and the $\siw$ component. Similarly, the derivation of theoretical properties can be directly based on those of the two components.
In particular, for the derivation of the formula for $\mathbb{E}(\bR_{kk'})$, which allows us to elicit hyperparameter values from expert knowledge about the correlation between regions $k$ and $k'$, we have
\[
\mathbb{E}(\bR_{kk'}) = \eta \,\mathbb{E}(\bR_{0,kk'}) + (1-\eta)\,\mathbb{E}(\bR_{1,kk'}),
\]
where $\bR_r = R(\bSigma_r)$ for $r=0,1$.
The expression of $\mathbb{E}(\bR_{0,kk'})$ has already been established in Corollary \ref{coro: IW prior}. We therefore focus in the following section on the study of $\mathbb{E}(\bR_{1,kk'})$. To the best of our knowledge, a rigorous theoretical analysis for a general $\siw_b$ distribution is very challenging. We therefore provide a theoretical expression for a special case of $\siw_1$, in Theorem \ref{thmsiw}. In the general case, which is the one relevant in practice, numerical methods are required to approximate $\mathbb{E}(\bR_{1,kk'})$. We illustrate the proposed numerical approach for $K=20$, which corresponds to the dimension considered in the real-data application.

In Section \ref{sec: siw}, we first approximate the formula of $\mathbb{E}\left(\bR_{kk'}\right)$ in terms of the parameters of $\siw, b =1$, as well as the one of $\mathbb{V}\left(\bR_{kk'}\right)$. Given these results, in Section \ref{sec: IW/SIW}, we derive the formulas of $\mathbb{E}(\bR_{kk'})$,  $\mathbb{V}(\bR_{kk'})$, $\mathbb{E}(\bR_{kk'}\mid\x_{1:n})$,  $\mathbb{V}(\bR_{kk'}\mid \x_{1:n})$ for $\iw/\siw$ mixture, as well as the other properties including the conjugacy of the mixture prior.

\subsection{Approximation of 
$\mathbb{E}\left(\bR_{kk'}\right)$ and $\mathbb{V}\left(\bR_{kk'}\right)$ for 
$\siw_1$-induced correlations}\label{sec: siw}
We wish to characterize the prior moments for the correlation matrix $\bR$ induced by
$\bSigma \sim \siw_1(\P, \Vec{\sigma}, \nu)$, as in the previous sections, because this is useful for prior specification. However, closed-form expressions are not available in
general. 
We therefore construct approximations by invertible parametric functions. We first give a theoretical result below that motivates linear functions as the approximate form for $\mathbb E(\bR_{kk'})$.

\begin{restatable}{theorem}{thmsiw}
\label{thmsiw}
Let $\bSigma\sim \siw_1(\P,\bm 1,\nu)$ with $K>3$ and $\nu>5$, and let $\bR = R(\bSigma)$.
Define $\varepsilon = \max_{i\neq j}|\P_{ij}|$. As $\varepsilon \to 0$, $K\to\infty$, and $\nu\to\infty$, we have
\[
\mathbb E(\bR_{kk'})
=
\frac{2}{K+2}\,\P_{kk'} + O\!\left(\varepsilon^2 + \frac{1}{K^2\nu}\right),
\qquad \forall\, k \neq k'.
\]
\end{restatable}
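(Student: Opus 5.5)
The plan is to work in spectral coordinates, where the case $b=1$ becomes tractable. Write $\bSigma=\Gamma\Lambda\Gamma^\top$. The Jacobian of this change of variables is exactly the Vandermonde factor $\prod_{i<j}(\lambda_i-\lambda_j)$, so for $b=1$ it cancels the denominator in \eqref{eq: siw den}. With $\Vec\sigma=\bm 1$ we have $\Psi=\P=I_K+E$, where $E$ is the off-diagonal part with $|E_{ij}|\le\varepsilon$. The joint density of $(\Lambda,\Gamma)$ with respect to $d\Lambda\,d\Gamma$ (Haar on $\Gamma$) is then proportional to
\[
\prod_{m}\lambda_m^{-\nu}e^{-1/(2\lambda_m)}\;\exp\!\left(-\tfrac12\operatorname{tr}\!\big(E\,\bSigma^{-1}\big)\right).
\]
Under the reference measure $\pi_0$ obtained by setting $E=0$, the following hold:
\begin{itemize}
\item $\Gamma$ is Haar distributed and independent of $\Lambda$;
\item after forgetting the ordering, the $\lambda_m$ are i.i.d.\ inverse-gamma with shape $\nu-1$ and scale $1/2$.
\end{itemize}
This is where $\nu>5$ enters: it guarantees the finite moments used below. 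The key step is therefore to treat the actual law as an exponential tilt of $\pi_0$ and expand in $\varepsilon$.

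\textbf{First-order expansion in $E$.} Under $\pi_0$, flipping the sign of row $k$ of $\Gamma$ preserves Haar measure and changes the sign of $\bR_{kk'}$, so $\mathbb E_0(\bR_{kk'})=0$. Expanding the tilt to first order gives
\[
\mathbb E(\bR_{kk'})=-\tfrac12\,\mathbb E_0\!\big[\bR_{kk'}\operatorname{tr}(E\bSigma^{-1})\big]+O(\varepsilon^2)
=-\tfrac12\sum_{i\neq j}E_{ij}\,\mathbb E_0\!\big[\bR_{kk'}(\bSigma^{-1})_{ij}\big]+O(\varepsilon^2).
\]
The $O(\varepsilon^2)$ term also absorbs the normalizing-constant correction. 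The same sign-flip symmetries, applied row by row, kill every term except $\{i,j\}=\{k,k'\}$. This leaves
\[
\mathbb E(\bR_{kk'})=-\P_{kk'}\,\mathbb E_0\!\big[\bR_{kk'}(\bSigma^{-1})_{kk'}\big]+O(\varepsilon^2).
\]
To justify the expansion I need the remainder bounded uniformly. For this I will use $|\bR_{kk'}|\le1$, together with a bound of $|\operatorname{tr}(E\bSigma^{-1})|$ by $\varepsilon K\operatorname{tr}(\bSigma^{-1})$, whose exponential moments under $\pi_0$ are controlled by the gamma tails of $1/\lambda_m$.

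\textbf{Computing the covariance under $\pi_0$.} Write $\lambda_m=c(1+\delta_m)$, where $c=\mathbb E\lambda_m\approx 1/(2\nu)$ and $\mathbb E\delta_m^2\approx1/\nu$. Because $\sum_m\gamma_{km}\gamma_{k'm}=0$, the constant part of $\lambda_m$ drops out of the off-diagonal entries, giving
\[
\bSigma_{kk'}=c\sum_m\delta_m\gamma_{km}\gamma_{k'm},\qquad
(\bSigma^{-1})_{kk'}=-c^{-1}\sum_m\delta_m\gamma_{km}\gamma_{k'm}+O(c^{-1}\delta^2),
\]
while $\bSigma_{kk}\approx c$. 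Hence
\[
\bR_{kk'}(\bSigma^{-1})_{kk'}\approx -c^{-1}\Big(\sum_m\delta_m\gamma_{km}\gamma_{k'm}\Big)^2 .
\]
Its expectation uses independence of the $\delta_m$, the relation $c^{-1}\mathbb E\delta^2\approx 2$, and the Haar moment $K\,\mathbb E(\gamma_{11}^2\gamma_{21}^2)=K\cdot\frac{1}{K(K+2)}=\frac1{K+2}$. This gives $-2/(K+2)$, and substituting back yields $\mathbb E(\bR_{kk'})\approx\frac{2}{K+2}\P_{kk'}$. I will do this step with the exact inverse-gamma moments, $c=1/(2(\nu-2))$ and $\mathbb E\delta^2=1/(\nu-3)$, to see precisely which corrections appear.

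\textbf{Main obstacle.} The hard part is the remainder in the previous step, coming from the nonlinearity of $\bR_{kk'}=\bSigma_{kk'}/\sqrt{\bSigma_{kk}\bSigma_{k'k'}}$ and of the inverse. It must be shown to be $O(1/(K^2\nu))$, uniformly enough to combine with the $\varepsilon$-expansion. My plan is a second-order Taylor expansion in $\delta$, covering the denominators $\bSigma_{kk}=c(1+\sum_m\delta_m\gamma_{km}^2)$ and the quadratic term of $\lambda_m^{-1}$. Each extra factor of $\delta$ costs $\nu^{-1/2}$, and each extra pair of Haar entries costs a factor $1/K$ through the moments $\mathbb E\prod\gamma^{2}=O(K^{-2})$ and $\mathbb E\gamma^4=O(K^{-2})$. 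Odd-order terms vanish by symmetry, so the surviving corrections are of order $\nu^{-1}K^{-2}$ times the leading $O(1)$ factor. The tail event where some $\delta_m$ is not small will be handled by Chebyshev-type bounds, using again $|\bR_{kk'}|\le 1$ and finite moments of $1/\lambda_m$.
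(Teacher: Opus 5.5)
Your reduction is the paper's: spectral coordinates cancelling the Vandermonde factor, the reference law with Haar $\Gamma$ independent of i.i.d.\ $\mathrm{Inv\text{-}Gamma}(\nu-1,\tfrac12)$ eigenvalues, the first-order tilt (the paper's $\frac{d}{d\varepsilon}\E_\varepsilon[g]=\mathrm{Cov}_\varepsilon(g,\partial_\varepsilon\log q_\varepsilon)$), and the sign-flip argument. These leave $\E(\bR_{kk'})=-\P_{kk'}\,\E_0[\bR_{kk'}(\bSigma^{-1})_{kk'}]+O(\varepsilon^2)$.

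The gap is in how you evaluate $\E_0[\bR_{kk'}(\bSigma^{-1})_{kk'}]$. Your power counting fails for the terms produced by expanding $\lambda_m^{-1}=c^{-1}(1-\delta_m+\delta_m^2-\cdots)$. Using $\sum_m\gamma_{km}\gamma_{k'm}=0$ twice, $Z:=\bSigma_{kk'}(\bSigma^{-1})_{kk'}=\sum_{m,m'}\delta_m\bigl((1+\delta_{m'})^{-1}-1\bigr)\gamma_{km}\gamma_{k'm}\gamma_{km'}\gamma_{k'm'}$. Since $\E\delta_m=0$, only the diagonal $m=m'$ survives in expectation. So every correction from the inverse carries the same Haar factor $K\,\E[\gamma_{k1}^2\gamma_{k'1}^2]=\frac{1}{K+2}$ as the leading term, with no extra $1/K$. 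Nor do odd orders vanish: the sign flips act on $\Gamma$, not on $\delta$, and the inverse-gamma law is skewed, with $\E\delta^3=\frac{4}{(\nu-3)(\nu-4)}$.

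Concretely, with the exact moments you intend to use:
\begin{itemize}
\item the quadratic term gives $c^{-1}\E_0[Z]\approx-\frac{2(\nu-2)}{(\nu-3)(K+2)}=-\frac{2}{K+2}-\frac{2}{(K+2)(\nu-3)}$;
\item the cubic term adds about $+\frac{8}{K\nu}$;
\item the quartic term adds about $-\frac{6}{K\nu}$.
\end{itemize}
Each correction is $\Theta(1/(K\nu))$. They cancel at that order only once you reach fourth order, and exactly only after summing all orders. A second-order expansion therefore cannot give the slope to within $O(1/(K^2\nu))$.

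The missing idea is not to expand the inverse at all. $Z$ is bilinear in $(\lambda_m)$ and $(\lambda_m^{-1})$. With $a=\E[\gamma_{k1}^2\gamma_{k'1}^2]=\frac{1}{K(K+2)}$ and $b=\E[\gamma_{k1}\gamma_{k'1}\gamma_{k2}\gamma_{k'2}]=-\frac{a}{K-1}$ (from row orthogonality), one gets exactly $\E_0[Z]=\frac{1}{K+2}\bigl(1-\E\lambda\,\E\lambda^{-1}\bigr)=-\frac{1}{(K+2)(\nu-2)}$. Dividing by $\mu=\E_0[\bSigma_{kk}]=\frac{1}{2(\nu-2)}$ gives exactly $-\frac{2}{K+2}$, so the $\nu$-dependence cancels identically.

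The paper then Taylor-expands only the normalization $(\bSigma_{kk}\bSigma_{k'k'})^{-1/2}$ around $\mu$, and there your heuristic is right. The first-order correction $\mu^{-1}\E_0[Z(\bSigma_{kk}/\mu-1)]=-\frac{12}{(K+2)(K+4)(\nu-3)}$ needs sixth-order Haar moments (the paper's Lemma). It is the genuine source of the $O(1/(K^2\nu))$ term.

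Two side remarks.
\begin{itemize}
\item The theorem's $\frac{1}{K^2\nu}$ is not multiplied by $\varepsilon$, so a slope error of $O(1/(K\nu))$ would in fact be absorbed, since $\frac{\varepsilon}{K\nu}\le\frac12\bigl(\varepsilon^2+\frac{1}{K^2\nu^2}\bigr)$. Your route can therefore reach the literal statement if you bound the remainder rigorously at that weaker order and make the absorption explicit, but not by the argument you gave.
\item Your exponential-moment control of the tilt is not uniform in $K,\nu$. Indeed $\E_0\exp\bigl(\tfrac12(K-1)\varepsilon\operatorname{tr}\bSigma^{-1}\bigr)=\bigl(1-(K-1)\varepsilon\bigr)^{-K(\nu-1)}$ stays bounded only if $\varepsilon K^2\nu=O(1)$. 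So the constant in your $O(\varepsilon^2)$ depends on $K$ and $\nu$; the paper also leaves this implicit.
\end{itemize}
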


The proof is given in Appendix~\ref{app: siw1_first_order}. Theorem~\ref{thmsiw}
shows that, in the weak-correlation regime (small
off-diagonal entries of $\P$) and when $\sigma_k\equiv 1$, the induced prior
mean $\mathbb E(\bR_{kk'})$ is approximately linear in $\P_{kk'}$, with a shrinkage factor
depending mainly on $K$. The objective is providing a rule for
hyperparameter elicitation via the approximate formula of $\mathbb E(\bR_{kk'})$ so that the prior information of the correlation matrix can be translated to hyperparameter values. Given in practice, such information is often itself indicative, a highly accurate approximation is not necessary, in particular when the computational cost is high. We propose $c(K)\P_{kk'}$ as the approximate form for $\mathbb E(\bR_{kk'})$ in general case. To fix the value of $c(K)$, we are based on Monte Carlo estimates. The corresponding numerical scheme is as follows.
\paragraph{Numerical approximation scheme.}
For a fixed dimension $K$, define the collection of hyperparameters
\[
\Theta := \left(\P_{ll'},\, l \neq l',\ \nu,\ \sigma_k,\ k=1,\ldots,K\right).
\]
We evaluate $\mathbb E_\Theta(\bR_{kk'})$ on a predefined grid for $\nu$ and on random
grids for $\P$ and $\Vec{\sigma}$. Each value of $\Theta$ specifies a distribution
$\siw_1(\P,\Vec{\sigma},\nu)$, from which we draw Monte Carlo samples in order to estimate
the induced prior mean of $\bR_{kk'}$. We denote the resulting Monte Carlo estimate by
$\widehat{\mathbb E}_\Theta(\bR_{kk'})$.
Motivated by Theorem~\ref{thmsiw}, we then fit the simple first-order model
\[
\mathbb E_\Theta(\bR_{kk'}) \approx c(K)\P_{kk'}
\]
to the simulated pairs $(\Theta,\widehat{\mathbb E}_\Theta(\bR_{kk'}))$, which yields an
estimate of the shrinkage coefficient $c(K)$. The grid used in the evaluation is as follows:
\begin{itemize}
    \item $\nu = 4, 5, 8, 10, 13, 15, 18, 22, 25, 29, 32, 36, 40$;
    \item for each fixed $\nu$, we test $3$ combinations of values of $\P$ and
    $\sigma_k$, $k=1,\ldots,K$, generated by
    \begin{itemize}
        \item $\P \sim LKJ_K(1)$,
        \item $\sigma_k \stackrel{iid}{\sim} U(0.5,10)$, with
        $\sigma_k \ind \P$, $k=1,\ldots,K$.
    \end{itemize}
\end{itemize}

Figure~\ref{fig: ER_K20} plots the Monte Carlo estimates of $\mathbb{E}_{\Theta} (\bR_{kk'})$ as a function of $\P_{kk'}$ for $K$ fixed by $20$. The black line corresponds to the fitted linear function
$0.066\,\P_{kk'}, \; k\neq k'.$ It can be seen that even though simple the linear form can still translate the prior value on $\mathbb{E}_{\Theta} (\bR_{kk'})$ informatively into the value of $\P_{kk'}$. Moreover the numerical slope is of the same order as the first-order factor $2/K+2 \approx 0.09$ in theorem \ref{thmsiw}. We also colored each evaluation of $\mathbb{E} \bR_{kk'}$ by the value of $\nu$. However, there is no visible color pattern, which implies that adding a dependence on $\nu$ in the parametric function will improve a lot the precision of approximation. The rest of the variation come from the impacts of $\sigma_k, k = 1, \ldots, K$, and possibly other values of $\P_{ll'}$ with $(l,l') \neq (k, k')$. 

Additionally, the range of $\mathbb{E}(\bR_{kk'})$ remains small across the explored configurations, which indicates a strong shrinkage of the induced correlations under $\siw_1$. The small range of $\mathbb{E}(\bR_{kk'})$ in Figure~\ref{fig: ER_K20} suggests that $\siw_1$ is not flexible enough to be used alone when one wishes to encode moderate or large prior correlations, because the prior moment $\mathbb{E}(\bR_{kk'})$ is small whatever the hyperparameter values. Despite its advantages in conjugacy and computation, the induced prior mean remains strongly shrunk toward zero for all explored hyperparameter values.
Since the posterior of the covariance matrix is still a $\siw_1$, the posterior mean of the induced correlations will also be close to zero. A possible explanation could be that $b=1$ introduces too much shrinkage on the eigengaps in the \cov samples, which limits the size of the induced correlation through the following deterministic bound (The proof is given in Appendix~\ref{app: propboundeigengap}.). 
\begin{restatable}{prop}{propboundeigengap}
\label{propboundeigengap}
Let $\Sigma$ be a covariance matrix of size $K$, and $R = R(\Sigma)$. Then for all $k\neq k'$,
\[
|R_{kk'}|
\;\le\;
\frac{\kappa(\Sigma)-1}{\kappa(\Sigma)+1},
\qquad \kappa(\Sigma):=\frac{\lambda_{\max}(\Sigma)}{\lambda_{\min}(\Sigma)}.
\]
\end{restatable}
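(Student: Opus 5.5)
The bound is a Kantorovich/Wielandt-type inequality, and the natural route is to reduce it to a two-dimensional problem. Fix $k\neq k'$ and consider the $2\times 2$ principal submatrix
\[
\Sigma^{(2)}=\begin{pmatrix}\Sigma_{kk} & \Sigma_{kk'}\\ \Sigma_{kk'} & \Sigma_{k'k'}\end{pmatrix}.
\]
By Cauchy interlacing, its eigenvalues $\mu_1\ge\mu_2>0$ lie in $[\lambda_{\min}(\Sigma),\lambda_{\max}(\Sigma)]$. Hence $\kappa(\Sigma^{(2)})\le\kappa(\Sigma)$. The correlation $R_{kk'}$ depends only on $\Sigma^{(2)}$, and $t\mapsto (t-1)/(t+1)$ is increasing on $[1,\infty)$. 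It therefore suffices to prove the claim for $K=2$.

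For $K=2$, write $a=\Sigma_{kk}$, $c=\Sigma_{k'k'}$, $b=\Sigma_{kk'}$. Then $\mu_1+\mu_2=a+c$, $\mu_1\mu_2=ac-b^2$, and $(\mu_1-\mu_2)^2=(a-c)^2+4b^2\ge 4b^2$. So $|b|\le(\mu_1-\mu_2)/2$. By AM--GM, $\sqrt{ac}\ge$ ... but this goes the wrong way, so instead I will use the exact identity
\[
\frac{\mu_1-\mu_2}{\mu_1+\mu_2}=\frac{\sqrt{(a-c)^2+4b^2}}{a+c}.
\]
The target is $b^2/(ac)\le (\mu_1-\mu_2)^2/(\mu_1+\mu_2)^2$, that is,
\[
b^2(a+c)^2\le ac\,\big((a-c)^2+4b^2\big).
\]
Using $(a+c)^2=(a-c)^2+4ac$, the left side equals $b^2(a-c)^2+4acb^2$. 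The inequality thus reduces to $b^2(a-c)^2\le ac(a-c)^2$, which holds because $b^2\le ac$ by positive definiteness.

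Finally, $(\mu_1-\mu_2)/(\mu_1+\mu_2)=(\kappa(\Sigma^{(2)})-1)/(\kappa(\Sigma^{(2)})+1)\le(\kappa(\Sigma)-1)/(\kappa(\Sigma)+1)$, which gives the result. The only delicate step is choosing the right algebraic comparison in the $2\times2$ case. The interlacing reduction is standard. Alternatively, one could use Wielandt's inequality directly: take $x=e_k$ and $y=e_{k'}$, which are orthogonal, and conclude immediately. I would mention this as a one-line route.
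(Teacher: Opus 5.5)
Your proof is correct, but it takes a different route from the paper's. The paper never passes to the $2\times 2$ block. It works in $\mathbb{R}^K$ with the normalized vectors $u=e_k/\sqrt{\Sigma_{kk}}$ and $v=e_{k'}/\sqrt{\Sigma_{k'k'}}$, writing $r=R_{kk'}$. It forms $w=u+\operatorname{sign}(r)v$ and $z=u-\operatorname{sign}(r)v$, which share the squared norm $1/\Sigma_{kk}+1/\Sigma_{k'k'}$ while $w^\top\Sigma w=2(1+|r|)$ and $z^\top\Sigma z=2(1-|r|)$. Applying $\lambda_{\min}\|x\|^2\le x^\top\Sigma x\le\lambda_{\max}\|x\|^2$ to both vectors and dividing gives $(1+|r|)/(1-|r|)\le\kappa(\Sigma)$. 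This is in effect a self-contained proof of the special case of Wielandt's inequality that you cite as your one-line alternative.

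You instead use interlacing together with monotonicity of $t\mapsto(t-1)/(t+1)$, followed by an exact computation in dimension two, where the claim collapses to $b^2(a-c)^2\le ac(a-c)^2$. The paper's argument is shorter and needs nothing beyond the Rayleigh quotient.

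Yours gives a sharper intermediate statement, $|R_{kk'}|\le(\kappa(\Sigma^{(2)})-1)/(\kappa(\Sigma^{(2)})+1)$, and since $b^2<ac$ strictly, equality holds if and only if $\Sigma_{kk}=\Sigma_{k'k'}$. It also treats $R_{kk'}=0$ uniformly. The paper's choice $\operatorname{sign}(r)$ makes the norm identity fail when $r=0$, though the bound is trivial there.

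In a write-up, delete the abandoned AM--GM sentence and the unused bound $|b|\le(\mu_1-\mu_2)/2$.
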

This would encourage using a smaller for $b$ if $\siw_1$ were to be used alone. 
But as mentioned earlier, posterior sampling of $\siw, b \in (0,1)$ is computationally inconvenient. Therefore, we propose to use a $\iw/\siw_1$ mixture as a feasible computational alternative of $\siw, b \in (0,1)$, via which we explore an enhanced solution of $\iw$.

\begin{figure}[ht]
    \centering
    \includegraphics[width=0.6\linewidth]{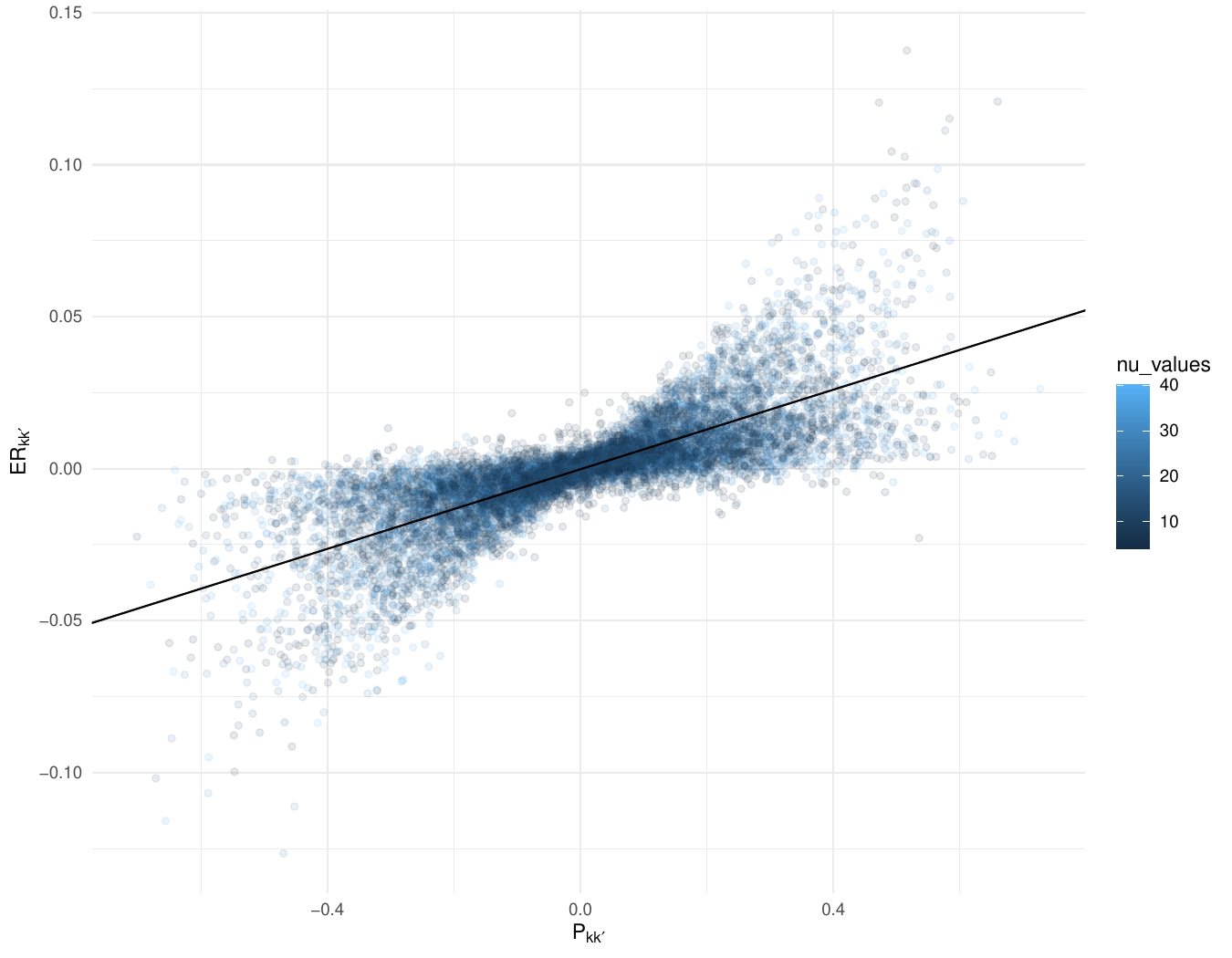}
    \caption{\textit{First-order approximation of $\mathbb{E} \bR_{kk'}$ by $\P_{kk'}$.}
    Motivated by the asymptotic relation in Theorem~\ref{thmsiw}, the black line shows the
    simple linear approximation $\mathbb{E}\bR_{kk'} \approx 0.066\P_{kk'}$, for
    $K = 20$.}
    \label{fig: ER_K20}
\end{figure}


Next, Figure~\ref{fig: VR_K20} plots $\mathbb{V} \bR_{kk'}$ as a function of $\nu$.
Similarly, we adopt a simple parametric approximation in $\nu$:
\[
\mathbb{V}\bR_{kk'} \approx 0.09+\exp(-0.23\nu-1.55).
\]

\begin{figure}[ht]
    \centering
    \includegraphics[width=0.55\linewidth]{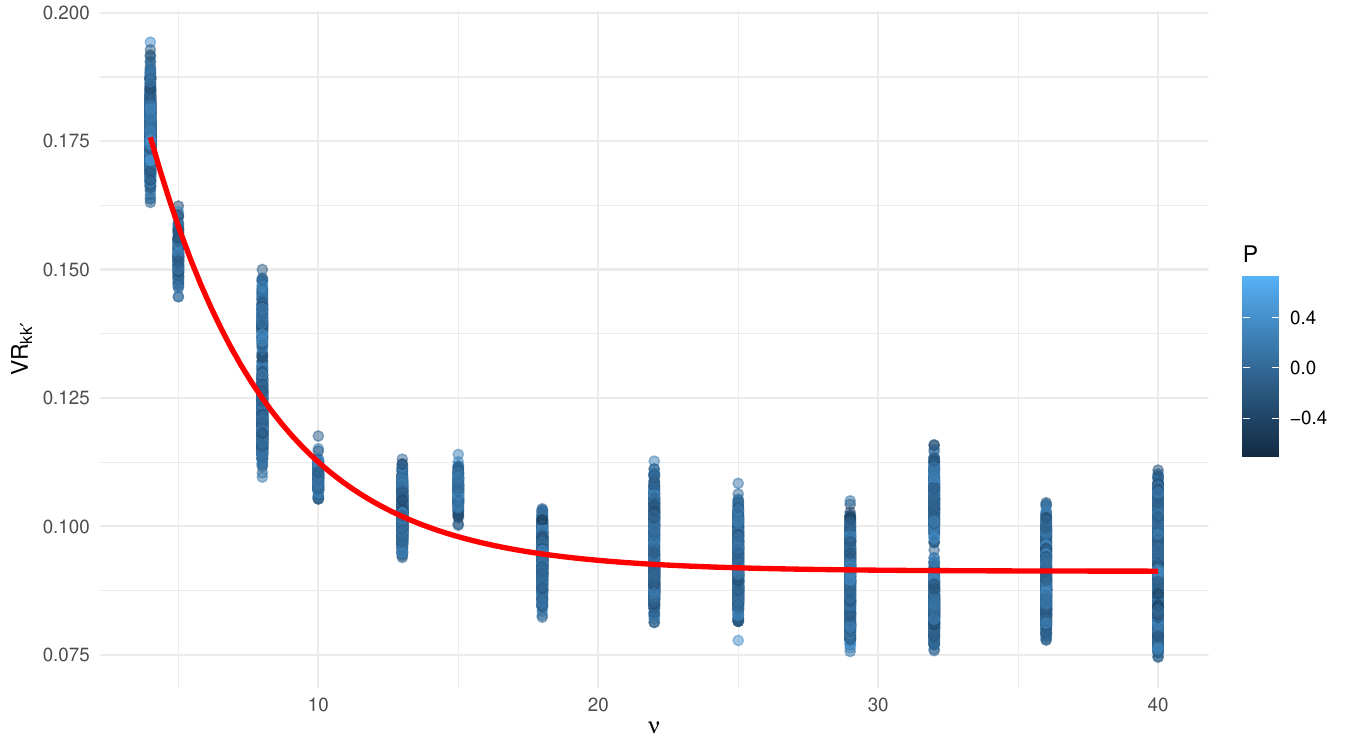}
    \caption{\textit{Approximation of $\mathbb{V} (\bR_{kk'})$ by $\nu$.}
    The fitted approximation is $0.09+\exp(-0.23\nu-1.55)$.}
    \label{fig: VR_K20}
\end{figure}

We summarize the proposed approximation formulas for $\mathbb{E}(\bR_{kk'})$ and
$\mathbb{V}(\bR_{kk'})$ in Property~\ref{prop: siw}, as a counterpart of
Proposition~\ref{prop: IW prior} for $\iw$.

\begin{property}\label{prop: siw} 
Let $\bSigma \sim \siw_1(\P, \Vec{\sigma}, \nu)$ with $K = 20$, $\nu > 3$, and denote
$\bR = R(\bSigma)$. For $k \neq k'$, we have
\begin{align*}
    \mathbb{E}(\bR_{kk'}) &\approx 0.066\P_{kk'},\\
    \mathbb{V}(\bR_{kk'}) &\approx 0.09 + \exp(-0.23\nu -1.55).  
\end{align*}
When $\siw_1(\P, \Vec{\sigma}, \nu)$ is used as prior for the Gaussian likelihood
$\x_i|\bSigma \stackrel{iid}{\sim} \mathcal{N}(0,\bSigma)$, $i = 1,\ldots,n$, we have for
the posterior correlation:
\begin{align*}
    \mathbb{E}(\bR_{kk'} \mid \x_{1:n}) &\approx 0.066\bm\P_{n,kk'}, \\
    \mathbb{V}(\bR_{kk'} \mid \x_{1:n}) &\approx 0.09 + \exp(-0.23\nu_n -1.55),  
\end{align*}
where $\nu_n = \nu+n/2$, and
\begin{equation}
\mathbf{P}_{n,kk'} =
\frac{\sigma_k  \sigma_{k'}  \P_{kk'} + n\bm S_{kk'}}
{\bm\sigma_{n,k}\sigma_{n,k'}},
\quad
\bm\sigma_{n,k} = \sqrt{\sigma_{k}^2 + n \bm S_{kk}},
\quad
\bm S_{kk'} = \frac{\sum_{i=1}^n\bm x_{ki}\bm x_{k'i}}{n}.
\end{equation}
\end{property}
The constraint on $\nu$
ensures that the variance exists. The simple parametric form for the fitting functions allows closed-form inversion, which makes it possible to determine easily the hyperparameter values associated to any desired prior moments.
The posterior formulas come from the conjugacy of $\siw_1$. The same comments as in Theorem~\ref{thm: IW post} can be made here. The posterior mean is impacted by both $\P$ and $\sigma_k$ as in Equation~\eqref{eq: sigma_sample_var0}. The choice of $\sigma_k$ should therefore be made as carefully as that of $\P$. Moreover, for
the mixture prior, $\sigma_k$ also impacts the posterior weights of the components, since the posterior remains an $\iw/\siw_1$ mixture by conjugacy of both components. We provide a strategy for fixing $\sigma_k$ in Section~\ref{sec: IW/SIW}.

We finish this section with an additional theoretical property of exchangeability. Recall that for $\iw$-induced $\bm R$, we have
$Q\bR Q^{-1} \stackrel{d}{=} \bR$ whenever $\P_{kk'} \equiv \rho$ for all $k\neq k'$, as
shown in Proposition~\ref{prop: IW prior}. For $\siw_1$-induced $\bm R$, the same exchangeability requires in addition that $\sigma_k\equiv\sigma$, $k=1,\ldots,K$, as stated
in Proposition~\ref{propExchange}. The proof is given in Appendix~\ref{app: exchange}.

\begin{restatable}{prop}{propExchange}
\label{propExchange}
Let $\bSigma \sim \siw_1(\P, \Vec{\sigma},\nu)$, and denote $\bR = R(\bSigma)$. Given
$\P_{kk'} \equiv \rho$ and $\sigma_{k} \equiv \sigma$, for all
$k\neq k'$, $k,k'=1,\ldots,K$, we have
\begin{equation}
Q\bR Q^{-1} \stackrel{d}{=} \bR,
\end{equation}
for any permutation matrix $Q$.   
\end{restatable}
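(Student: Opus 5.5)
The plan is to show that the $\siw_1$ density is invariant under conjugation by a permutation matrix when $\Psi$ itself is permutation invariant. We then push this invariance through the deterministic map $\Sigma \mapsto R(\Sigma)$.

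\emph{Step 1: the scale matrix.} Under the hypotheses $\P_{kk'}\equiv\rho$ and $\sigma_k\equiv\sigma$, the scale matrix in the correlation–variance parametrization is $\Psi = D\P D$ with $D=\sigma I_K$ (up to the convention of $\sigma$ versus $\sigma^2$ on the diagonal). Hence
\[
\Psi = c\,\big((1-\rho)I_K+\rho J_K\big)
\]
for some constant $c>0$. For any permutation matrix $Q$ we have $Q^{-1}=Q^\top$, $QI_KQ^\top=I_K$ and $QJ_KQ^\top=J_K$, so $Q\Psi Q^\top=\Psi$. This is the only place the equal-$\sigma_k$ hypothesis is used. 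Without it, $D$ would be permuted to $QDQ^\top\neq D$ and the density would change, which explains the extra condition compared with Proposition~\ref{prop:exch}.

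\emph{Step 2: invariance of the density.} Set $\tilde{\bSigma}=Q\bSigma Q^\top$. The map $\Sigma\mapsto Q\Sigma Q^\top$ is a linear bijection of the cone of positive definite matrices onto itself. Its Jacobian on the space of symmetric matrices is $|\det Q|^{K+1}=1$, because it merely permutes the coordinates $\Sigma_{kk'}$. Therefore the density of $\tilde{\bSigma}$ at $\Sigma$ equals $\pi(Q^\top\Sigma Q\mid\Psi,\nu,1)$. We check the three factors of the kernel \eqref{eq: siw den} separately.
\begin{itemize}
\item The determinant is unchanged: $|Q^\top\Sigma Q|=|\Sigma|$.
\item The eigenvalues of $Q^\top\Sigma Q$ coincide with those of $\Sigma$, since this is an orthogonal similarity. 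Hence the factor $\prod_{i<j}(\lambda_i-\lambda_j)$ is unchanged.
\item For the trace term, $\operatorname{tr}(\Psi (Q^\top\Sigma Q)^{-1})=\operatorname{tr}(Q\Psi Q^\top\Sigma^{-1})=\operatorname{tr}(\Psi\Sigma^{-1})$ by Step 1.
\end{itemize}
The normalizing constant is the same on both sides. Hence $Q\bSigma Q^\top\stackrel{d}{=}\bSigma$.

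\emph{Step 3: transfer to correlations.} Write $R(\Sigma)=\Delta^{-1/2}\Sigma\Delta^{-1/2}$ with $\Delta=\mathrm{diag}(\Sigma)$. Since $\mathrm{diag}(Q\Sigma Q^\top)=Q\Delta Q^\top$, we get
\[
R(Q\Sigma Q^\top)=Q R(\Sigma) Q^\top .
\]
Applying the measurable map $R$ to both sides of $Q\bSigma Q^\top\stackrel{d}{=}\bSigma$ then gives $Q\bR Q^{-1}\stackrel{d}{=}\bR$.

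The argument is essentially routine. The only point needing care is Step 2: confirming that the Vandermonde-type factor is a spectral function, and therefore invariant under orthogonal conjugation, and that the change of variables has unit Jacobian. Both facts are immediate once stated.
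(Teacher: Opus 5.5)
Your proposal is correct and follows the paper's proof essentially step for step. Both show $Q\bSigma Q^{-1}\stackrel{d}{=}\bSigma$ by a change of variables, checking that the determinant, the eigenvalue factor $\prod_{i<j}(\lambda_i-\lambda_j)$ and the trace term are unchanged because $\Psi$ is a scalar multiple of $\P$ and hence permutation invariant, and then pass to $\bR$ via $R(Q\Sigma Q^{\top})=QR(\Sigma)Q^{\top}$. Your unit Jacobian, $|\det Q|^{K+1}=1$ computed on symmetric matrices, is if anything more careful than the paper's $Q^{-1}\otimes Q^{-1}$ on the full vectorized space, though both give $1$.
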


\subsection{$\iw/\siw_1$ mixture framework}\label{sec: IW/SIW}
Given the previous results, we now derive the properties for the mixture prior. We recall the definition as follows. 
\begin{definition}\label{def: IW/SIW} We define the mixture $\bSigma \sim \iw/\siw_1(\eta, \P_r, \Vec{\sigma}_r, \nu_r, r=0,1)$ by
\begin{equation}\label{eq: IW/SIW_var}
 \bSigma = \bm z \bSigma_0 + (1 - \bm z)\bSigma_1,    
\end{equation}
where 
\begin{equation}
\begin{cases}
    &\bSigma_0 \sim \mathcal{IW}(\P_0, \Vec{\sigma}_0, \nu_0), \\ 
    &\bSigma_1 \sim \siw_1(\P_1, \Vec{\sigma}_1, \nu_1), \\
    & \bm z \sim \mathcal{B}(\eta)
\end{cases}
\end{equation}
with $\eta \in [0,1]$, $\nu_0 > K - 1$, and $\nu_1 \in \R$.    
\end{definition}
As mentioned before, the construction is inspired by $\siw_b$, which reweighs $\iw$ by reducing the density of the matrices with large eigengaps. Given the constraint that sampling from $\siw_b, b\in (0,1)$ is difficult unless $b=1$, the reweighting is realized by mixing $\iw$ with $\siw_1$, with a relative weight controlled by $\eta$.

Given the approximate mean and variance formulas derived previously for the induced correlation independently in the case of $\iw$ and of $\siw_1$, we obtain the following property. 

\begin{restatable}{property}{propertyIWSIWprior}
\label{propertyIWSIWprior}
Let $\bSigma \sim \iw/\siw_1(\eta, \P_r, \Vec{\sigma}_r, \nu_r, r=0,1)$ with $K =20, \nu_0 > 23, \nu_1 > 3$. Then 
 \begin{enumerate}
 \vspace{-0.05in}
    \item $ \mathbb{E}(\bR_{kk'}) \approx
    \eta \P_{0,kk'} + 0.066(1-\eta)\P_{1,kk'},$ 
     \vspace{-0.05in}
    \item $\mathbb{V}
    (\bR_{kk'}) \approx \frac{\eta}{\nu_0 - 19} + (1-\eta)(0.09 + \exp(-0.23\nu_1 -1.55)) + \eta(1-\eta)(\P_{0,kk'})^2.$ 
\end{enumerate}   

\end{restatable}
The proof is given in Appendix \ref{app: mean var mix}.
The prior mean is controlled mainly by $P_{0,kk'}$ and $\eta$, while the prior variance is controlled by $\nu_r, r=0,1, \eta$ and $P_{0,kk'}$.

The exchangeability can be derived easily from the ones of $\iw$ and $\siw_1$, as stated by the following result. 
\begin{corollary}\label{coro: exchange}
 Let $\bSigma \sim \iw/\siw_1(\eta, \P_r, \Vec{\sigma}_r, \nu_r, r=0,1)$, and denote $\bR = R(\bSigma)$. Given $\P_{0,kk'} \equiv \rho_0, \P_{1,kk'} \equiv \rho_1$ and $\sigma_{1,k} \equiv \sigma, \forall k \neq k', k, k', = 1, \ldots, K$:
\begin{equation}
Q\bR Q^{-1} \stackrel{d}{=} \bR,
\end{equation}
for any permutation matrix $Q$. 
\end{corollary}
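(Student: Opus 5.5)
The approach is to condition on the latent indicator $\bm z$ and reduce the claim to the exchangeability of each mixture component. By Definition~\ref{def: IW/SIW}, $\bSigma = \bm z\bSigma_0 + (1-\bm z)\bSigma_1$, where $\bm z\in\{0,1\}$ is drawn independently of $(\bSigma_0,\bSigma_1)$. Because $R(\cdot)$ is applied entrywise to a single matrix, we have
\[
\bR = R(\bSigma) = \bm z R(\bSigma_0) + (1-\bm z) R(\bSigma_1) = \bm z \bR_0 + (1-\bm z)\bR_1 .
\]
Consequently the law of $\bR$ is the two-component mixture $\eta\,\mathcal{L}(\bR_0) + (1-\eta)\,\mathcal{L}(\bR_1)$.

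Next, fix a permutation matrix $Q$. The map $M\mapsto QMQ^{-1}$ is deterministic and linear, so $Q\bR Q^{-1} = \bm z\, Q\bR_0Q^{-1} + (1-\bm z)\, Q\bR_1 Q^{-1}$. Its law is therefore $\eta\,\mathcal{L}(Q\bR_0Q^{-1}) + (1-\eta)\,\mathcal{L}(Q\bR_1Q^{-1})$. To conclude, it suffices to show that $Q\bR_r Q^{-1}\stackrel{d}{=}\bR_r$ for $r=0,1$. Concretely, for any Borel set $A$ of correlation matrices we have
\[
\mathbb P(Q\bR Q^{-1}\in A) = \eta\,\mathbb P(Q\bR_0Q^{-1}\in A) + (1-\eta)\,\mathbb P(Q\bR_1Q^{-1}\in A),
\]
and once each term is replaced by its unpermuted counterpart, this expression equals $\mathbb P(\bR\in A)$.

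For $r=0$, the off-diagonal entries of $\P_0$ are all equal to $\rho_0$, so Proposition~\ref{prop:exch} applies directly and gives $Q\bR_0Q^{-1}\stackrel{d}{=}\bR_0$, whatever $\Vec{\sigma}_0$ is. No assumption on $\Vec{\sigma}_0$ is needed here, which is consistent with the statement imposing none. For $r=1$, we have $\P_{1,kk'}\equiv\rho_1$ and $\sigma_{1,k}\equiv\sigma$, which are exactly the hypotheses of Proposition~\ref{propExchange}, so $Q\bR_1Q^{-1}\stackrel{d}{=}\bR_1$.

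The only point that needs care is that $\bm z$ must be independent of the pair $(\bSigma_0,\bSigma_1)$, since this is what justifies the mixture decomposition of the law. This independence is implicit in the mixture definition, and I would state it explicitly. Beyond that, the argument is a direct combination of the two earlier propositions, and I do not expect any real obstacle.
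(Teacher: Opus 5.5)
Your proposal is correct and follows the same route as the paper. The paper gives no formal proof and only remarks that the result follows from the exchangeability of the two components. Your argument spells this out: the mixture decomposition $\bR = \bm z\bR_0 + (1-\bm z)\bR_1$, with Proposition~\ref{prop:exch} for the $\iw$ component (valid for arbitrary $\Vec{\sigma}_0$) and Proposition~\ref{propExchange} for the $\siw_1$ component (which is where $\sigma_{1,k}\equiv\sigma$ is needed).
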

Even though the strict identity of the marginals of $\bm R_{kk'}$ in the $\iw / \siw$ mixture case can be only obtained when all $\sigma_{1,k}$ are fixed by a unique constant $\sigma$. However, in most cases, approximately equality is already enough. Given Property \ref{propertyIWSIWprior}, when $\P_{r,kk'} \equiv \rho_r, r=0,1$, the resulting expectation $\mathbb{E}(\bR_{kk'})$ for all pairs will still be around the same value $\eta  \rho_0 + 0.066(1-\eta)  \rho_1$, with the exact values impacted by $\sigma_{1,k}$ in an asymmetric way. However, if one wishes the strict identity, $\sigma_{1,k}$ needs to be fixed identically. When the observation obviously does not indicate all the component variances are equal, rescaling can be performed on raw data. Note that rescaling does not change the correlation value.

We now study the posterior properties of the $\iw / \siw$ mixture prior with Gaussian likelihood. Firstly, the mixture inherits the Gaussian-conjugacy from the components of $\mathcal{IW}$ and $\mathcal{SIW}$ as shown in Proposition \ref{propIWSIWpost}.
\begin{restatable}{prop}{propIWSIWpost}\label{propIWSIWpost} 
Given that   
$\x_i
|\bSigma
\stackrel{iid}{\sim} \mathcal{N}\left(
0
,
\bSigma\right), \, i = 1, \ldots, n$, and $\bSigma \sim  \iw/\siw_1(\eta, \P_r, \Vec{\sigma}_r, \nu_r, r=0,1)$, we have 
\begin{equation}
\bSigma | \x_{1:n} \sim \iw/\siw_1(\eta_n, \bm\P_{r,n}, \Vec{\bm \sigma}_{r,n}, \nu_{r,n}, r=0,1),
\end{equation}
where the updating rules are: 
\[
\begin{aligned}
     &\nu_{0,n} = \nu_0 + n, \nu_{1,n} = \nu_1 + n/2,\\
    &\bm\sigma_{r,n,k} = \sqrt{(\sigma_{r,k})^2 + n \bm S_{kk}},\\
    &\bm\P_{r,n,kk'} = \frac{\sigma_{r,k}\sigma_{r,k'}\P_{r,kk'} + n\bm S_{kk'}}{ \bm\sigma_{r,n,k}\bm\sigma_{r,n,k'}}, \mbox{ with }     \bm S_{kk'} = \sum_{i=1}^n\bm x_{ki}\bm x_{k'i} /n,   
\end{aligned}
\]
and the posterior mixture weight 
\begin{equation}
    \eta_{n} = \frac{\eta L(\x_{1:n} \mid \bm z = 0)}{L(\x_{1:n})}, 
\end{equation}
with $ L(\x_{1:n}) 
= \eta L(\x_{1:n} \mid \bm z = 0)  + (1-\eta)L(\x_{1:n} \mid \bm z=1), $
$$L(\x_{1:n} \mid \bm z = 0) 
= \mathbb{E}_{\bSigma \sim \mathcal{IW}(\P_0, \Vec{\sigma}_0, \nu_0)}\left[\prod_{i=1}^n p_{\mathcal{N}}(\bm x_i; 0, \bSigma)\right],$$
$$L(\x_{1:n} \mid \bm z=1) 
= \mathbb{E}_{\bSigma \sim \siw_1(\P_1, \Vec{\sigma}_1, \nu_1)}\left[\prod_{i=1}^n p_{\mathcal{N}}(\bm x_i; 0, \bSigma)\right],$$
where $p_\mathcal{N}(\cdot; \mu, \Sigma)$ denotes the probability density function of $\mathcal{N}(\mu, \Sigma)$.   

\end{restatable}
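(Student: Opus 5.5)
The argument is the standard mixture-of-conjugate-priors computation. We treat the latent indicator $\bm z$ as an additional parameter and factor the joint posterior of $(\bm z,\bSigma)$. We then apply conjugacy within each component separately.

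\textbf{Step 1: factor the joint posterior.} Conditionally on $\bm z$, the prior on $\bSigma$ is either $\iw(\P_0,\Vec{\sigma}_0,\nu_0)$ or $\siw_1(\P_1,\Vec{\sigma}_1,\nu_1)$. Bayes' rule gives
\[
\pi(\bSigma,\bm z\mid \x_{1:n}) = \pi(\bm z\mid \x_{1:n})\,\pi(\bSigma\mid \bm z,\x_{1:n}).
\]
The marginal $\pi(\bm z\mid \x_{1:n})$ is Bernoulli. Its weight is proportional to the prior weight times the component marginal likelihood, namely $\eta L(\x_{1:n}\mid \bm z=0)$ versus $(1-\eta)L(\x_{1:n}\mid\bm z=1)$, each obtained by integrating $\prod_i p_{\mathcal N}(\x_i;0,\bSigma)$ against the corresponding component prior. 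Normalizing by $L(\x_{1:n})$ yields $\eta_n$ exactly as stated.

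This presumes that $\eta$ is attached to the $\iw$ component, so that $\eta$ is the prior probability of the component associated with $L(\cdot\mid\bm z=0)$. With the literal convention $\bSigma=\bm z\bSigma_0+(1-\bm z)\bSigma_1$, the $\iw$ component corresponds to $\bm z=1$. I will therefore fix the labelling so that it is consistent with the stated formula for $\eta_n$.

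\textbf{Step 2: component-wise conjugacy.} The likelihood equals
\[
\prod_i p_{\mathcal N}(\x_i;0,\bSigma)\propto |\bSigma|^{-n/2}\exp\!\big(-\tfrac12\operatorname{tr}(n\bm S\bSigma^{-1})\big).
\]

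\emph{The $\iw$ component.} Multiplying by the kernel in \eqref{eq: iw den} gives an $\iw$ kernel with scale $\Psi_0+n\bm S$ and exponent $-(\nu_0+n+K+1)/2$. Hence $\nu_{0,n}=\nu_0+n$.

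\emph{The $\siw_1$ component.} The kernel in \eqref{eq: siw den} uses $|\Sigma|^{-\nu}$ without the factor $1/2$, so adding $n/2$ to the exponent gives $\nu_{1,n}=\nu_1+n/2$. The eigengap factor $\prod_{i<j}(\lambda_i-\lambda_j)^{-1}$ depends only on $\bSigma$ and is untouched by the likelihood. The posterior is therefore again $\siw_1$, with scale $\Psi_1+n\bm S$.

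\textbf{Step 3: translate to the correlation--variance parametrization.} For each $r$ we read off the new parameters from $\Psi_{r,n}=\Psi_r+n\bm S$. The diagonal entries give the new variance parameters, $\Psi_{r,n,kk}=\sigma_{r,k}^2+n\bm S_{kk}=\bm\sigma_{r,n,k}^2$. Here $\Psi_{kk}$ is understood as $\sigma_k^2$, matching the convention used in Theorem~\ref{thm: IW post}. The off-diagonal entries, normalized, give
\[
\bm\P_{r,n,kk'}=\frac{\sigma_{r,k}\sigma_{r,k'}\P_{r,kk'}+n\bm S_{kk'}}{\bm\sigma_{r,n,k}\bm\sigma_{r,n,k'}}.
\]
Positive definiteness of $\Psi_{r,n}$ is immediate because $\bm S\succeq0$ and $\Psi_r\succ0$.

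\textbf{Step 4: identify the mixture.} Combining Steps 1--3, the posterior is exactly the mixture $\iw/\siw_1(\eta_n,\bm\P_{r,n},\Vec{\bm\sigma}_{r,n},\nu_{r,n})$. The only delicate point is that both marginal likelihoods must be finite, so that $\eta_n$ is well defined. For the $\iw$ component this is a closed-form ratio of normalizing constants. For $\siw_1$ it follows from the properness of the posterior $\siw_1$ established in \cite{berger2020Bayesian}, which we take as given. I expect the main, though minor, obstacle to be this bookkeeping: keeping the $\nu$ conventions of the two components consistent and fixing the labelling of $\bm z$.
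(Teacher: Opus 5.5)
Your proposal is correct and follows essentially the same route as the paper's proof. Both decompose the posterior over the latent indicator by Bayes' rule, so that the component weights are proportional to $\eta L(\x_{1:n}\mid \bm z=0)$ and $(1-\eta)L(\x_{1:n}\mid\bm z=1)$, and then invoke conjugacy within each component. Your version is more careful than the paper's in several respects: you check the kernel updates explicitly, including $\nu_{1,n}=\nu_1+n/2$ from the $|\Sigma|^{-\nu}$ convention; you fix the paper's inconsistent labelling of $\bm z$ and use the convention $\Psi_{kk}=\sigma_k^2$; and you address finiteness of the two marginal likelihoods.
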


See Appendix \ref{app: IW/SIW post} for a proof. The evaluation of $L(\x_{1:n} \mid \bm z = 0)$ is detailed in Section \ref{sec: sampling}. The result indicates that the $\iw$ and $\siw_1$ components constituting the posterior mixture are exactly the posteriors of the individual models:
\begin{equation}\label{eq: model ind iw}
\x_{1:n} \stackrel{iid}{\sim} \mathcal{N}(0, \bSigma_0), \, 
    \bSigma_0 \sim \mathcal{IW}(\P_0, \Vec{\sigma}_0, \nu_0), 
\end{equation}
and 
\begin{equation}\label{eq: model ind siw}
\x_{1:n} \stackrel{iid}{\sim} \mathcal{N}(0, \bSigma_1), \, 
    \bSigma_1 \sim \siw_1(\P_1, \Vec{\sigma}_1, \nu_1),  
\end{equation}
in another words, when the $\iw$ and $\siw_1$ components in the prior mixture are used independently with the same Gaussian likelihood. Thus the posterior $\bSigma | \x_{1:n}$ can be also represented as: 
\[ \bSigma | \x_{1:n} = \bm z_n \bSigma_0 | \x_{1:n} + (1 - \bm z_n)\bSigma_1 | \x_{1:n},\]
where $\bm z_n \sim \mathcal{B}(\eta_n)$. Furthermore, the constants $L(\x_{1:n} \mid \bm z = r), r=0,1$ used in the weight update are the likelihoods of individual models. Thus if the $\iw$ prior is closer to the data, then the corresponding individual model is better, leading to a higher likelihood of $L(\x_{1:n} \mid \bm z = 0) $. Its posterior weight $\eta_n$ will increase with respect to its prior weight $\eta$, similar for the $\siw_1$ prior. The mixture posterior will therefore be closer to the $\iw$ posterior. An extreme case is that $\siw_1$ does not fit the data, thus $L(\x_{1:n} \mid \bm z = 0) \gg L(\x_{1:n} \mid \bm z = 1)$, thus $\eta_n = 1$. The mixture becomes the $\iw$ independent model with 
\[ \bSigma | \x_{1:n} = \bSigma_0 | \x_{1:n}.\]
For the induced correlation matrix $\bR$, we have 
\begin{equation}\label{eq: cor mix}
\bm R | \x_{1:n} = \bm z_n \bm R_0 | \x_{1:n} + (1 - \bm z_n)\bm R_1 | \x_{1:n}, 
\end{equation}
where $\bm R_r = R(\bSigma_r), \, r = 0,1.$

Because the posterior is still a $\iw/\siw_1$ mixture, applying the posterior results of $\iw$ and $\siw_1$ respectively developed in Theorem \ref{thm: IW post} and Proposition \ref{prop: siw}, we can also approximate the posterior mean and variance of the mixture, as shown in Property \ref{prop: iw siw post corr}. 

\begin{property}\label{prop: iw siw post corr}
Given that   
$\x_i
|\bSigma
\stackrel{iid}{\sim} \mathcal{N}\left(
0
,
\bSigma\right), \, i = 1, \ldots, n$, and $\bSigma \sim  \iw/\siw_1(\eta, \P_r, \Vec{\sigma}_r, \nu_r, r=0,1)$,  with $K =20, \nu_0 > 23, \nu_1 > 3$, we have 
 \begin{enumerate}
 \vspace{-0.05in}
    \item $ \mathbb{E}(\bR_{kk'} \mid \x_{1:n} ) \approx
    \eta_n \bm\P_{0,n,kk'} + 0.066(1-\eta_n)\bm\P_{1,n,kk'},$ 
     \vspace{-0.05in}
    \item $\mathbb{V}
    (\bR_{kk'} \mid \x_{1:n} ) \approx \frac{\eta_n}{\nu_{0,n} - 19} + (1-\eta_n)(0.09 + \exp(-0.23\nu_{1,n} -1.55)) + \eta_n(1-\eta_n)(\bm\P_{0,n,kk'})^{2},$ 
\end{enumerate}   
where $\eta_n, \bm\P_{r,n}, {\bSigma}_{r,n}, \nu_{i,n}, r=0,1$ are defined in Proposition \ref{propIWSIWpost}. 
\end{property}

\subsubsection{Choice of hyperparameters $\sigma_{rk}$ of the mixture prior}\label{sec: Designation sigma}

Analogous to Section \ref{sec: iw}, we can obtain simple posterior update formulas by setting $\Vec{\sigma_r}$ such that the prior expected variance matches the sample variance, namely, $\mathbb{E}\left[\bSigma_{kk} | \bm z = r\right] = \bm S_{kk}$, for both the $\iw$ ($r=0$) and $\siw_1$ ($r=1$) components.

For the $\iw$ component ($r=0$), we have from the properties of the $\iw$ distribution the following result: 
\begin{equation}\label{eq: nnn}
\mathbb{E}_{\bSigma \sim \iw(\P_0, \Vec{\sigma_0}, \nu_0)} (\bSigma_{kk}) = \sigma_{0k}^2/(\nu_0 - K - 1).    
\end{equation}
Therefore, setting $\sigma_{0k} = \sqrt{(\nu_0 - K -1)\bm S_{kk}}$ yields the simple and intuitive closed-form expression for the posterior update:
\begin{equation}
 \mathbf{P}_{0, n,kk'} 
 = \frac{\nu_0 - K -1}{\nu_0 - K -1+n}\P_{0,kk'} + \frac{n}{\nu_0 - K -1+n} \frac{\bm S_{kk'}}{\sqrt{\bm S_{kk}}\sqrt{\bm S_{k'k'}}},
\end{equation}
where $\nu_0$ seamlessly controls the shrinkage towards the prior. The same principle applies to the $\siw_1$ component ($r=1$) through the choice of $\sigma_{1k}$. However, 
there is no closed form expression for $\mathbb{E}_{\bSigma \sim \siw_1(\P_1, \Vec{\sigma_1}, \nu_1)} \bSigma_{kk}$. Thus as previously, we approximate $\mathbb{E}_{\bSigma \sim \siw_1(P_1,\Vec{\sigma_1},\nu_1)} (\bSigma_{kk})$ numerically. The results are reported in Figure \ref{fig:sigma_sigma}.

Similar to Equation \eqref{eq: nnn}, $\mathbb{E}_{\bSigma \sim \siw_1(P_1,\Vec{\sigma_1},\nu_1)} (\bSigma_{kk})$ increases with $\sigma_{1k}$, and decreases with $\nu_1$. Thus the value of  $\sigma_{1k}$ allowing $\mathbb{E}_{\bSigma \sim \siw_1(P_1,\Vec{\sigma_1},\nu_1)} (\bSigma_{kk})$ to approach $\bm S_{kk}$ increases with $\nu_1$ as well. Implicitly, we make $\nu_1$ control the impact $\P_{1,kk'}$ in $\mathbf{P}_{1, n,kk'}$. Since $\P_{0,kk'}$ and $\P_{1,kk'}$ define the prior correlation mean, and $\P_{0,n,kk'}$ and $\P_{1,n,kk'}$ define the posterior one. We have a similar model interpretation, that $\nu_r, r=0,1$ controls the weight of prior correlation mean in the posterior mean, higher values leading to greater weight. 

Further, we remark that Figure \ref{fig:sigma_sigma} shows that $\mathbb{E}_{\bSigma \sim \siw_1(P_1,\Vec{\sigma_1},\nu_1)} (\bSigma_{kk})$ can not reach small values near zero, no matter the value of $\sigma_{1k}$. If the sample variance falls within this unreachable range, we must rescale the raw data.

Finally, when a priori there is no constraint on the identity of the correlation marginal, $\sigma_{1k}$ can be left distinct. In this case, the choice of $\sigma_{rk}, r =0, 1$ could be made based on their impacts on the posterior distribution of $\bR$. Proposition \ref{propIWSIWpost} shows that in the mixture case, $\sigma_{rk}$'s impact on the posterior mean correlation remains the same as the $\iw$ component model shown in Equation \eqref{eq: sigma_sample_var0}, in that it controls the weight of the prior mean correlation in the posterior. In addition, $\sigma_{rk}$ influences the posterior mixture weight $\eta_n$. For example, a component prior on $\bSigma$ whose variance is far from the data variance (due to a poor choice of $\sigma_{rk}$) will be severely down-weighted in the posterior mixture, even if its correlation structure matches the data well.

\begin{figure}
    \centering
\includegraphics[width=0.7\linewidth]
{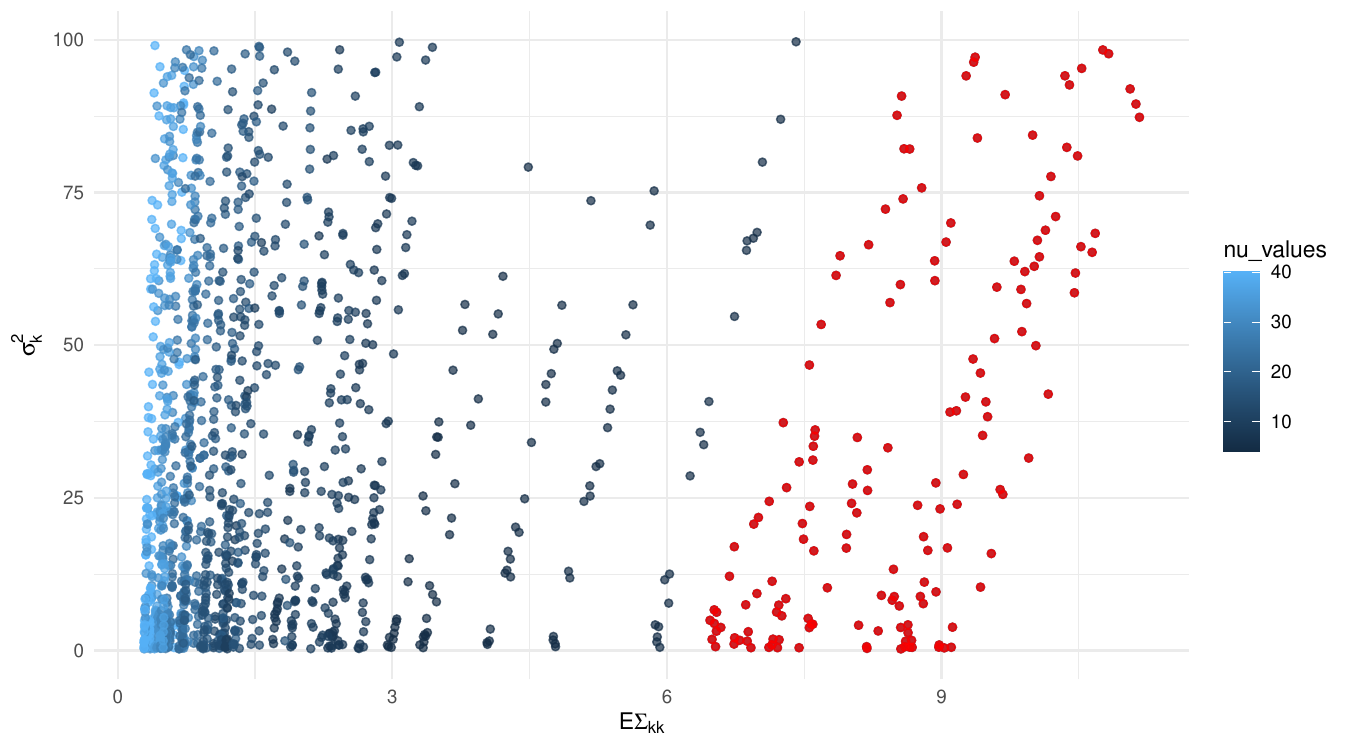}
    \caption{\textit{Values of $\mathbb{E}_{\bSigma \sim \siw_1(P,\Vec{\sigma},\nu)} (\bSigma_{kk})$ projected onto $\sigma_k^2$.} The red dots corresponds to $\nu = 4$. 
    In practice, we fix first $\nu$ according to the desired $\mathbb{V}_{\bSigma \sim \siw_1(P,\Vec{\sigma},\nu)} (\bR_{ll'})$. Given the value fixed, we approximate $\mathbb{E}_{\bSigma \sim \siw_1(P,\Vec{\sigma},\nu)} (\bSigma_{kk})$ using a linear function of $\sigma_k$. Then $\sigma_k$ is fixed such that $\mathbb{E}_{\bSigma \sim \siw_1(P,\Vec{\sigma},\nu)} (\bSigma_{kk})$ equals sample variance.
    }
    \label{fig:sigma_sigma}
\end{figure}



\subsubsection{Posterior inference}\label{sec: sampling}
A computational advantage of $\iw / \siw_1$ mixture is that posterior sampling does not require new algorithms. One only needs to sample a binary allocation variable to one of the $\iw$ and $\siw_1$ with a probability equal to the mixture weights, then to generate a sample from the corresponding $\iw$ or $\siw_1$. 
Fast sampling solutions for $\iw$ and $\siw_1$ both exist (packages in R and \cite{jiangsiw25}), and it is possible to combine them to build an efficient and parallelisable approximate mixture sampling algorithm. As mentioned at the beginning of Section \ref{sec: siw0}, $\siw_b, b \in (0,1)$ would not have such a computational advantage. In the following, we derive the evaluation of $ L(\x_{1:n} \mid \bm z = r), r=0,1$ needed to compute the posterior mixture weights. First, $ L(\x_{1:n} \mid \bm z = 0)$ admits an explicit formula thanks to properties of the $\iw$ distribution.

\begin{restatable}{prop}{ppropL}\label{propL}
Given $\x_{1:n} \stackrel{iid}{\sim} \mathcal{N}(0, \bSigma), \, 
    \bSigma \sim \mathcal{IW}(P_0, \Vec{\sigma_0}, \nu_0),$ define $\Psi_0 = \Delta_0 P_0 \Delta_0$ with $\Delta_0 = \operatorname{Diag}(\Vec{\sigma_0})$, we have 
\begin{equation}
    L(\x_{1:n} \mid \bm z = 0) =\frac{c_\iw(\Psi_0, \nu_0)}{(2\pi)^{\frac{KT}{2}}c_\iw(\Psi_0+ \bm S, \nu_0+n)},    
\end{equation}
where $\bm S = \sum\limits_{i=1}^n\x_i\x_i^\top$, and $c_\iw(\Psi_0, \nu_0) = |\Psi_0|^{\frac{\nu_0}{2}}/(2^{\frac{\nu_0K}{2}})\Gamma_K(\frac{\nu_0}{2})$ with $\Gamma_K$ multivariate gamma function. 
\end{restatable}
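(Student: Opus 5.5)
The plan is to compute the marginal likelihood of the $\iw$ component model directly, by integrating the Gaussian likelihood against the $\iw$ density and recognizing the integrand as an unnormalized $\iw$ kernel. By definition,
\[
L(\x_{1:n}\mid \bm z=0)=\int_{\Sigma\succ 0}\prod_{i=1}^n p_{\mathcal N}(\x_i;0,\Sigma)\,\pi(\Sigma\mid\Psi_0,\nu_0)\,d\Sigma .
\]
The first step is to rewrite the reparametrized prior $\iw(P_0,\Vec{\sigma_0},\nu_0)$ in its original form $\iw(\Psi_0,\nu_0)$, with $\Psi_0=\Delta_0P_0\Delta_0$. This inverts the correlation--scale reparametrization given before Proposition \ref{prop: IW prior}. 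Under this identification the prior density is exactly \eqref{eq: iw den}, which we write as $c_\iw(\Psi_0,\nu_0)\,|\Sigma|^{-(\nu_0+K+1)/2}\exp(-\tfrac12\operatorname{tr}(\Psi_0\Sigma^{-1}))$.

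The second step is to write the likelihood in trace form using the cyclic property of the trace:
\[
\prod_{i=1}^n p_{\mathcal N}(\x_i;0,\Sigma)=(2\pi)^{-Kn/2}|\Sigma|^{-n/2}\exp\!\left(-\tfrac12\operatorname{tr}(\bm S\Sigma^{-1})\right), \qquad \bm S=\sum_i\x_i\x_i^\top .
\]
Here $\bm S$ is the unnormalized scatter matrix, which is how the statement defines it. Multiplying by the prior gives
\[
(2\pi)^{-Kn/2}\,c_\iw(\Psi_0,\nu_0)\,|\Sigma|^{-(\nu_0+n+K+1)/2}\exp\!\left(-\tfrac12\operatorname{tr}((\Psi_0+\bm S)\Sigma^{-1})\right).
\]
This is, up to constants, the kernel of $\iw(\Psi_0+\bm S,\nu_0+n)$. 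The same computation yields the conjugate update quoted after Theorem \ref{thm: IW post}.

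The third step is to integrate that kernel. The $\iw(\Psi_0+\bm S,\nu_0+n)$ density integrates to one, and $\Psi_0+\bm S\succ0$ with $\nu_0+n>K-1$, so the kernel integral equals $1/c_\iw(\Psi_0+\bm S,\nu_0+n)$. This gives the claimed ratio, where the exponent written $KT/2$ in the statement is $Kn/2$ with $T=n$ the number of time points.

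No step is a real obstacle. The only points needing care are, first, keeping the normalizing constant consistent with \eqref{eq: iw den}, reading $c_\iw$ as $|\Psi|^{\nu/2}/(2^{\nu K/2}\Gamma_K(\nu/2))$. Second, one must check that the reparametrization is exactly $\Psi_0=\Delta_0P_0\Delta_0$. The paper defines $\sigma_k=\Psi_{kk}$ but elsewhere uses $\sigma_k^2$ as the diagonal. I would follow the statement's convention $\Delta_0=\operatorname{Diag}(\Vec{\sigma_0})$, and note that the proof does not depend on this choice as long as $\Psi_0$ denotes the actual $\iw$ scale matrix.
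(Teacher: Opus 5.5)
Your proposal is correct and follows the same route as the paper. You combine the Gaussian likelihood with the $\mathcal{IW}(\Psi_0,\nu_0)$ density, recognize the kernel of $\mathcal{IW}(\Psi_0+\bm S,\nu_0+n)$, and integrate it to $1/c_{\mathcal{IW}}(\Psi_0+\bm S,\nu_0+n)$; your readings of $c_{\mathcal{IW}}$ as $|\Psi|^{\nu/2}/(2^{\nu K/2}\Gamma_K(\nu/2))$ and of $KT/2$ as $Kn/2$ are the right ones, and your write-up avoids the stray $k_{\mathcal{IW}}$ factor that remains in the paper's displayed chain after the kernels are merged.
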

The proof is provided in Appendix \ref{app: L1}. In practice, calculations should be performed on the logarithmic scale to avoid overflow and underflow, which can occur easily when $K$ or $n$ are large. Thus, we will use the formula:
\begin{equation}
    \log L(\x_{1:n} \mid \bm z = 0) = -\frac{KT}{2}\log (2\pi) + \log c_\iw(\Psi_0, \nu_0) - \log c_\iw(\Psi_0+S, \nu_0+n).
\end{equation}
Second, $L(\x_{1:n} \mid \bm z=1)$ does not admit a closed form because the density of $\siw$ is only defined up to a normalization constant, but it can be evaluated numerically. An adaptation of Algorithm 3 in \cite{jiangsiw25} can be used to obtain an estimate of $L(\x_{1:n} \mid \bm z=1)$ via self-normalized importance sampling with a theoretical guarantee of consistence. Although the resulting estimator is biased, the computational efficiency makes it possible to obtain sufficiently accurate estimates within a short amount of time. The adapted algorithm and the setting we use for the data sets in Section \ref{sec: app} are given in Appendix \ref{app: post inf}.


\section{Application on rats data set}\label{sec: app}

In this section, we apply the two developed methods: $\iw$ and $\iw/\siw_1$ on a real data set. We first report point estimation of the functional connectivity in Section~\ref{sec: point estim}. For the $\iw$ model, the point estimation is given by the posterior mode of $\bm R$, which is more robust. For the $\iw / \siw$ model, it is given by the posterior mean, because the posterior is bi-modal. We test different hyperparameter values and show their corresponding estimations to illustrate the impacts of the former. Secondly, to highlight the advantage of Bayesian inference over frequentist inference, especially on the uncertainty evaluation, we derive a procedure to detect regional pairs $(k,k')$ of significant connectivity based on the corresponding posterior distribution of $\bm R_{kk'}$ in Section~\ref{sec: detection}. We apply the same procedure on the posteriors of the competitor Bayesian model \cite{acharyyaBayesianHierarchicalModeling2023}. The detection results indicate superior performance of our methods. 
We present the details of the real data and the hyperparameter elicitation as follows. 


\paragraph{Rats data set.} 
The data\footnote{available at \url{https://zenodo.org/records/2452871}.} concern experiments to evaluate the influence of anesthetics on the brain functional connectivity of rats. Time series of $3600$ time points are extracted from resting state fMRI on brain areas defined in an atlas, consisting of 51 regions. After the preprocessing using wavelets implemented by the  R package $\mathtt{waveslim}$ \citep{achard2016multivariate}, we have $219$ time steps. 
In this work, we show the results over a dead rat and an live rat. 

For this real dataset, prior information on correlation values is available only at a
global level. We therefore consider the priors
$\mathcal{IW}(\P,\Vec{\sigma},\nu)$ and
$\iw/\siw_1(\eta,\P_r,\Vec{\sigma}_r,\nu_r,\ r=0,1)$ with reduced correlation
hyperparameter matrices, namely
\[
\P_{kk'} \equiv \rho,
\qquad
\P_{r,kk'} \equiv \rho_r,\quad r=0,1,\quad k\neq k'.
\]
We fix $\eta$ by $0.5$, other details are given as follows.
\paragraph{Hyperparameter elicitation of $\P$ through $\mathbb{E}(\bR_{kk'})$.}
For the live rat, we use the same prior belief as in
\cite{gorbachHierarchicalBayesianMixture2020, chenBayesianHierarchicalFramework2016}:
\begin{center}
\textit{only positive correlations correspond to reliable functional connections}.    
\end{center}
This prior belief can be translated into positive prior means for the induced correlations, which leads to
\[
\mathbb{E}(\bR_{kk'}) \approx \P_{kk'} = \rho >0,
\]
for the $\iw$ prior, and
\[
\mathbb{E}(\bR_{kk'})
\approx
\eta \P_{0,kk'} + 0.066(1-\eta)\P_{1,kk'}
=
0.5\rho_0 + 0.5\times 0.066\rho_1
>0,
\]
for the $\iw/\siw_1$ mixture prior. Since no precise prior information is available on the exact level of the correlations, we test several positive target values for $\mathbb{E}(\bR_{kk'})$, which correspond to different choices of $\rho$ and $\rho_r$, reported with their corresponding results in Sections \ref{sec: point estim} and \ref{sec: detection}. 
 
For the dead rat, we use instead the prior belief that:

\begin{center}
 \textit{no functional correlation should be favored a priori}.    
\end{center}
We therefore set $\P=I_K$ and $\P_r=I_K$, $r=0,1$, which gives for both priors 
\[
\mathbb{E}(\bR_{kk'}) \approx 0,\qquad k\neq k'.
\]

\paragraph{Hyperparameter elicitation of $\nu$ through $\mathbb{V}(\bR_{kk'})$.} The choice of $\nu$ is made after fixing $\P$, because the approximate formula for $\mathbb{V}(\bR_{kk'})$ depends on both $\P$ and $\nu$. In this application, we do not have specific prior information on the variability of the correlations. We therefore do not elicit a single value of $\nu$ from expert knowledge. Instead, we evaluate several values of $\nu$ and compare the resulting posterior estimates.

\paragraph{Choice of $\Vec{\sigma}$.}
One could treat $\Vec{\sigma}$ as an independent hyperparameter and elicit it from
additional expert knowledge, for instance through formulas involving
$\mathbb{E}(\bSigma_{kk})$. However, this would make model interpretation more
complicated. In this work, we instead fix $\Vec{\sigma}$ using the data and the chosen
value of $\nu$, following the strategy detailed in Section~\ref{sec: Designation sigma}.
This choice gives $\nu$ a more consistent interpretation: it controls both the prior
variance of the induced correlations and the relative weight of $\P$ in the posterior mean.

\paragraph{Combining prior elicitation with data-driven model selection.}
When prior information is not sufficient to determine all hyperparameters, as in the present application, one may complement elicitation with data-driven model selection\footnote{Note that, such tuning is driven by the observed data and may therefore be sensitive to outliers or other data-quality issues. Consequently, although information criteria can in principle be used to tune all hyperparameters, reliable expert knowledge should still be incorporated through prior elicitation whenever available. This helps obtain more robust inference by preventing the model from relying exclusively on possibly noisy or contaminated data.}. For example, information criteria adapted to Bayesian models, such as leave-one-out
cross-validation or WAIC \citep{vehtari2017practical}, can be used to compare models' predictive fit associated to different hyperparameter settings. 

\subsection{Point estimation of functional connectivity}\label{sec: point estim}
In this section, we report point estimates of the correlation matrix, defined as posterior
means. We first present the results obtained with the $\iw$ prior in
Figure~\ref{fig: rats}.

For the dead rat, the prior belief can be elicited precisely by setting $\rho=0$, since no
functional correlation is expected a priori. However, no prior information is available for
$\nu$. We therefore consider two values, $\nu=53$ and $\nu=83$, corresponding respectively
to a flatter and a more concentrated prior distribution on the induced correlations
$\bR_{kk'}$. We compare the two resulting models using the information criterion proposed
in \cite{vehtari2017practical}, and the best predictive fit is obtained for $\nu=83$.

Recall that Theorem~\ref{thm: IW post} implies that the posterior mean of the induced
correlation is approximately $\mathbf{P}_{n,kk'}$. Moreover, when the hyperparameter
$\sigma_k^2$ is fixed as $(\nu-K-1)\bm S_{kk}$ and $\P_{kk'} \equiv \rho$, Equation~\eqref{eq: sigma_sample_var}
yields
\begin{equation}\label{eq:tiredof naming}
 \mathbf{P}_{n,kk'} 
 = \frac{n_0}{n_0+n}\rho + \frac{n}{n_0+n}
 \frac{\bm S_{kk'}}{\sqrt{\bm S_{kk}}\sqrt{\bm S_{k'k'}}}, 
\end{equation}
where $n_0 = \nu - 52$ and $n = 219$. Thus, when $\nu=53$, the prior belief encoded in
$\P_{kk'}$ has almost no impact, and the posterior mean is nearly equal to the sample
correlation. By contrast, $\nu=83$ gives more weight to the prior matrix. Since
$\rho = 0$ for the dead rat, this leads to posterior mean correlations that are
more strongly shrunk toward zero.

For the living rat, the prior information on $\rho$ is only qualitative: reliable
functional connections are expected to correspond to positive correlations. We therefore
test two positive values, $\rho=0.3$ and $\rho=0.5$, together with the same two values of
$\nu$. As expected from Equation~\eqref{eq:tiredof naming}, for a fixed $\rho$, a larger
value of $\nu$ gives more weight to the prior and therefore leads to higher posterior
correlation levels. Conversely, for a fixed $\nu$, a larger value of $\rho$ also leads to
higher posterior correlation levels. Applying the same information criterion to the four
resulting models, the best predictive fit is obtained for $\rho=0.3$ and $\nu=83$. Moreover, the models with two positive $\rho$'s and larger $\nu$ both have better predictive fit than the models with nearly flat prior\footnote{
We use  the expected log pointwise predictive density
estimated by leave-one-out cross-validation \citep{vehtari2017practical}. Larger values
indicate better expected out-of-sample predictive performance. The corresponding values are:
\[
\begin{array}{c c c}
\hline
 & \nu = 53 & \nu = 83 \\
\hline
\rho= 0.3 & 6373.643 & 6536.627 \\
\rho= 0.5 & 6361.904 & 6511.489 \\
\rho= 0.7 & 6335.595 & 6463.318 \\
\hline
\end{array}
\]
The row $\rho_0=0.7$ corresponds to two additional hyperparameter settings included in the
model comparison but not displayed in Figure~\ref{fig: rats}.
}. 

For both rats, the models with $\nu=83$ fit the data better than those with
$\nu=53$, whose posterior means are almost identical to the empirical correlation matrix.
Although a direct comparison between the Bayesian estimator and the frequentist estimator
is not fully meaningful, with the former a posterior distribution whereas the latter a
single point estimate, the case $\nu=53$ provides a useful benchmark, since its posterior
mean is essentially the classical sample correlation. The fact that the best models are
those in which the posterior mean is more influenced by the prior belief illustrates the benefit of Bayesian inference when reliable expert knowledge is available. Moreover, for
the living rat, the fact that the models with both positive $\rho$'s and larger $\nu$ improve over
the nearly flat-prior benchmark implies that even coarse prior information can improve estimation. It also supports the discussion in Section~\ref{sec: siw} that highly accurate
approximations of prior moments are not always necessary, because simple directional prior
information may already be useful for posterior inference. 

\begin{figure}
\begin{tabular}{ccc}

\begin{minipage}[t]{0.05\linewidth} 
\vspace{0.5in}
\rotatebox[origin=c]{90}{\small Dead} \end{minipage}&
\begin{minipage}[t]{0.5\linewidth}\centering
{\small Flat prior (\(\nu=53\))\par}\vspace{2pt}

\includegraphics[width=0.48\linewidth]{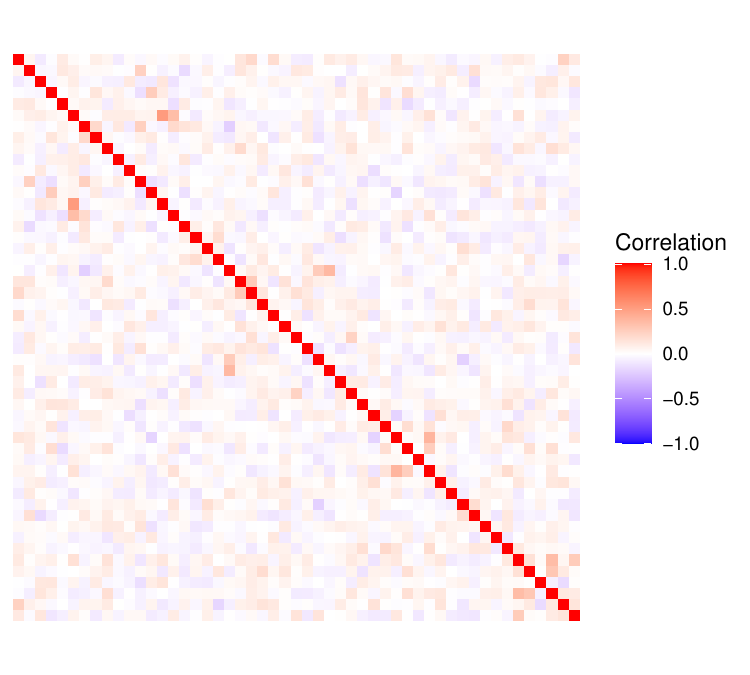}\hfill
\par

\vspace{1pt}
\begin{tabular}{@{}p{0.48\linewidth}@{}p{0.48\linewidth}@{}}
\hspace{0.3in}\scriptsize\(\rho_0=0\) 
\end{tabular}
\end{minipage}
&
\begin{minipage}[t]{0.5\linewidth}\centering
{\small Informative prior (\(\bm \nu=\bm{83}\))\par}\vspace{2pt}

\includegraphics[width=0.48\linewidth]{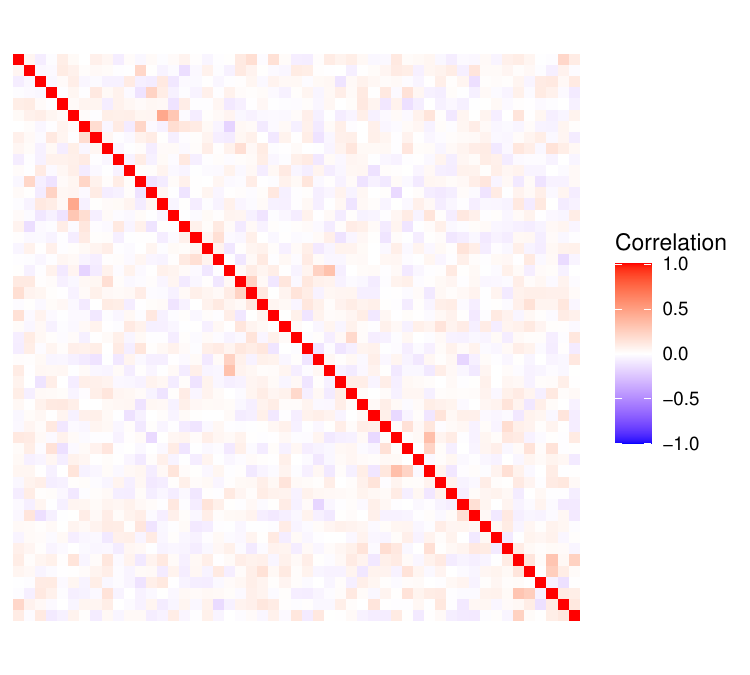}\hfill
\par

\vspace{1pt}
\begin{tabular}{@{}p{0.48\linewidth}@{}p{0.48\linewidth}@{}}
\hspace{0.3in}\scriptsize\(\bm{\rho_0=0}\) 
\end{tabular}
\end{minipage}
\\ 
\begin{minipage}[t]{0.05\linewidth} 
\rotatebox[origin=c]{90}{\small Live} 
\end{minipage}&
\begin{minipage}[t]{0.5\linewidth}\centering
\includegraphics[width=0.48\linewidth]{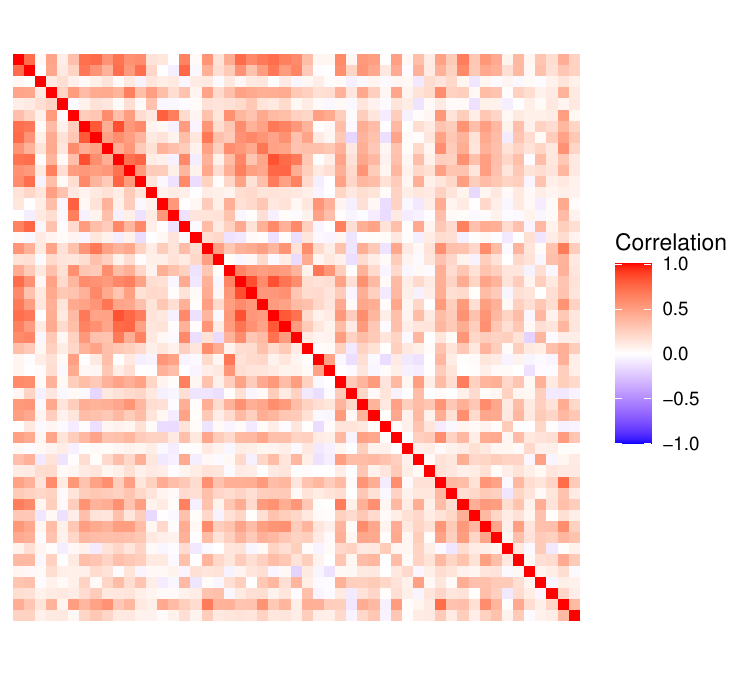}\hfill
\includegraphics[width=0.48\linewidth]{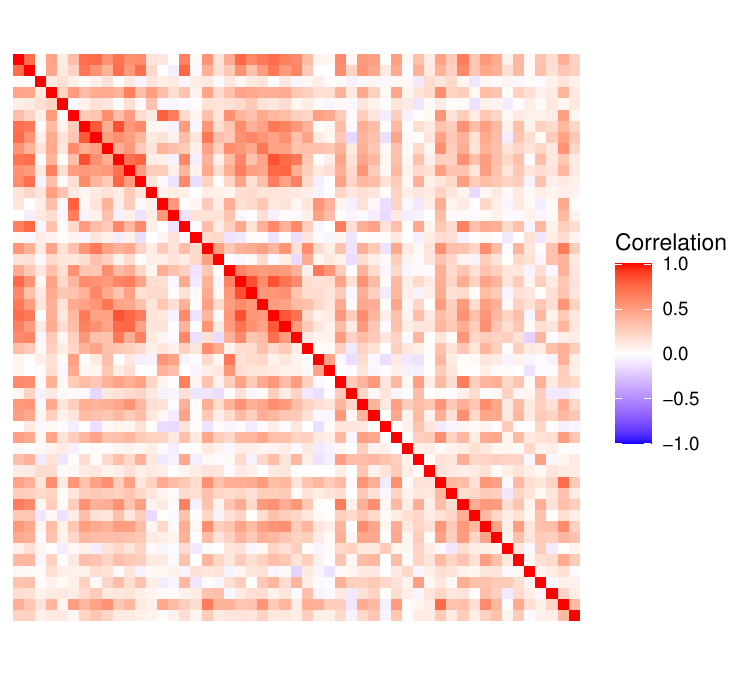}\par

\vspace{1pt}
\begin{tabular}{@{}p{0.48\linewidth}@{}p{0.48\linewidth}@{}}
\hspace{0.3in}\scriptsize\(\rho_0=0.3\) &\hspace{0.3in} \scriptsize\(\rho_0=0.5\)
\end{tabular}
\end{minipage}
&
\begin{minipage}[t]{0.5\linewidth}\centering

\includegraphics[width=0.48\linewidth]{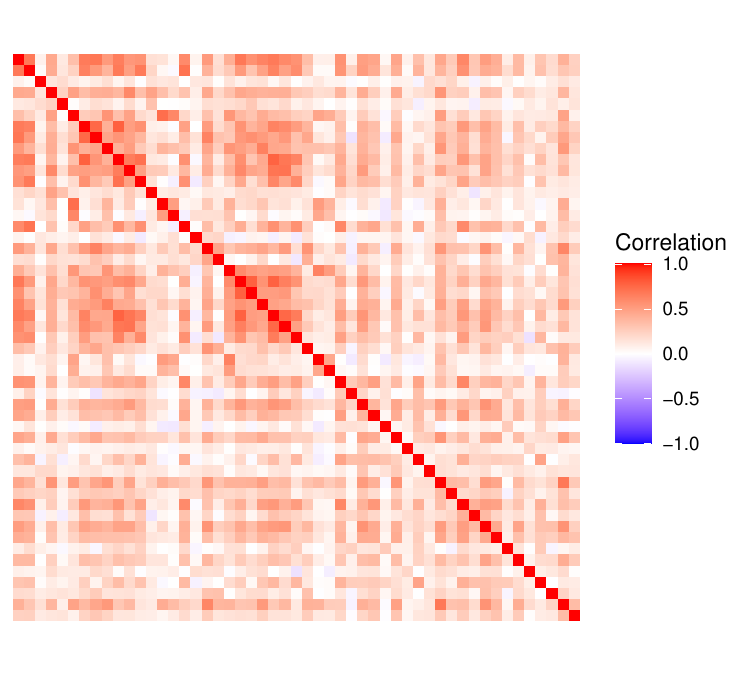}\hfill
\includegraphics[width=0.48\linewidth]{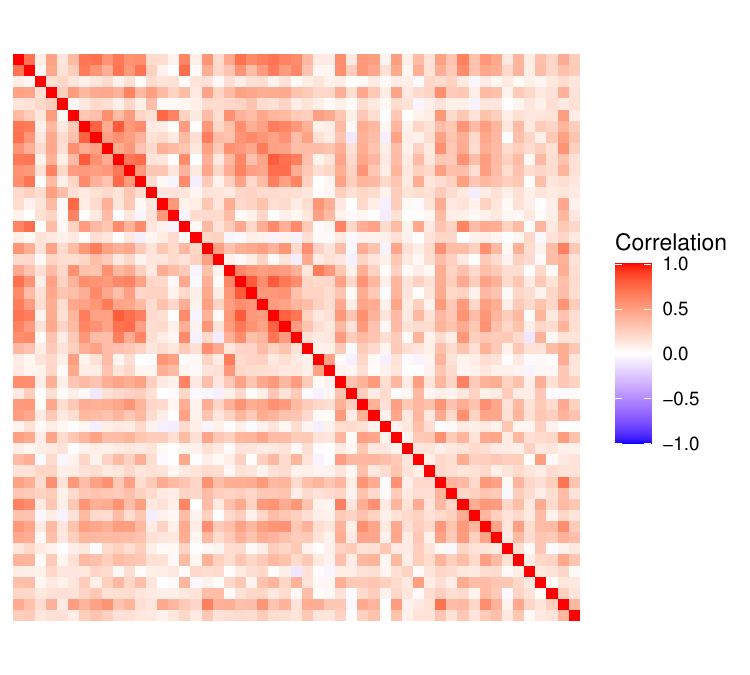}\par

\vspace{1pt}
\begin{tabular}{@{}p{0.48\linewidth}@{}p{0.48\linewidth}@{}}
\hspace{0.3in}\scriptsize\(\bm{\rho_0=0.3}\) &\hspace{0.3in} \scriptsize\(\rho_0=0.5\)
\end{tabular}
\end{minipage}

\end{tabular}
\vskip-0.1in
\caption{Posterior mean correlation matrices from the two rats. The hyperparameter settings in bold are considered better based on the information criteria value proposed in \cite{vehtari2017practical}, which always correspond to informative prior with $\nu = 83$ for both rats.}
 \label{fig: rats}
\end{figure}    

Next, we show the results of the $\iw/\siw_1$ model. We first present additional data preparation as follows. 

\paragraph{Additional data preparation for the $\iw/\siw_1$ model.} We select $20$ regions for both dead and live rats, recall that our analysis with numeric methods is illustrated with $K=20$. For the dead rat, we remove the regions that have a large sample correlation with other regions, possibly due to acquisition artefacts. For the live rat, we select such that the sample correlation between selected regions does not have many large values; otherwise, we observed failures of the mixture model by always updating $\eta_n$ as $1$, which means it only considers $\iw$ prior. 
The sample correlation between the 20 selected regions for the live rat is represented by the histogram in Figure \ref{fig: 20 regions alive}. 
\begin{figure}
    \centering
    \includegraphics[width=0.5\linewidth]{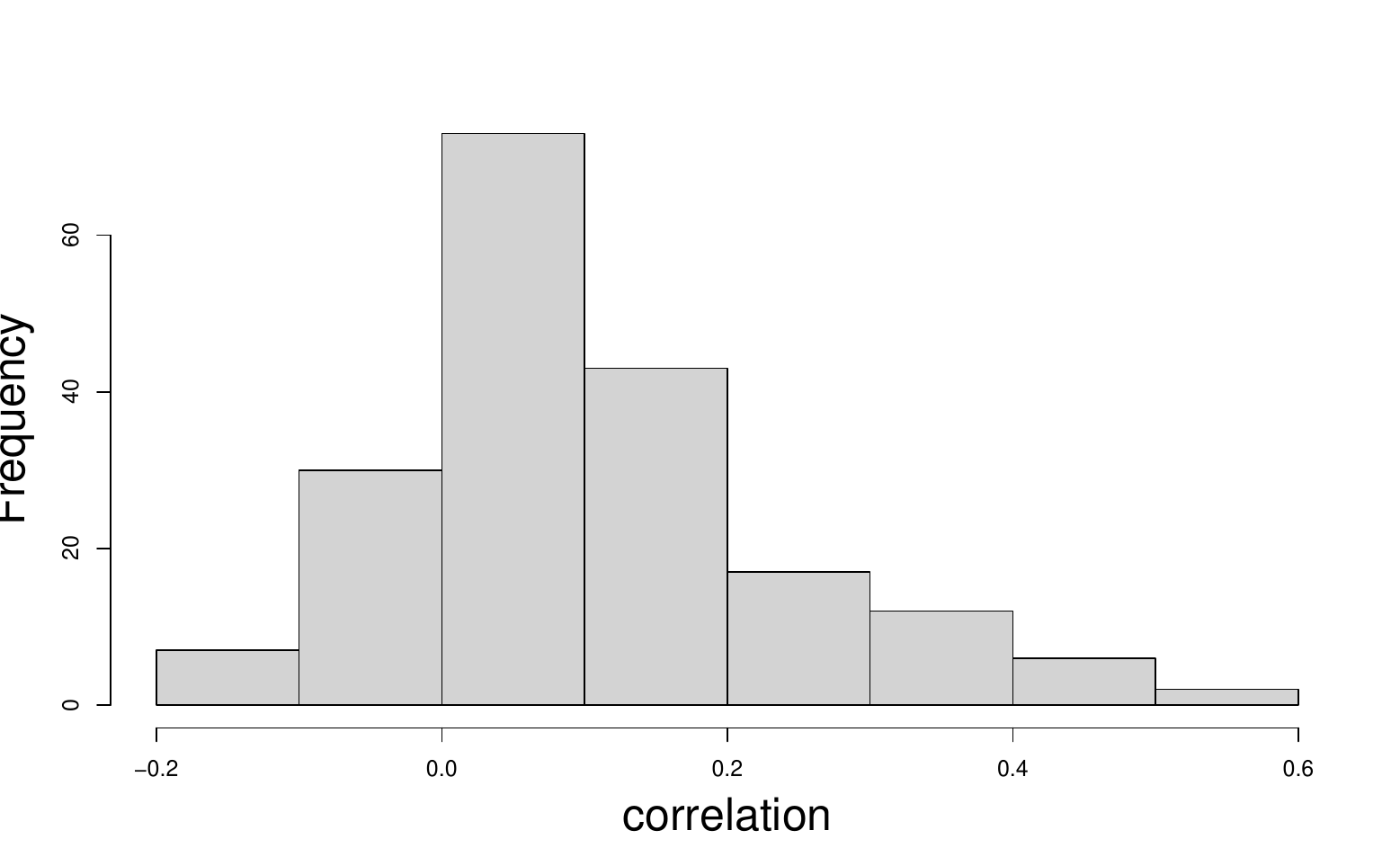}
    \caption{Sample correlations between the 20 selected regions of the live rat.}
    \label{fig: 20 regions alive}
\end{figure}
We rescale the raw data, because the original sample variances are very small ($< 0.1$), which can not reach variances $\mathbb{E}(\bSigma_{kk'})$ produced by $\siw_1$. All $20$ components are rescaled so that they have the same sample variance, namely, $\bm S_{kk} \equiv \bm S_{k'k'}$. The new data sample variance is chosen such that $\sigma_{1k} \equiv 5$. For the dead rat, we set $\rho_0 = \rho_1 = 0$, and test different values of $\nu_1$ and $\nu_2$. For the live rat, we test different values of $\rho_0$ and $\rho_1$ such that 
$\mathbb{E}(\bR_{kk'})
\approx 
0.5\rho_0 + 0.5\times 0.066\rho_1
>0$, with a range of values of $\nu_0$ and $\nu_1$. The details on the posterior sampling are in Appendix \ref{app: post inf}. 


The results of the mixture model are presented in Figure~\ref{fig:iwsiw_mean_both}.
For the dead rat, we considered four values for the $\iw$ degrees of freedom,
$\nu_0\in\{22,32,42,52\}$, combined with four values for the $\siw_1$ degrees of freedom,
$\nu_1\in\{10,15,18,36\}$. For all the resulting $16$ mixture models, the posterior weight
assigned to the $\iw$ component is equal to $0$, meaning that the posterior distribution
coincides with the $\siw_1$ posterior. Since all these models lead to the same qualitative
pattern, Figure~\ref{fig:iwsiw_mean_both} only displays the heat map corresponding to
$\nu_0=52$ and $\nu_1=10$. We see that, thanks to the $\siw_1$ component, the posterior mean of the mixture provides an ``perfect" estimation of the dead rat connectivity, which is expected to be absent. We also compare the tested
models using the predictive criterion of \cite{vehtari2017practical}. Since the posterior collapses onto the $\siw_1$ component in all cases, the predictive fit is essentially
driven by $\nu_1$. It increases with $\nu_1$, with the best values obtained for
$\nu_1=36$.

For the living rat, Figure~\ref{fig:iwsiw_mean_both} reports one representative
hyperparameter setting, namely
\[
\rho_1=0.3,\qquad \nu_1=40,\qquad \rho_0=0.5,\qquad \nu_0=54.
\]
Similar qualitative patterns are obtained for other choices of $\rho_r$ and $\nu_r$,
$r=0,1$. The corresponding results are reported in Appendix~\ref{sec: alive}. Figure~\ref{fig:iwsiw_mean_both} shows that the posterior mean of the mixture is a compromise between the posterior means obtained when each component prior is fitted separately under a Gaussian likelihood. In particular, the $\iw$ component improves the performance of the $\siw_1$ model by mitigating the excessive shrinkage induced by $\siw_1$ alone.

Together, the two rat examples illustrate that the mixture model broadens the range of data scenarios in which the two component priors can be effectively applied.

\begin{figure}[htbp]
    \centering

    \begin{subfigure}[t]{\textwidth}
        \centering
        \hskip-0.4in$\iw/\siw_1$ \hspace{0.8in} $\iw$ \hspace{1.2in} $\siw_1$

        \includegraphics[width=0.3\linewidth]{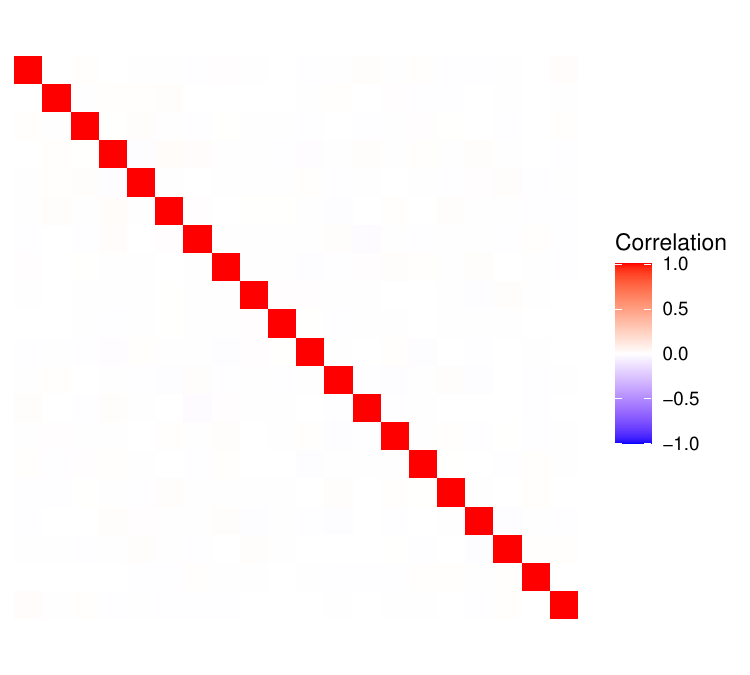}\hspace{-0.3in}
        \includegraphics[width=0.3\linewidth]{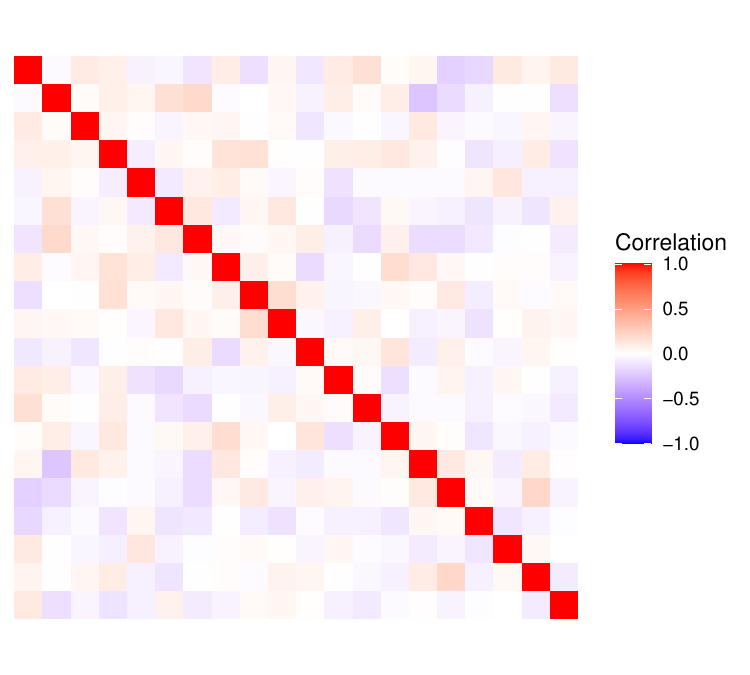}\hspace{-0.3in}
        \includegraphics[width=0.3\linewidth]{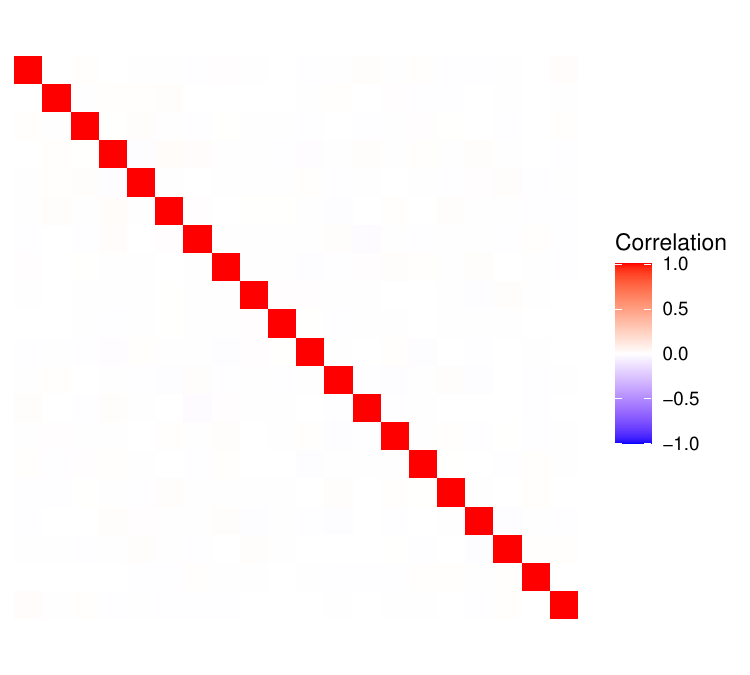}

        \caption{\textit{Dead rat}. Left: mixture model defined by the $\iw/\siw_1$ prior with $\rho_1 = 0, \nu_1 = 10, \rho_0 = 0, \nu_0 = 52$. Middle: individual model defined by the $\iw$ prior with $\rho_0 = 0, \nu_0 = 52$. Right: individual model defined by the $\siw_1$ prior with $\rho_1 = 0, \nu_1 = 10$. The posterior mixture weight is $\eta_n = 0$. }
        \label{fig:iwsiw_mean_dead}
    \end{subfigure}

    \vspace{0.6em}

    \begin{subfigure}[t]{\textwidth}
        \centering
        \hskip-0.4in$\iw/\siw_1$ \hspace{0.8in} $\iw$ \hspace{1.2in} $\siw_1$

        \includegraphics[width=0.3\linewidth]{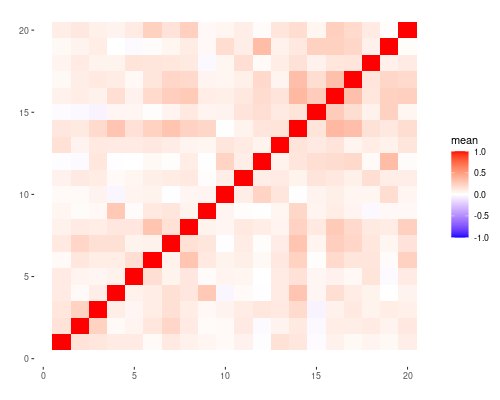}\hspace{-0.3in}
        \includegraphics[width=0.3\linewidth]{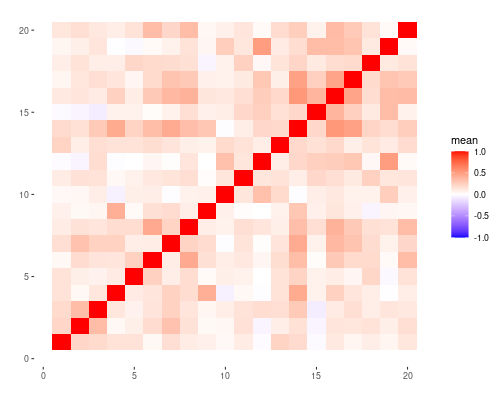}\hspace{-0.3in}
        \includegraphics[width=0.3\linewidth]{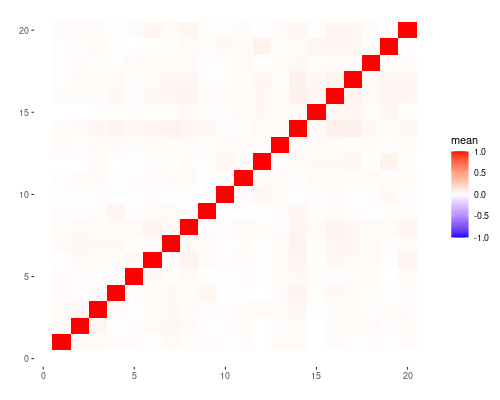}

        \caption{\textit{Live rat}. Left: mixture model defined by the $\iw/\siw_1$ prior with $\rho_1 = 0.3, \nu_1 = 40, \rho_0 = 0.5, \nu_0 = 54$. Middle: individual model defined by the $\iw$ prior with $\rho_0 = 0.5, \nu_0 = 54$. Right: individual model defined by the $\siw_1$ prior with $\rho_1 = 0.3, \nu_1 = 40$. The posterior mixture weight is $\eta_n = 0.64$. }
        \label{fig:iwsiw_mean_alive}
    \end{subfigure}

    \caption{Posterior mean correlation matrices for dead and live rats.}
    \label{fig:iwsiw_mean_both}
\end{figure}

\subsection{Detection of regional pairs of significant connectivity}\label{sec: detection}
The previous results reflect how Bayesian inference can help incorporate expert knowledge when available. However, the induced point estimation does fully show the advantages of posterior distributions, we also need to consider uncertainty evaluation and the richer post-fitting inference. We propose to explore the posterior distributions with the inference of significant connectivity, based on the notion of credible set. The procedures for $\iw$ and the mixture prior differ slightly. We first present the procedure for the $\iw$ model as follows and its results. 

\bigskip
\noindent
\textbf{Edge detection.}
\begin{enumerate}
    \item \textit{Retrieve the posterior distribution of $\bR_{kk'}$.} We obtain first samples of covariance matrices from an $\iw$ posterior. Thea correlation sample from each covariancea correlation sample. The empirical distribution of correlation samples represent the posterior marginal distribution of $\bR_{kk'}$.
    \item \textit{Determining based on credible sets.} We construct the shortest credible set of probability 0.9 from each posterior distribution of $\bR_{kk'}$, then check if $0$ falls in the set. If no, a ``significant" edge is detected. An illustration of this step is given in Figure \ref{fig: detection}.
\begin{figure}[ht]
    \centering    \includegraphics[width=\linewidth]{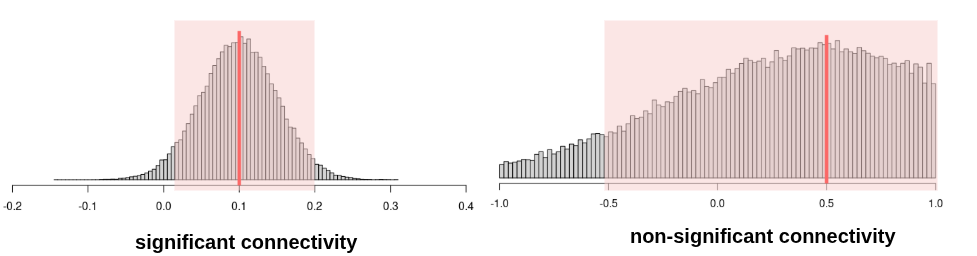}
    \caption{Posterior distributions of two $\bR_{kk'}$ are shown. By taking into account the uncertainty that is variance, Bayesian inference gives a more robust result than thresholding point estimates. }
    \label{fig: detection}
\end{figure}
\end{enumerate}
 
Figure \ref{fig: competitor} shows the results of edge detection from the $\iw$ model. We also show the detection results from the competitor \cite{acharyyaBayesianHierarchicalModeling2023}. Since their observations/samples are covariance matrices, we use the posterior predictive distributions of the induced correlation matrices. The covariance samples are calculated using the raw time series. Especially, since their model is only able to deal with small dimensions in real data context, we reduced the problem dimension to the one considered in \cite{acharyyaBayesianHierarchicalModeling2023}, which are 7 regions.
We use the same way to elicit hyperparameters as in the paper. 
\begin{figure}
    \centering
\includegraphics[width=0.25\textwidth]{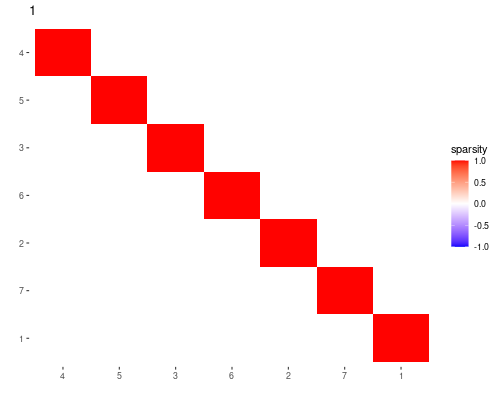}
    \hskip -0.25in
\includegraphics[width=0.25\textwidth]{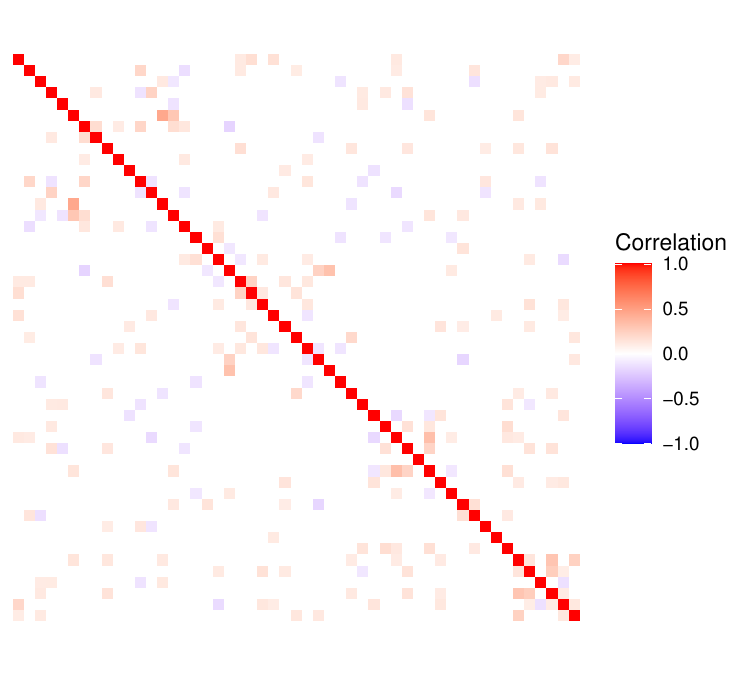}
    \hskip -0.25in
\includegraphics[width=0.25\textwidth]{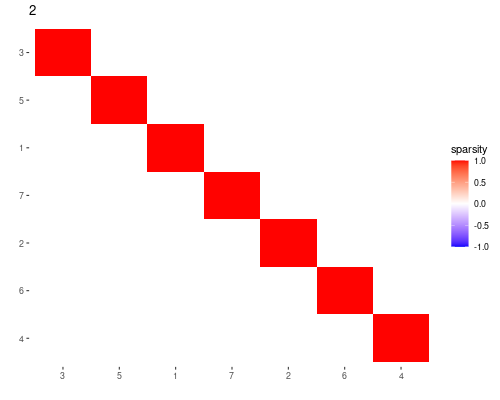}
    \hskip -0.25in
\includegraphics[width=0.25\textwidth]{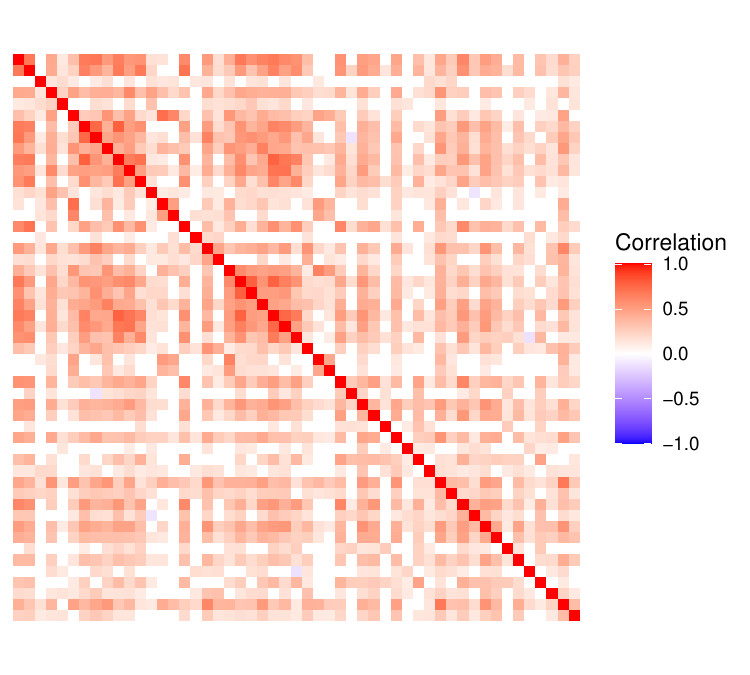}\\
\scriptsize
(a) \hspace{1.3in} (b) \hspace{1.3in} (c) \hspace{1.3in} (d)
\vskip-0.1in
\caption{Posterior mode correlation matrices masked by insignificant pairs, of the dead rat (a,b) and the live rat data (c,d). (a,c) the competitor model considering only the first 7 regions of the two rats; (b) the proposed model with $\rho = 0, \nu = 29$; (d) the proposed model with $\rho = 0.3, \nu = 29$. }
 \label{fig: competitor}
\end{figure}  
For both dead and live rats, all edges are considered not significant by the competitor. This is because the variance of each edge is too large, possibly caused by bad quality of covariance samples, which are at the same time correlated, bad-conditioned, and with small sample size. By contrast, our model is able to give reasonable results.

We also performed edge detection for the $\iw/\siw_1$ model. However, the standard procedure requires modification because it relies on examining the shortest credible set (i.e., the Highest Posterior Density interval). Since the posterior distribution under the $\iw/\siw_1$ mixture prior can be severely bimodal, its shortest credible set may consist of disconnected intervals or erratically encompass a secondary mode even when its associated probability mass is exceedingly small. This mathematical instability can mislead inference, particularly when testing whether a null value of zero is excluded from the interval.

To circumvent this issue, we propose relying on the credible set of a transformed sample mean estimator of the posterior, denoted by $\bM_{kk'}(N)$ and defined as:
\begin{equation}\label{eq: scaled sample mean}
\begin{aligned}
   \bM_{kk'}(N) &=  \sqrt{N}(\Bar{\bR}_{kk',N} - \mathbb{E}(\bR_{kk',n})) + \mathbb{E}(\bR_{kk',n}), \\
   &\mbox{where } \Bar{\bR}_{kk',N} = \frac{1}{N}\sum\limits_{n=1}^N \bR_{kk',n},   \\
   &\mbox{ and } \bR_{kk',n} \stackrel{iid}{\sim}  \mbox{posterior distribution of }\bR_{kk'}, \quad n = 1, \ldots, N.
\end{aligned}
\end{equation}
This transformation elegantly leverages the Central Limit Theorem to construct a moment-matched Gaussian approximation of the marginal posterior. By the Central Limit Theorem, the distribution of the unscaled sample mean $\Bar{\bR}_{kk',N}$ converges to a Gaussian. Centering this term and scaling it by $\sqrt{N}$ exactly recovers the true overall posterior variance, $\text{Var}(\bR_{kk'})$, while preserving the exact posterior mean. Consequently, the distribution of $\bM_{kk'}(N)$ perfectly mimics the posterior's first two moments but forces a unimodal, Gaussian shape. Using this smoothed approximation yields continuous, strictly contiguous credible intervals, resulting in robust, stable edge detection.

The adapted procedure is given as follows.
\noindent
\paragraph*{Edge detection.}
\begin{enumerate}
    \item \textit{Retrieve the posterior distribution of $\bR_{kk'}$.} We obtain first samples of covariance matrices from the posterior distribution in Proposition \ref{propIWSIWpost}. Then, we derive from each covariance sample a correlation sample, denoted by $\bR_{kk',n}, \; n=1,\ldots, N$. 
    \item \textit{Deriving the distribution of $\bM_{kk'}(N)$.} 
    \begin{enumerate}
        \item Estimating $\mathbb{E}(\bR_{kk',n})$ using all $N$ samples by $\sum\limits_{n=1}^N \bR_{kk',n}/N$.
        \item Partitioning the $N$ samples in $L$ batches: $\{\bR_{kk',n_l}: n_l = (l-1)S + 1, \ldots, lS\}$, with $S = \lfloor N/L\rfloor$.
        \item \label{edge detection 2c} Calculating 
        $\bM_{kk'}(S)$ using Equation \eqref{eq: scaled sample mean} for each of the $L$ batches.
    \end{enumerate}
    \item \label{edge detection 3} \textit{Determining based on credible sets.} Construct the shortest credible set of probability 0.9 from the empirical distribution of the $L$ $\bM_{kk'}(S)$, then check if $0$ falls in the set. If no, a ``significant" edge is detected.
\end{enumerate}
\begin{figure}[!htbp]
    \centering

    \begin{subfigure}[t]{\textwidth}
        \centering
        \hskip-0.4in$\iw/\siw_1$ \hspace{0.8in} $\iw$ \hspace{1.2in} $\siw_1$

        \includegraphics[width=0.3\linewidth]{images/post_corr_mean_deadrat_rho0_0_rho1_0_nu0_52_nu1_10.pdf}\hspace{-0.3in}
        \includegraphics[width=0.3\linewidth]{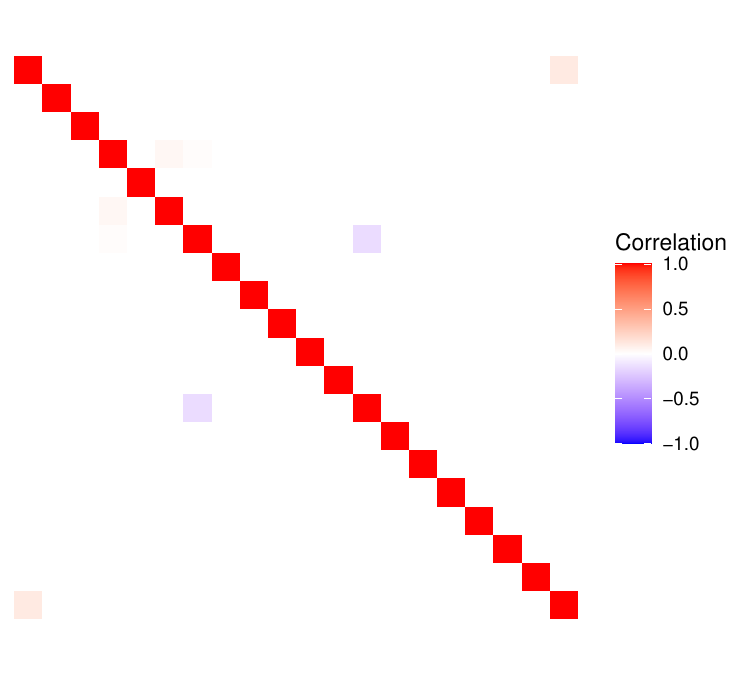}\hspace{-0.3in}
        \includegraphics[width=0.3\linewidth]{images/post_corr_mean_deadrat_rho0_0_rho1_0_nu0_52_nu1_10.pdf}

        \caption{\textit{Dead rat.} Left: mixture model defined by the $\iw/\siw_1$ prior with $\rho_1 = 0, \nu_1 = 10, \rho_0 = 0, \nu_0 = 50$. Middle: individual model defined by the $\iw$ prior with $\rho_0 = -0.05, \nu_0 = 174$. Right: individual model defined by the $\siw_1$ prior with $\rho_1 = 0.9, \nu_1 = 18$.}
        \label{fig:dead_rat}
    \end{subfigure}

    \vspace{0.6em}

    \begin{subfigure}[t]{\textwidth}
        \centering
        \hskip-0.4in$\iw/\siw_1$ \hspace{0.8in} $\iw$ \hspace{1.2in} $\siw_1$

        \includegraphics[width=0.3\linewidth]{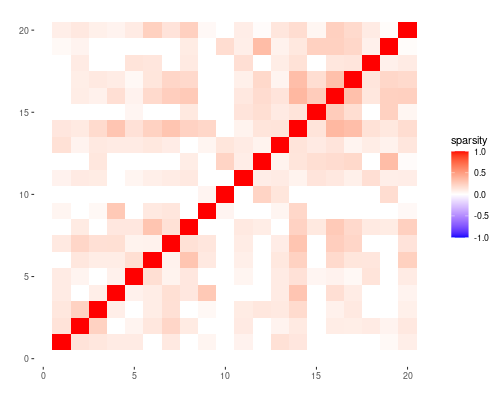}\hspace{-0.3in}
        \includegraphics[width=0.3\linewidth]{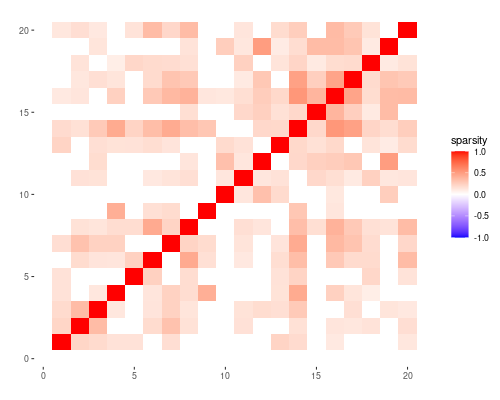}\hspace{-0.3in}
        \includegraphics[width=0.3\linewidth]{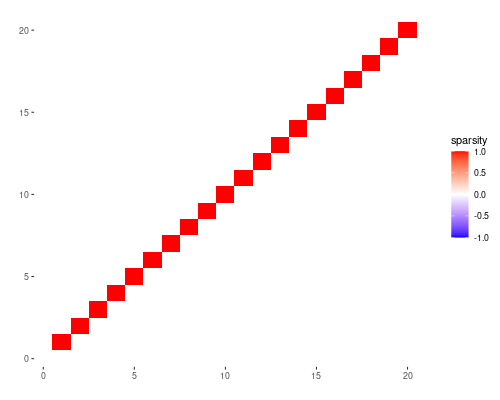}

        \caption{\textit{Alive rat}. Left: mixture model defined by the $\iw/\siw_1$ prior with $\rho_1 = 0.3, \nu_1 = 40, \rho_0 = 0.5, \nu_0 = 54$. Middle: individual model defined by the $\iw$ prior with $\rho_0 = 0.5, \nu_0 = 54$. Right: individual model defined by the $\siw_1$ prior with $\rho_1 = 0.3, \nu_1 = 40$. The posterior mixture weight is $\eta_n = 0.64$. We did not reorder the regions as in Figure \ref{fig: rats} to facilitate the comparison across results.}
        \label{fig:alive_rat}
    \end{subfigure}

    \caption{Edge detection results for dead and live rats (top/bottom).}
    \label{fig:rats_edge_detection}
\end{figure}

The corresponding algorithm for this procedure is given in Appendix \ref{app: detection mix}. The detection results of both rats are given in Figure \ref{fig:rats_edge_detection}. Subfigure \ref{fig:dead_rat} shows that for the dead rat, the result from the mixture prior is as good as the one from the $\siw_1$ component, and outperforms the $\iw$ prior. For the live rat, in general $\iw$ model and the mixture model give similar results, such as the isolation of region 10 corresponding to right hippocampus. Meanwhile, as the mean correlations shrink, as expected, some significant links in the $\iw$ models disappeared such as $7-11$ (Left secondary motor cortex - Right periaqueductal gray), $6-11$ (Left entorhinal cortex - Right periaqueductal gray). However the change in the detection result is not always linear with respect to the change in the mean. For example, the mixture deletes some large correlation values in the $\iw$ model for example $1-20$ (Right amygdalo-piriform area - Left paratenial thalamic nucleus), $3-6$ (Right temporal association cortex - Left entorhinal cortex), and $3-8$ (Right temporal association cortex - Left temporal association cortex). These correlations correspond to larger posterior variances in the mixture model than in the $\iw$ model, so that their $0.9$ credible interval covers $0$. By contrast some small correlations are considered significant by the mixture whereas they do not appear in the result of $\iw$ model, for example $1-11$ (Right amygdalo-piriform area - Right periaqueductal gray), $1-18$ (Right amygdalo-piriform area - Left periaqueductal gray), and $1-19$ (Right amygdalo-piriform area - Left basal forebrain). Some of these changes proposed by the mixture model are consistent with the neuroscientific knowledge. 
Therefore, the mixture model is not a simple weighted average of the component models, which has its own advantage and offers another choice of prior. 

To assess the robustness of the mixture model, it is instructive to examine the posterior variance of the correlation under a standard Inverse-Wishart prior. Recall the approximation for the posterior variance of the Inverse Wishart component:
\begin{equation}
    \mathbb{V}_0(\bR_{kk'} \mid \x_{1:n}) \approx \frac{(1- (\mathbf{P}_{0, n,kk'})^2)^2}{\nu_{0,n} - K + 1}, \quad \text{where } \nu_{0,n} = \nu_0 + n.
\end{equation}
In our plug-in empirical Bayes setting, the updated mean parameter is formed as a convex combination of the prior and the sample correlation:
\begin{equation}
 \mathbf{P}_{0,n,kk'} 
 = \frac{n_0}{n_0+n}\P_{0,kk'} + \frac{n}{n_0+n} \frac{\bm S_{kk'}}{\sqrt{\bm S_{kk}}\sqrt{\bm S_{k'k'}}}, \quad \text{with } n_0 = \nu_0 - K - 1.
\end{equation}
This structural form reveals a critical vulnerability. As the sample correlation grows exceptionally large (approaching $\pm 1$), it drags the posterior mean $\mathbf{P}_{0,n,kk'}$ toward the extremes. Consequently, the numerator of the variance, $(1- (\mathbf{P}_{0, n,kk'})^2)^2$, rapidly collapses toward zero. This phenomenon is particularly problematic in high-dimensional or low-sample-size regimes, where outliers or severe collinearity can produce spuriously high sample correlations. In such scenarios, the standard Inverse Wishart model suffers from a double penalty: it overestimates the strength of the correlation while simultaneously reporting an artificially near-zero posterior variance. This overconfidence easily results in the detection of false-positive edges.

By contrast, the mixture model fundamentally mitigates this risk by incorporating structural uncertainty. By the law of total variance, the marginal posterior variance under the mixture naturally decomposes into the weighted average of the component variances plus the variance between the component means (see Property \ref{prop: iw siw post corr}):
\begin{equation}
\mathbb{V}(\bR_{kk'} \mid \x_{1:n} ) \approx \eta_n \mathbb{V}_0 + (1-\eta_n)\Big(0.09 + \exp(-0.23\nu_{1,n} - 1.55)\Big) + \eta_n(1-\eta_n)(\mathbf{P}_{0,n,kk'})^{2}.
\end{equation}
The crucial element here is the strictly positive between-component variance term, $\eta_n(1-\eta_n)(\mathbf{P}_{0,n,kk'})^{2}$, which arises because the expected prior correlation for the Shrinkage Inverse Wishart component is centered at zero (see subsection \ref{app: mean var mix}). If an outlier artificially inflates the sample correlation and drives $\mathbf{P}_{0,n,kk'}$ toward $\pm 1$, the standard Inverse Wishart variance ($\mathbb{V}_1$) still shrinks, but the between-component variance heavily penalizes this extreme shift by growing proportionally to the square of the mean. Consequently, this inflation term acts as an intrinsic safety mechanism, artificially widening the credible intervals when the data conflict with the mixture components. Thus, the mixture model does not act merely as a simplistic weighted average of two priors; it introduces a robust regularization effect that actively suppresses overconfident false discoveries.


\begin{remark}
In general, the shrinkage effect on correlation values induced by $\siw_1$ is particularly beneficial when the sample correlations exhibit abnormally extreme values close to $\pm 1$. This may occur in the presence of outliers and/or in high-dimensional regimes such as $n \approx K$. In such cases, shrinking correlation values helps prevent extreme estimates and typically yields more robust inference. This robustness brought by shrinking the eigengap of the covariance matrix, hence shrinking both covariance and induced correlation entries, shares the same spirit as classical frequentist regularization methods, such as ridge-type corrections and the Ledoit--Wolf estimator, which replace $\bm S$ by $\bm S+\lambda I_K$. Compared with these methods, which shrink matrix entries in a linear fashion, the density reweighting in the mixture is more sophisticated and can bring additional advantages, such as a more robust posterior variance as illustrated above. 
\end{remark}
\begin{remark}
In settings where data quality is limited (e.g., due to outliers and/or small sample size), one may wish to enforce the shrinkage effect of $\siw_1$. In this case, updating mixture weight by the Bayes rule can be inconvenient, since it can drive the weight toward $1$ and effectively eliminate the $\siw_1$ component. One solution is to consider a fixed-weight mixture posterior (i.e., without updating $\eta$), which can be formalized within a generalized Bayes framework. We leave further discussion to the perspectives.
\end{remark}


\section{Conclusion and perspectives}
In this work, we introduce new Bayesian models to infer a single subject’s functional connectivity graph from regional fMRI time series. These models provide a posterior distribution for each pair of regions, thereby quantifying the uncertainty associated with their functional connectivity. Using these posterior distributions, we derived point estimates of the connectivities, yielding a fully connected undirected weighted graph. We also proposed a procedure to identify significant connections, resulting in a sparse graph representation. Thanks to the explicit links between the prior hyperparameters and the resulting mean and variance, the proposed priors offer greater control over the shape and location of the distributions. This facilitates the incorporation of domain-specific expert knowledge. Furthermore, we investigated how the prior hyperparameters affect the posterior mean and variance, and numerical experiments on real datasets validated the theoretical findings. These results illustrate three key benefits of Bayesian modeling: uncertainty quantification, richer post-fitting inference, and flexible integration of expert information. Beyond knowledge elicitation, the prior also plays an important regularizing role as remarked previously. This makes Bayesian estimation naturally robust when the data quality is limited such as in high-dimensional settings and/or outliers are present. 

From an application perspective, most existing Bayesian models target group-level analyses. The proposed model therefore enriches the methodological landscape for single-subject analysis. Within the specific subcategory of static functional connectivity network inference from resting-state fMRI for a single subject, our approach outperforms the only existing method \citep{acharyyaBayesianHierarchicalModeling2023}, both in terms of estimation quality and the dimensionality of the problems it can handle.

From a statistical viewpoint, we introduced two new priors for correlation matrices, based on the $\iw$ distribution and an $\iw$–$\siw_1$ mixture. These priors allow explicit control over the marginal prior distributions of individual correlation entries, while remaining highly computationally convenient when coupled with a Gaussian likelihood.
These methodologies can extend beyond the domain of functional connectivity, since correlation is widely used in other fields to analyze dependence, in particular in combination with thresholding to construct a graph, such as in portfolio allocation \citep{pafka2004estimated}, financial networks \citep{emmert2010influence}, genetic networks \citep{butte2000discovering}, and climate networks \citep{oliveira2023networks}. Our inference models, together with the edge detection scheme, address the need for inference on correlation-based networks while allowing an easy integration of prior knowledge.
Among these applications, some domains often face difficulties related to high dimensionality, as in genetic networks, or to the presence of outliers, as in portfolio allocation. In such settings, the robustness of Bayesian methods induced by the regularization effect of the prior becomes a major advantage. Overall, the proposed methods offer a versatile tool for inferring both correlation structures and the networks derived from them, with potentially wide applicability.

Several directions for future work are of interest. A first one is to further investigate the \emph{fixed-weight mixture posterior}. As discussed at the end of Section~\ref{sec: detection}, this approach prevents the $\siw_1$ component from vanishing under the standard Bayes rule. This feature is particularly desirable in low-quality data regimes (e.g., outliers and/or $n\approx K$), where maintaining a non-negligible $\siw_1$ contribution can shrink abnormally large sample correlations and thus yield more robust inference than a pure $\iw$ model. More specifically, the fixed-weight mixture posterior can be defined through a generalized Bayes rule\footnote{Recall that the classical Bayesian posterior can be characterized as the minimizer of
\begin{equation}\label{eq:J_classical}
q^\star = \argmin_{q \in \mathcal{P}}\E_{q \sim \bSigma}[-\log p( X\mid \bSigma)]\;+\;\mathcal{KL}(q\|\pi),
\end{equation}
where $\mathcal P$ denotes the set of densities on $\mathbb S_{++}^K$, $\pi$ is the prior and $p(X\mid \bSigma)$ is the likelihood. The data term $\E_{q \sim \bSigma}[-\log p( X\mid \bSigma)]$ drives the posterior update.}:
\begin{equation}\label{eq:J_q_only}
q^\star = \argmin_{q \in \mathcal Q_\eta}\E_{q \sim \bSigma}[-\log p(X\mid \bSigma)]\;+\;\mathcal{KL}_\eta(q\|\pi_{\iw},\pi_{\siw_1}),
\end{equation}
where
\begin{equation}\label{eq:mixture_KL_def}
\mathcal{KL}_\eta(q\|\pi_{\iw},\pi_{\siw_1})
\;=\;
\eta\,\mathcal{KL}(q_0\|\pi_{\iw})+(1-\eta)\,\mathcal{KL}(q_1\|\pi_{\siw_1}),
\end{equation}
$\pi_{\iw},\pi_{\siw_1}$ are respectively the densities of $\mathcal{IW}(\P_0, \Vec{\sigma}_0, \nu_0)$ and $\siw_1(\P_1, \Vec{\sigma}_1, \nu_1)$, and
\begin{equation}\label{eq: Q_eta}
\mathcal Q_\eta \;:=\;\Big\{q:\ q(\Sigma)=\eta q_0(\Sigma)+(1-\eta)q_1(\Sigma),\ \ q_0,q_1\in\mathcal P\Big\}.
\end{equation}
The optimization problem constrains the posterior distribution to have a form of mixture with fixed weight $\eta$, and assumes a priori via the $\mathcal{KL}$ term that the two components are $\iw$ and $\siw_1$. One can show that the resulting generalized posterior is an $\iw/\siw_1$ mixture with fixed weight $\eta$ and component-wise classical posteriors:
\[
q^\star(\Sigma)\;=\;\eta\,\pi_{\iw}(\Sigma\mid X)\;+\;(1-\eta)\,\pi_{\siw_1}(\Sigma\mid X),
\]
where $\pi_{\iw}(\Sigma\mid X)$ and $\pi_{\siw_1}(\Sigma\mid X)$ are respectively the densities of $\mathcal{IW}(\bm\P_{0,n}, \Vec{\bm \sigma}_{0,n}, \nu_{0,n})$ and $\siw_1(\bm\P_{1,n}, \Vec{\bm \sigma}_{1,n}, \nu_{1,n})$ defined in Proposition~\ref{propIWSIWpost}.%
 Such generalized posterior allows to reuse the results on $\iw$ and $\siw_1$ developed in this paper, and can be potentially a promising direction.

Secondly, as noted above, the subcategory of single-subject static functional connectivity network inference remains largely unexplored in Bayesian frameworks and warrants further attention. More realistic data scenarios could also be considered, such as learning directly from voxel-level time series rather than aggregated regional signals. Finally, comparing two single subjects and identifying the subregions that contribute most to their differences is another promising topic within this subcategory.

\bibliographystyle{agsm}
\bibliography{main}

\newpage

\appendix
\section{Proofs}
\subsection{Proposition \ref{prop: IW prior}}\label{app: proof mean var iw}
We prove a more generic result indicated by Remark \ref{rem: mean var iw}, whose complete statement is recalled below. 

\noindent
\textbf{Proposition 1} (Extended version)\label{prop: IW prior generic}
\textit{Let $\bSigma \sim \mathcal{IW}(\P, \Vec{\sigma},\nu)$, and denote $\bR = R(\bSigma)$.
Then we have for $k\neq k'$ and as $\nu - K \rightarrow \infty$:
 \begin{enumerate}
 \vspace{-0.05in}
 \item $\bR \rightarrow \P$ in probability,
    \item $ \mathbb{E}(\bR_{kk'}) =
    \P_{kk'} \left(1-\frac{1-(\P_{kk'})^2}{2(\nu - K+2)}\right) +  o\left(\frac{1}{(\nu-K)^2}\right),$ 
     \vspace{-0.05in}
    \item $\mathbb{V}
    (\bR_{kk'}) = \frac{(1-(\P_{kk'})^2)^2}{\nu - K+1} + O\left(\frac{1}{(\nu-K)^3}\right).$ 
\end{enumerate}   
Additionally, we have the exchangeability on $(\bR_{kk'})_{k < k'}$, given $\P_{kk'} \equiv \rho, \forall k \neq k'$:
\begin{equation}
Q\bR Q^{-1} \stackrel{d}{=} \bR,
\end{equation}
for any permutation matrix $Q$.}

Proposition \ref{prop: IW prior} is the special case of Proposition 1 (Extended version) with $K$ fixed.

\begin{proof}
\noindent\textbf{Proof of point (1).}

Let 
\[
\bSigma \sim \mathcal{IW}(\Psi,\nu), \qquad \nu > K-1,
\]
and write both $\bSigma$ and $\Psi$ in block form as
\[
\bSigma =
\begin{pmatrix}
\bSigma_{aa} & \bSigma_{ab} \\
\bSigma_{ba} & \bSigma_{bb}
\end{pmatrix},
\qquad
\Psi =
\begin{pmatrix}
\Psi_{aa} & \Psi_{ab} \\
\Psi_{ba} & \Psi_{bb}
\end{pmatrix},
\]
with the blocks conformable in size.  
A standard marginal property of the inverse–Wishart distribution \citep{gupta2018matrix} states that
\[
\bSigma_{aa} \;\sim\; 
\mathcal{IW}_{p_a}\!\left( \Psi_{aa} ,\; \nu - p_b \right),
\qquad p_b = \mathrm{dim}(\bSigma_{bb}).
\]

We set $p_a=2$ and extract the $2\times 2$ submatrix
\[
\bSigma_{aa} =
\begin{pmatrix}
\bSigma_{11} & \bSigma_{12} \\
\bSigma_{21} & \bSigma_{22}
\end{pmatrix},
\]
whose induced correlation is exactly 
\(
\bR_{12} = \bSigma_{12}/\sqrt{\bSigma_{11}\bSigma_{22}}.
\)

By the above marginal property,
\[
\begin{pmatrix}
\bSigma_{11} & \bSigma_{12} \\
\bSigma_{21} & \bSigma_{22}
\end{pmatrix}
\sim 
\mathcal{IW}_{2}\!\left(
\begin{pmatrix}
\Psi_{11} & \Psi_{12} \\
\Psi_{21} & \Psi_{22}
\end{pmatrix},
\; \nu-(K-2)
\right).
\]

Using the standard sample representation of the inverse–Wishart, we have
\[
\begin{pmatrix}
\bSigma_{11} & \bSigma_{12} \\
\bSigma_{21} & \bSigma_{22}
\end{pmatrix}
\;\overset{d}{=}\;
\left(\sum_{i=1}^{\nu-(K-2)}
\begin{pmatrix}
\bm x_i^1 \\ \bm x_i^2
\end{pmatrix}
\begin{pmatrix}
\bm x_i^1 & \bm x_i^2
\end{pmatrix}\right)^{-1},
\]

where 
\begin{equation}\label{eq: 0}
\begin{pmatrix}
\bm x_i^1 \\ \bm x_i^2
\end{pmatrix}
\overset{\text{iid}}{\sim}
\mathcal{N}\!\left(
0,\;
\begin{pmatrix}
\Psi_{11} & \Psi_{12} \\
\Psi_{21} & \Psi_{22}
\end{pmatrix}^{-1}
\right) = \mathcal{N}\!\left(
0,\;
C_\Psi 
\begin{pmatrix}
\Psi_{22} & -\Psi_{12} \\
-\Psi_{21} & \Psi_{11}
\end{pmatrix}
\right).    
\end{equation}

On the other hand, we have 
\begin{equation}
\begin{aligned}
    \left(\sum_{i=1}^{\nu-(K-2)}
\begin{pmatrix}
\bm x_i^1 \\ \bm x_i^2
\end{pmatrix}
\begin{pmatrix}
\bm x_i^1 & \bm x_i^2
\end{pmatrix}\right)^{-1} &= 
\begin{pmatrix}
\sum\limits_{i=1}^{\nu-(K-2)} (\bm x_i^1)^2 & \sum\limits_{i=1}^{\nu-(K-2)}\bm x_i^1\bm x_i^2 \\
\sum\limits_{i=1}^{\nu-(K-2)}\bm x_i^2\bm x_i^1 & \sum\limits_{i=1}^{\nu-(K-2)} (\bm x_i^2)^2
\end{pmatrix}^{-1}    \\
&= C_x \begin{pmatrix}
\sum\limits_{i=1}^{\nu-(K-2)} (\bm x_i^2)^2 & -\sum\limits_{i=1}^{\nu-(K-2)}\bm x_i^2\bm x_i^1 \\
-\sum\limits_{i=1}^{\nu-(K-2)}\bm x_i^1\bm x_i^2 & \sum\limits_{i=1}^{\nu-(K-2)} (\bm x_i^1)^2
\end{pmatrix}.
\end{aligned}
\end{equation}

Here, $C_\Psi$ and $C_x$ are positive constants equal to the reciprocals of the respective matrix determinants. Notice that these constants cancel out when computing the correlation. Thus, 
\begin{equation}
    \bR_{12} = \bSigma_{12}/\sqrt{\bSigma_{11}\bSigma_{22}} \;\overset{d}{=}\; -\sum\limits_{i=1}^{\nu-(K-2)}\bm x_i^1\bm x_i^2/\sqrt{\sum\limits_{i=1}^{\nu-(K-2)} (\bm x_i^2)^2 \sum\limits_{i=1}^{\nu-(K-2)} (\bm x_i^1)^2 } \;\overset{d}{=}\; -\,\widehat{\mathrm{corr}}(\bm x_i^1, \bm x_i^2), 
\end{equation}
where $-\widehat{\mathrm{corr}}(\bm x_i^1, \bm x_i^2)$ is the sample correlation estimator of $\mathrm{corr}(\bm x_i^1, \bm x_i^2)$. From Equation \ref{eq: 0}, we have 
\[
\mathrm{corr}(\bm x_i^1,\bm x_i^2)
= -\,\frac{\Psi_{12}}{\sqrt{\Psi_{11}\Psi_{22}}}
= -\P_{12}. 
\]
Given the consistency of sample correlation, we have 
\[
\widehat{\mathrm{corr}}(\bm x_i^1,\bm x_i^2)
\;\xrightarrow{\;P\;}\;
-\P_{12}
\qquad\text{as }\nu-(K-2)\to\infty.
\]
Therefore,
\[
\bR_{12} 
\;\xrightarrow{\;P\;}\;
\P_{12}.
\]

Finally, relabelling indices via permutation matrices shows that for any
$k\neq k'$,
\[
\bR_{kk'} \xrightarrow{\;P\;} \P_{kk'} 
\qquad\text{as }\nu-K\to\infty.
\]

\medskip
\noindent\textbf{Proof of points (2).}
From the argument above for point~(1), for any fixed pair $(k,k')$,
\begin{equation}
    \bR_{kk^\prime} \;\overset{d}{=}\; -\sum\limits_{i=1}^{n}\bm x_i^k\bm x_i^{k^\prime}/\sqrt{\sum\limits_{i=1}^{n} (\bm x_i^{k^\prime})^2 \sum\limits_{i=1}^{n} (\bm x_i^k)^2 }, 
\end{equation}
with
\begin{equation}\label{eq: 1}
\begin{pmatrix}
\bm x_i^k \\ \bm x_i^{k^\prime}
\end{pmatrix}
\overset{\text{iid}}{\sim}
\mathcal{N}\!\left(
0,\;
constant 
\begin{pmatrix}
\Psi_{{k^\prime}{k^\prime}} & -\Psi_{k{k^\prime}} \\
-\Psi_{{k^\prime}k} & \Psi_{kk}
\end{pmatrix}
\right),    
\end{equation}
of size
\[
n = \nu - K + 2.
\]
Thus 
\begin{equation}
    \bR_{k{k^\prime}} \;\overset{d}{=}\; -\sum\limits_{i=1}^{n}\bm x_i\bm y_i/\sqrt{\sum\limits_{i=1}^{n} (\bm x_i)^2 \sum\limits_{i=1}^{n} (\bm y_i)^2 }, 
\end{equation}
with
\begin{equation}
\begin{pmatrix}
\bm x_i \\ \bm y_i
\end{pmatrix}
\overset{\text{iid}}{\sim}
\mathcal{N}\!\left(
0,\; 
\begin{pmatrix}
1 & -\P_{kk\prime} \\
-\P_{kk\prime} & 1
\end{pmatrix}
\right),
\end{equation}
because the variances $constant\Psi_{ii}, r =0, 1$ in Equation \eqref{eq: 1} are canceled out in the division. We denote 
\begin{equation}
    \bm s_{11} = \frac{1}{n}\sum\limits_{i=1}^{n}\bm x_i^2, \; \bm s_{12} = \frac{1}{n}\sum\limits_{i=1}^{n}\bm x_i\bm y_i, \; \bm s_{22} = \frac{1}{n}\sum\limits_{i=1}^{n}\bm y_i^2.
\end{equation}
To approximate the expectation 
\begin{equation}
\mathbb{E}\bR_{k{k^\prime}} = \mathbb{E}\frac{-\bm s_{12}}{\sqrt{\bm s_{11}\bm s_{22}}},
\end{equation}
we apply the Taylor expansion on the function $g(\bm\theta) = \frac{-\bm s_{12}}{\sqrt{\bm s_{11}\bm s_{22}}}$ with $\bm \theta = (\bm s_{11}, \bm s_{22}, \bm s_{12})$ at the point 
$\theta_0 = \mathbb{E}\bm \theta = (1,1,-P_{kk^\prime})$ as follows.
\begin{equation}
g(\bm\theta) = g(\theta_0) + \nabla g(\theta_0)^\top (\bm\theta - \theta_0) + \frac{1}{2}(\bm\theta - \theta_0)^\top \bm H_g(\theta_0)(\bm\theta - \theta_0) + o(\|\bm\theta\|^2). 
\end{equation}
Then 
\begin{equation}
    \mathbb{E} g(\bm\theta) = g(\theta_0) + \frac{1}{2} tr(H_g(\theta_0) \mathbb{V}\bm\theta) + o(\frac{1}{n^2}). 
\end{equation}
Through basic calculation, we have 
\begin{equation}
    H_g(\theta_0) = 
\begin{pmatrix}
\frac{3}{4}\P_{kk\prime} & \frac{1}{4}\P_{kk\prime}& \frac{1}{2} \\
\frac{1}{4}\P_{kk\prime} & \frac{3}{4}\P_{kk\prime}& \frac{1}{2} \\
\frac{1}{2} & \frac{1}{2}& 0
\end{pmatrix},
\end{equation}
and 
\begin{equation}
\mathbb{V}\bm\theta = 
\begin{pmatrix}
\frac{2}{n} & \frac{2}{n}\P_{kk\prime}^2& -\frac{2}{n}\P_{kk\prime} \\
\frac{2}{n}\P_{kk\prime}^2 & \frac{2}{n}& -\frac{2}{n}\P_{kk\prime} \\
-\frac{2}{n}\P_{kk\prime}& -\frac{2}{n}\P_{kk\prime}& \frac{1+\P_{kk\prime}^2}{n}.
\end{pmatrix}    
\end{equation}
The calculation of $\mathbb{V}\bm\theta$ involves Gaussian $4$-th moments, which can be computed using Isserlis's theorem.
Then we have 
\begin{equation}
    \mathbb{E} g(\bm\theta) =  \P_{kk\prime}  (1 - \frac{1-\P_{kk\prime}^2}{2n})+ o(\frac{1}{n^2}). 
\end{equation}
This completes the proof of Point (2).

\noindent\textbf{Proof of points (3).}
\medskip\noindent
Using again the Taylor expansion, we have 
\[
\begin{aligned}
\mathbb{V} g(\bm\theta)
&= \nabla g(\theta_0)^\top \,\mathbb{V}(\bm \theta)\, \nabla g(\theta_0)
   + \mathrm{Cov}(\nabla g(\theta_0)^\top (\bm\theta - \theta_0),\frac{1}{2}(\bm\theta - \theta_0)^\top \bm H_g(\theta_0)(\bm\theta - \theta_0)) + O\!\left(\frac1{n^4}\right)\\
&= \nabla g(\theta_0)^\top \,\mathbb{V}(\bm \theta)\, \nabla g(\theta_0)
   + \mathbb{E}(\frac{1}{2}\nabla g(\theta_0)^\top (\bm\theta - \theta_0)(\bm\theta - \theta_0)^\top \bm H_g(\theta_0)(\bm\theta - \theta_0)) + O\!\left(\frac1{n^4}\right)\\
   &= \nabla g(\theta_0)^\top \,\mathbb{V}(\bm \theta)\, \nabla g(\theta_0)
   + O\!\left(\frac1{n^3}\right).    
\end{aligned}
\]
A direct calculation yields
\[
\nabla g(\theta_0)^\top = 
-(\frac{\P_{kk\prime}}{2},
    \frac{\P_{kk\prime}}{2},
    1).
\]
Thus 
\[
\nabla g(\theta_0)^\top \,\mathbb{V}(\bm \theta)\, \nabla g(\theta_0)
= \frac{(1 - \P_{kk^\prime})^2}{n}.
\]
This proves point (3).

\textbf{Proof of the exchangeability.}
We first show the exchangeability in the case where $\sigma_k \equiv 1, \forall k = 1, ..., K.$ Let $\bR$ be the induced correlation matrix of $\bSigma$, thus for any permutation matrix $Q$,  $Q\bR Q^{-1}$ is the induced correlation matrix of $Q \bSigma Q^{-1}$. Recall the sample representation of $\iw$:  $\bSigma \sim \iw(P,\Vec{1},\nu)$ if and only if 
\begin{equation}
    \bSigma \stackrel{d}{=}  \left[\sum\limits_{i=1}^\nu
\mathbf{x}_i\mathbf{x}_i^\top\right]^{-1}, \; \mathbf{x}_i \stackrel{i.i.d.}{\sim} \mathcal{N}(0,P^{-1}).
\end{equation}
Thus 
\begin{equation}
    Q \bSigma Q^{-1} \stackrel{d}{=}  \left[\sum\limits_{i=1}^\nu
Q\mathbf{x}_i (Q\mathbf{x}_i)^\top\right]^{-1}.
\end{equation}
Note that $Q\mathbf{x}_i \stackrel{i.i.d.}{\sim} \mathcal{N}(0,QP^{-1}Q^{-1})$, and $P^{-1}$ takes the form of 
\[
\begin{cases}
&[P^{-1}]_{ii}=\frac{1+(K-2)\rho}{(1-\rho)\bigl(1+(K-1)\rho\bigr)},
\qquad i=1,\dots,K, \\
&[P^{-1}]_{ij}=-\,\frac{\rho}{(1-\rho)\bigl(1+(K-1)\rho\bigr)},
\qquad i\neq j, i=1,\dots,K, \mbox{ and } j=1,\dots,K
\end{cases}.
\]
Thus $Q \bSigma Q^{-1} \stackrel{d}{=} \bSigma$, leading to $Q \bR Q^{-1} \stackrel{d}{=} \bR$. 

In the second step, we show the exchangeability for generic values of $\sigma_k.$ We rely on the fact that for $\bSigma \sim \mathcal{IW}(\P, \Vec{\sigma},\nu)$ with any $\Vec{\sigma}$, the induced correlation $\bR$ satisfies:
\[\bR \stackrel{d}{=} \bR_0, \mbox{ where } \bR_0 \mbox{ is induced from } \bSigma \sim \mathcal{IW}(\P, \Vec{1},\nu).\]

Let $\widehat{\bm W} = \sum\limits_{i=1}^\nu
\mathbf{x}_i\mathbf{x}_i^\top, \; \mathbf{x}_i \stackrel{i.i.d.}{\sim} \mathcal{N}(0,\Delta^{-1} P^{-1} \Delta^{-1})$, with $\Delta = \operatorname{Diag}(\Vec{\sigma})$, we have 
{
\small
\begin{equation}
\begin{aligned}
    \bR &\stackrel{d}{=} 
    \begin{pmatrix}
     [\widehat{\bm W}^{-1}]_{11}&&\\
     &\ddots&\\
     &&[\widehat{\bm W}^{-1}]_{KK}
    \end{pmatrix}^{\frac{1}{2}}
    \widehat{\bm W}^{-1}
        \begin{pmatrix}
     [\widehat{\bm W}^{-1}]_{11}&&\\
     &\ddots&\\
     &&[\widehat{\bm W}^{-1}]_{KK}
    \end{pmatrix}^{\frac{1}{2}} \\
    &\stackrel{d}{=} 
    \begin{pmatrix}
     [\widehat{\bm W}^{-1}]_{11}&&\\
     &\ddots&\\
     &&[\widehat{\bm W}^{-1}]_{KK}
    \end{pmatrix}^{\frac{1}{2}} \Delta \Delta^{-1}
    \widehat{\bm W}^{-1} \Delta^{-1} \Delta 
        \begin{pmatrix}
     [\widehat{\bm W}^{-1}]_{11}&&\\
     &\ddots&\\
     &&[\widehat{\bm W}^{-1}]_{KK}
    \end{pmatrix}^{\frac{1}{2}}\\
    &\stackrel{d}{=} \Delta^{\frac{1}{2}} 
    \begin{pmatrix}
     [\widehat{\bm W}^{-1}]_{11}&&\\
     &\ddots&\\
     &&[\widehat{\bm W}^{-1}]_{KK}
    \end{pmatrix}^{\frac{1}{2}} \Delta^{\frac{1}{2}} \Delta^{-1}
    \widehat{\bm W}^{-1} \Delta^{-1} \Delta^{\frac{1}{2}} 
        \begin{pmatrix}
     [\widehat{\bm W}^{-1}]_{11}&&\\
     &\ddots&\\
     &&[\widehat{\bm W}^{-1}]_{KK}
    \end{pmatrix}^{\frac{1}{2}} \Delta^{\frac{1}{2}} \\
    &\stackrel{d}{=} 
    \begin{pmatrix}
     [\widetilde{\bm W}^{-1}]_{11}&&\\
     &\ddots&\\
     &&[\widetilde{\bm W}^{-1}]_{KK}
    \end{pmatrix}^{\frac{1}{2}}
    \widetilde{\bm W}^{-1}
        \begin{pmatrix}
     [\widetilde{\bm W}^{-1}]_{11}&&\\
     &\ddots&\\
     &&[\widetilde{\bm W}^{-1}]_{KK}
    \end{pmatrix}^{\frac{1}{2}},
\end{aligned}
\end{equation}
}
where $\widetilde{\bm W} = \Delta  \widehat{\bm W}\Delta$. Note that  $\widetilde{\bm W} = \sum\limits_{i=1}^\nu
\Tilde{\mathbf{x}}_i\Tilde{\mathbf{x}}_i^\top,$ with $\Tilde{\mathbf{x}}_i = \Delta\mathbf{x}_i \stackrel{i.i.d.}{\sim} \mathcal{N}(0,P^{-1}).$ Thus $\bR \stackrel{d}{=} \bR_0$. Given the  exchangeability of $\bR_0$, we have the exchangeability of $\bR$.
\qedhere

\end{proof}

\subsection{Theorem \ref{thmsiw}}\label{app: siw1_first_order}

\thmsiw*
\begin{proof}
Fix $k\neq k'$ and set $g(\Sigma):=R_{kk'}(\Sigma)$. Rewrite $\P$ as $\P=I_K+\varepsilon A$ where $A$ is the off-diagonal part of $\P$ up to $\varepsilon$, with
$\varepsilon = \max_{i\neq j}|\P_{ij}|$. Denote expectations under $\P(\varepsilon)=I_K+\varepsilon A$ by $\mathbb E_\varepsilon[\cdot]$ and $\mathbb E_0=\mathbb E_{\varepsilon=0}$.

The approximation of 
$\mathbb E_\varepsilon[\bR_{kk'}]$ is given by the first order Taylor expansion, in a neighborhood of $\varepsilon = 0$, that is:
\[\mathbb E_\varepsilon[\bR_{kk'}] = \E_0[\bR_{kk'}] + \frac{d}{d\varepsilon}\mathbb E_\varepsilon[\bR_{kk'}]\Big|_{\varepsilon=0}\varepsilon + O(\varepsilon^2).\]
The choice of the expansion point is followed by the fact that at $\varepsilon=0$, $\siw_1(I_K, \bm 1, \nu)$ is highly analytically favorable, with $\E_0[\bR_{kk'}] = 0$, and computable $\frac{d}{d\varepsilon}\mathbb E_\varepsilon[\bR_{kk'}]\Big|_{\varepsilon=0}$. In the following, we calculate $\frac{d}{d\varepsilon}\mathbb E_\varepsilon[\bR_{kk'}]\Big|_{\varepsilon=0}.$

Let $q_\varepsilon(\Sigma)$ be the unnormalized $\siw_1$ density:
\[
q_\varepsilon(\Sigma)
=
\exp\!\Big(-\tfrac12\mathrm{tr}(\Psi(\varepsilon)\Sigma^{-1})\Big)\,h(\Sigma),
\quad
h(\Sigma)=|\Sigma|^{-\nu}\Big(\prod_{i<j}(\lambda_i-\lambda_j)\Big)^{-1}.
\]
Then $\mathbb E_\varepsilon[g]=\int g(\Sigma)q_\varepsilon(\Sigma)d\Sigma/\int q_\varepsilon(\Sigma)d\Sigma$. Using the trick 
\[\ \partial_\varepsilon\log q_\varepsilon(\Sigma) = \ \partial_\varepsilon q_\varepsilon(\Sigma)  / q_\varepsilon(\Sigma),  \]
we have
\[
\frac{d}{d\varepsilon}\mathbb E_\varepsilon[g]
=
\mathrm{Cov}_\varepsilon\!\Big(g(\bSigma),\ \partial_\varepsilon\log q_\varepsilon(\bSigma)\Big).
\]
Since $\P'(\varepsilon)= A$ and $\log h(\bSigma)$ does not depend on $\varepsilon$,
\[
\partial_\varepsilon\log q_\varepsilon(\bSigma)
=
-\frac12\,\mathrm{tr}(A \,\bSigma^{-1}).
\]
Therefore
\begin{equation}
\label{eq:deriv_cov_general_A}
\frac{d}{d\varepsilon}\mathbb E_\varepsilon[\bR_{kk'}]\Big|_{\varepsilon=0}
=
-\mathrm{Cov}_0\!\Big(\bR_{kk'},\ \frac12\mathrm{tr}(A \,\bSigma^{-1})\Big).
\end{equation}

Expanding the trace and using $\mathrm{diag}(A)=0$ gives
\[
\frac12\mathrm{tr}( A \,\bSigma^{-1})
=
\sum_{1\le i<j\le K}\,A_{ij}\,(\bSigma^{-1})_{ij}.
\]

Thus
\begin{equation}
\label{eq:deriv_sum_general_A}
\frac{d}{d\varepsilon}\mathbb E_\varepsilon[\bR_{kk'}]\Big|_{\varepsilon=0}
=
-\sum_{i<j} A_{ij}\,
\mathrm{Cov}_0\!\big(\bR_{kk'},(\bSigma^{-1})_{ij}\big).
\end{equation}


At $\varepsilon=0$ we have $\P=I_K$.
For any sign-flip matrix $S=\mathrm{diag}(s_1,\dots,s_K)$ with $s_m\in\{\pm1\}$,
define the map $\Sigma\mapsto S\Sigma S$.
We show now that $\bSigma\stackrel{d}{=}S\bSigma S$. Firstly, 
\[
\mathrm{tr}(\P(S\Sigma S)^{-1})
=\mathrm{tr}(\P\,S\Sigma^{-1}S)
=\mathrm{tr}(S\P S\,\Sigma^{-1})
=\mathrm{tr}(\P\Sigma^{-1}),
\]
because $\P$ is diagonal and hence $S\P S=\P$.
Moreover, $|S\Sigma S|=|\Sigma|$ and $S\Sigma S$ has the same eigenvalues as $\Sigma$,
so the remaining factors $|\Sigma|^{-\nu}$ and $\prod_{i<j}(\lambda_i-\lambda_j)^{-1}$ are unchanged.
Therefore, for any integrable $f$,
\begin{equation}
\label{eq:flip_invariance}
\mathbb E_0[f(\bSigma)]=\mathbb E_0[f(S\bSigma S)].
\end{equation}

Now note that under $\Sigma\mapsto S\Sigma S$,
\[
R_{kk'}(\Sigma)\ \mapsto\ s_ks_{k'}\,R_{kk'}(\Sigma),
\qquad
(\Sigma^{-1})_{ij}\ \mapsto\ s_is_j\,(\Sigma^{-1})_{ij}.
\]
Hence the product transforms as
\[
R_{kk'}(\Sigma)\,(\Sigma^{-1})_{ij}\ \mapsto\ (s_ks_{k'}s_is_j)\,R_{kk'}(\Sigma)\,(\Sigma^{-1})_{ij}.
\]

Fix a pair $(i,j)$ with $i<j$.
If $\{i,j\}\neq\{k,k'\}$, then at least one of $k$ or $k'$ is not in $\{i,j\}$.
Choose a sign-flip matrix $S$ that flips exactly that index, e.g. take $s_{k'}=-1$ and all other $s_m=+1$
when $k'\notin\{i,j\}$. Then $s_ks_{k'}=-1$ but $s_is_j=+1$, so
\[
R_{kk'}(\Sigma)\,(\Sigma^{-1})_{ij}\ \mapsto\ -\,R_{kk'}(\Sigma)\,(\Sigma^{-1})_{ij}.
\]
Using invariance \eqref{eq:flip_invariance}, we get
\[
\mathbb E_0\!\big[R_{kk'}(\bSigma)\,(\bSigma^{-1})_{ij}\big]
=
\mathbb E_0\!\big[R_{kk'}(S\bSigma S)\,((S\bSigma S)^{-1})_{ij}\big]
=
-\mathbb E_0\!\big[R_{kk'}(\bSigma)\,(\bSigma^{-1})_{ij}\big],
\]
hence
\begin{equation}
\label{eq:cross_moment_zero}
\mathbb E_0\!\big[R_{kk'}(\bSigma)\,(\bSigma^{-1})_{ij}\big]=0
\qquad \text{for all }(i,j)\neq(k,k').
\end{equation}

Also, by the same argument with $(\bSigma^{-1})_{ij}$ replaced by $1$, we have $\mathbb E_0[\bR_{kk'}]=0$.
Therefore, for $(i,j)\neq(k,k')$,
\[
\mathrm{Cov}_0(\bR_{kk'},(\bSigma^{-1})_{ij})
=
\mathbb E_0[\bR_{kk'}(\bSigma^{-1})_{ij}] - \mathbb E_0[\bR_{kk'}]\mathbb E_0[(\bSigma^{-1})_{ij}]
=
0.
\]
Plugging this into \eqref{eq:deriv_sum_general_A}, we obtain 
\begin{equation}
\label{eq:only_kk_term_survives}
\frac{d}{d\varepsilon}\mathbb E_\varepsilon[\bR_{kk'}]\Big|_{\varepsilon=0}
=
-A_{kk'}\ \mathbb E_0\!\big[\bR_{kk'}(\bSigma^{-1})_{kk'}\big].
\end{equation}

Now we  evaluate $\mathbb E_0[\bR_{kk'}(\bSigma^{-1})_{kk'}]$. 
At $\epsilon = 0$,
the eigenvector matrix $\bm U$ is Haar and independent of $\bm \Lambda$.
Moreover, the $\bm\lambda_i$ are independent inverse-gamma:
\[
\bm\lambda_i \stackrel{i.i.d.}{\sim} \mathrm{Inv\text{-}Gamma}(\nu-1,\ \tfrac12).
\]

Write $\bm{\Sigma=U\Lambda U^\top}$ and $\bm{\Sigma^{-1}=U\Lambda^{-1}U^\top}$. Then
\[\bm 
\Sigma_{kk'}=\sum_{i=1}^K \bm\lambda_i \bm u_{ki}\bm u_{k'i},
\qquad
(\bSigma^{-1})_{kk'}=\sum_{j=1}^K \bm\lambda_j^{-1} \bm u_{kj}\bm u_{k'j}.
\]
Thus
\[ 
\bSigma_{kk'}(\bSigma^{-1})_{kk'}
=
\sum_{i,j=1}^K \frac{\bm\lambda_i}{\bm\lambda_j}\,\bm u_{ki}\bm u_{k'i}\bm u_{kj}\bm u_{k'j}.
\]
Define
\[
a:=\mathbb E[\bm u_{k1}^2\bm u_{k'1}^2],\qquad
b:=\mathbb E[\bm u_{k1}\bm u_{k'1}\bm u_{k2}\bm u_{k'2}],\qquad (k\neq k').
\]
Haar/sphere fourth-moment identities give $a=\frac{1}{K(K+2)}$.
Using orthogonality of distinct rows $\sum_{i=1}^K \bm u_{ki}\bm u_{k'i}=0$, we have
\[
0=\mathbb E\Big[\Big(\sum_{i=1}^K \bm u_{ki}\bm u_{k'i}\Big)^2\Big]
=Ka+K(K-1)b,
\qquad\Rightarrow\qquad
b=-\frac{1}{K(K-1)(K+2)}.
\]
Given $\Lambda$ and $U$ are independent, we have
\[
\mathbb E[\bSigma_{kk'}(\bSigma^{-1})_{kk'}]
=Ka + b\sum_{i\neq j}\mathbb E\Big[\frac{\bm \lambda_i}{\bm \lambda_j}\Big].
\]
Since the $\lambda_i$ are iid inverse-gamma, for $i\neq j$ we have independence and
\[
\mathbb E\Big[\frac{\bm \lambda_i}{\bm \lambda_j}\Big]
=\mathbb E[\bm \lambda]\ \mathbb E[\bm \lambda^{-1}]
=\frac{1}{2(\nu-2)}\cdot 2(\nu-1)
=\frac{\nu-1}{\nu-2}.
\]
Thus 
\[
\mathbb E_0[\bSigma_{kk'}(\bSigma^{-1})_{kk'}]
=
Ka + b\,K(K-1)\frac{\nu-1}{\nu-2}
=
\frac{1}{K+2}-\frac{1}{K+2}\frac{\nu-1}{\nu-2}
=
-\frac{1}{(\nu-2)(K+2)}.
\]
Also,
\[
\mathbb E_0[\bSigma_{kk}]
=\sum_{i=1}^K \mathbb E[\bm \lambda_i]\mathbb E[\bm u_{ki}^2]
=K\cdot \frac{1}{2(\nu-2)}\cdot\frac{1}{K}
=\frac{1}{2(\nu-2)}.
\]
Given 
\[
\bm R_{kk'}(\bSigma^{-1})_{kk'}
=
\bSigma_{kk'}(\bSigma^{-1})_{kk'}\ (\bSigma_{kk}\bSigma_{k'k'})^{-1/2}, 
\]
we show in the following that 
\[
 \E \left[ \bSigma_{kk'}(\bSigma^{-1})_{kk'} (\bSigma_{kk}\bSigma_{k'k'})^{-1/2} \right] = \frac{\mathbb E_0[\bSigma_{kk'}(\bSigma^{-1})_{kk'}]}{\mathbb E_0[\bSigma_{kk}]}
+
O\!\left(\frac{1}{ K^2\nu}\right),
\]
where 
\[\frac{\mathbb E_0[\bSigma_{kk'}(\bSigma^{-1})_{kk'}]}{\mathbb E_0[\bSigma_{kk}]} = -\frac{2}{K+2}. \]
Denote $E_0[\bSigma_{kk}]$ by $\mu$, and let $\bm \delta = \bSigma_{kk} / \mu - 1$, we write 
\[(\bSigma_{kk})^{-1/2} = \mu^{-1/2}(1+\bm \delta)^{-1/2}. \]
Using Taylor theorem at $0$, we have 
\[(\bSigma_{kk})^{-1/2} = \mu^{-1/2}\left(1 - \frac{\delta}{2} + \frac{3}{8} (1 + \bm \xi)^{-5/2} \bm \delta^2 \right), \quad \bm \xi \in (0, \bm \delta).\]
Similarly, let $\bm \delta^\prime = \bSigma_{k^\prime k^\prime} / \mu - 1$, we have 
\[(\bSigma_{k^\prime k^\prime})^{-1/2} = \mu^{-1/2}\left(1 - \frac{\bm \delta^\prime}{2} + \frac{3}{8} (1 +\bm \xi^\prime)^{-5/2} (\bm \delta^\prime)^2 \right), \quad \bm \xi^\prime \in (0, \bm \delta^\prime).\]
Thus 
\[(\bSigma_{kk}\bSigma_{k^\prime k^\prime})^{-1/2} = \mu^{-1}\left(1 - \frac{\bm \delta + \bm \delta^\prime}{2} + \bm R \right),\]
where 
\[  
\begin{aligned}
 \bm R &= \frac{3}{8}(1 + \bm \xi^\prime)^{-5/2} (\bm \delta^\prime)^2 + \frac{3}{8}(1 + \bm \xi)^{-5/2} \bm \delta^2 + \frac{9}{64}(1 + \bm \xi^\prime)^{-5/2}(1 + \bm \xi)^{-5/2} (\bm \delta^\prime)^2\bm \delta^2    \\
 & + \frac{3}{16}(1 + \bm \xi^\prime)^{-5/2} \bm \delta(\bm \delta^\prime)^2 + \frac{3}{16}(1 + \bm \xi)^{-5/2} (\bm \delta^\prime)\bm \delta^2 + \frac{1}{4}\bm \delta^\prime\bm \delta.
\end{aligned}\]
Denote $\bSigma_{kk'}(\bSigma^{-1})_{kk'}$ by $\bm Z$, therefore 
\[
\begin{aligned}
&\E \left[ \bm Z (\bSigma_{kk}\bSigma_{k'k'})^{-1/2} \right] - \mathbb E_0[\bm Z]\mu^{-1} =    \E \left[ \bm Z (\bSigma_{kk}(\bSigma_{k'k'})^{-1/2} - \mu^{-1})\right]\\
&=\mu^{-1}\E \left[ \bm Z \left(- \frac{\bm \delta + \bm \delta^\prime}{2} + \bm R \right)\right].
\end{aligned} \]
Note that $\E \left[ \bm Z  \bm R \right] \leq \sqrt{\E \left[ \bm Z^2  \right]\E \left[ \bm R^2 \right]}.$ Given $\nu > 5$, $\E \left[ \bm R^2 \right] < \infty$. When $\E \left[ \bm Z \left(- \frac{\bm \delta + \bm \delta^\prime}{2} \right)\right], \E \left[ \bm Z \bm R \right]$ are both finite, $\E \left[ \bm Z \bm R \right] = o(\E \left[ \bm Z \left(- \frac{\bm \delta + \bm \delta^\prime}{2} \right)\right]).$ 

On the other hand:
\[
\mathbb E_0[\bm Z\bm \delta]
=
-\frac{6}{(K+2)(K+4)(\nu-3)(\nu-2)}.
\]
The proof see Lemma \ref{lem: moment haar}. Thus given $=\frac{1}{2(\nu-2)}$, we have 
\[\mu^{-1}\E \left[ \bm Z \left(- \frac{\bm \delta + \bm \delta^\prime}{2} + \bm R \right)\right] = O\left(\frac{1}{K^2\nu}\right).\]

Plugging this into \eqref{eq:only_kk_term_survives} gives
\begin{equation}
\label{eq:deriv_final}
\frac{d}{d\varepsilon}\mathbb E_\varepsilon[\bR_{kk'}]\Big|_{\varepsilon=0}
=
\frac{2}{K+2}\,A_{kk'} + O\!\left(\frac{1}{K^2\nu}\right)\,A_{kk'}.
\end{equation}

Lastly, we apply Taylor expansion in $\varepsilon$:
\[
\mathbb E_\varepsilon[\bR_{kk'}]
=
\varepsilon\,\frac{d}{d\varepsilon}\mathbb E_\varepsilon[\bR_{kk'}]\Big|_{\varepsilon=0}
+O(\varepsilon^2).
\]
Using \eqref{eq:deriv_final} and $\P_{kk'}=\varepsilon A_{kk'}$ for $k\neq k'$,
\[
\mathbb E(\bR_{kk'})
=
\frac{2}{K+2}\,\P_{kk'} + O(\varepsilon^2)+O\!\Big(\frac{1}{K^2\nu}\Big).
\]
This completes the proof.
\end{proof}

\begin{lemma}\label{lem: moment haar}
Let $\bSigma\sim \siw_1(I_K, \mathbf{1}, \nu)$ with $K > 3, \nu>3$.
Fix $k\neq k'$ and define
\[
\bm X:=\bSigma_{kk},\quad \mu:=\mathbb E_0[\bm X],\quad \bm \delta:=\frac{\bm X-\mu}{\mu},\quad
\bm Z:=\bSigma_{kk'}(\bSigma^{-1})_{kk'}.
\]
Then
\[
\mathbb E_0[\bm Z\bm \delta]=-\frac{6}{(K+2)(K+4)(\nu-3)(\nu-2)}.
\]
\end{lemma}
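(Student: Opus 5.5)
The plan is to reuse the spectral representation already used in the proof of Theorem~\ref{thmsiw}. At $\P=I_K$ and $\Vec{\sigma}=\bm 1$, write $\bSigma=\bm U\bm\Lambda\bm U^\top$ with $\bm U$ Haar-distributed and independent of $\bm\Lambda$, and with $\bm\lambda_i\stackrel{i.i.d.}{\sim}\mathrm{Inv\text{-}Gamma}(\nu-1,\tfrac12)$. Since $\bm\delta=\bm X/\mu-1$, we have
\[
\mathbb E_0[\bm Z\bm\delta]=\mu^{-1}\,\mathbb E_0[\bm Z\bm X]-\mathbb E_0[\bm Z].
\]
Both $\mu=\frac{1}{2(\nu-2)}$ and $\mathbb E_0[\bm Z]=-\frac{1}{(\nu-2)(K+2)}$ are already computed in that proof. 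So the whole task reduces to the third-order mixed moment $\mathbb E_0[\bSigma_{kk}\bSigma_{kk'}(\bSigma^{-1})_{kk'}]$.

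Expanding each factor in the eigenbasis gives
\[
\bm Z\bm X=\sum_{i,j,l}\frac{\bm\lambda_i\bm\lambda_l}{\bm\lambda_j}\,\bm u_{kl}^2\,\bm u_{ki}\bm u_{k'i}\,\bm u_{kj}\bm u_{k'j}.
\]
By independence of $\bm U$ and $\bm\Lambda$, this factorizes into eigenvalue moments times sixth-order Haar moments of two orthonormal rows. I would classify the index triples $(i,j,l)$ by their coincidence pattern: all equal; $i=j\neq l$; $i=l\neq j$; $j=l\neq i$; all distinct. The needed inverse-gamma moments are
\[
\mathbb E\bm\lambda=\tfrac{1}{2(\nu-2)},\qquad \mathbb E\bm\lambda^2=\tfrac{1}{4(\nu-2)(\nu-3)},\qquad \mathbb E\bm\lambda^{-1}=2(\nu-1),
\]
together with $\mathbb E[\bm\lambda_i\bm\lambda_l/\bm\lambda_j]$ for the various patterns. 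The $\mathbb E\bm\lambda^2$ contribution from $i=l\neq j$ is what produces the factor $(\nu-3)$ in the target formula.

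For the Haar moments I would use exchangeability of columns. This leaves a handful of constants, for example
\[
\mathbb E[\bm u_{k1}^4\bm u_{k'1}^2],\quad \mathbb E[\bm u_{k1}^2\bm u_{k'1}\,\bm u_{k2}\bm u_{k'2}],\quad \mathbb E[\bm u_{k1}^3\bm u_{k'1}\,\bm u_{k2}\bm u_{k'2}],\quad \mathbb E[\bm u_{k3}^2\,\bm u_{k1}\bm u_{k'1}\,\bm u_{k2}\bm u_{k'2}],
\]
and so on. The single-column moments follow from the uniform-sphere moments of a column, via the Gaussian representation divided by the norm, giving factors $\frac{1}{K(K+2)(K+4)}$. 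The cross-column moments I would obtain, as for $b$ in the proof of Theorem~\ref{thmsiw}, from linear relations furnished by the orthogonality $\sum_i\bm u_{ki}\bm u_{k'i}=0$ and the normalization $\sum_l\bm u_{kl}^2=1$. Multiplying these identities by suitable monomials and taking expectations gives a small linear system that determines all of them. A Weingarten-calculus computation could serve as a cross-check.

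The main obstacle is this bookkeeping of sixth-order Haar moments and the cancellations among patterns. Because $\mathbb E\bm\lambda^{-1}$ is $O(\nu)$, leading terms of order $1/\nu$ must cancel exactly, just as in the computation of $\mathbb E_0[\bm Z]$, to leave a $\frac{1}{(\nu-2)(\nu-3)}$ remainder. A useful shortcut I would try first: when the eigenvalue factor is pattern-independent, summing over $l$ with $\sum_l\bm u_{kl}^2=1$ collapses $\bm X$. So I would write $\bm\lambda_l=\mathbb E\bm\lambda+(\bm\lambda_l-\mathbb E\bm\lambda)$. The constant part then reproduces $\mu\,\mathbb E_0[\bm Z]$ and cancels the subtracted term, and only the centered fluctuation contributes. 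This should isolate the $\mathrm{Var}(\bm\lambda)$-type term $\frac{1}{4(\nu-2)^2(\nu-3)}$ times a Haar constant of order $K^{-2}$. Dividing by $\mu$ then gives $-\frac{6}{(K+2)(K+4)(\nu-3)(\nu-2)}$, with the constant fixed by the sphere-moment computation.
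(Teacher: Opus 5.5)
Your main route is the paper's own. You use the same spectral representation, with Haar $\bm U$ independent of i.i.d.\ $\mathrm{Inv\text{-}Gamma}(\nu-1,\tfrac12)$ eigenvalues, the same reduction $\mathbb E_0[\bm Z\bm\delta]=\mu^{-1}\mathbb E_0[\bm Z\bm X]-\mathbb E_0[\bm Z]$, and the same index patterns and inverse-gamma moments. The only difference is how the Haar constants are obtained. The paper quotes its four sixth-order constants from Weingarten calculus. You derive them from single-column sphere moments together with $\sum_j\bm u_{kj}\bm u_{k'j}=0$ and $\sum_l\bm u_{kl}^2=1$. That linear system does close and reproduces the paper's values exactly, e.g.\ $\mathbb E[\bm u_{k1}^3\bm u_{k'1}\bm u_{k2}\bm u_{k'2}]=-\mathbb E[\bm u_{k1}^4\bm u_{k'1}^2]/(K-1)=-\frac{3}{(K-1)K(K+2)(K+4)}$. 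So this part is fine.

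Your centering shortcut is valid, and it is lighter than the paper's computation. It bypasses the messy intermediate $\mathbb E_0[\bm Z\bm X]$ and needs only the single Haar constant above. However, your description of what it isolates is wrong, and following it literally gives the wrong order in $\nu$.

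With $m_1=\mathbb E[\bm\lambda]=\mu$ and $\bm X-\mu=\sum_l(\bm\lambda_l-m_1)\bm u_{kl}^2$, the patterns $i=j$ and $l\notin\{i,j\}$ vanish, but two patterns survive, not one:
\begin{itemize}
\item $l=i\neq j$, with eigenvalue factor $\mathbb V(\bm\lambda)\,\mathbb E[\bm\lambda^{-1}]=\frac{\nu-1}{2(\nu-2)^2(\nu-3)}$;
\item $l=j\neq i$, with factor $\mathbb E[\bm\lambda]\,\mathrm{Cov}(\bm\lambda,\bm\lambda^{-1})=-\frac{1}{2(\nu-2)^2}$.
\end{itemize}
Each factor is of order $\nu^{-2}$; only their sum $\frac{1}{(\nu-2)^2(\nu-3)}$ is of order $\nu^{-3}$. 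Both patterns carry the same Haar weight $K(K-1)\,\mathbb E[\bm u_{k1}^3\bm u_{k'1}\bm u_{k2}\bm u_{k'2}]=-\frac{3}{(K+2)(K+4)}$. Hence $\mathbb E_0[\bm Z(\bm X-\mu)]=-\frac{3}{(K+2)(K+4)(\nu-2)^2(\nu-3)}$, and multiplying by $\mu^{-1}=2(\nu-2)$ gives the lemma.

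Keeping only the ``variance'' term gives $-\frac{3(\nu-1)}{(K+2)(K+4)(\nu-2)(\nu-3)}$, which is of order $1/\nu$. Note also that this term carries the factor $\mathbb E[\bm\lambda^{-1}]=2(\nu-1)$, which you dropped. Your stated final value comes out only because, for the inverse-gamma law, the two factors sum to exactly $4\,\mathbb V(\bm\lambda)$. That would require an effective Haar constant $-\frac{12}{(K+2)(K+4)}$, which no single pattern provides. So if you use the shortcut, you must keep the $l=j$ covariance term.
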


\begin{proof}

Write $\bSigma=\bm U\bm \Lambda \bm U^\top$ with $\bm U\in O(K)$ Haar and $\bm \Lambda=\mathrm{diag}(\bm \lambda_1,\dots,\bm \lambda_K)$.
For $\siw_1(I_K, \mathbf{1}, \nu)$,  $\bm U$ is Haar and independent of $\bm \Lambda$.
Moreover,
\[
\bm \lambda_i \stackrel{iid}{\sim} \mathrm{Inv\text{-}Gamma}(\alpha,\beta),
\qquad \alpha=\nu-1,\ \beta=\tfrac12.
\]
Therefore (for $\nu>3$) the moments
\[
m_1:=\mathbb E[\bm \lambda]=\frac{1}{2(\nu-2)},\qquad
m_2:=\mathbb E[\bm \lambda^2]=\frac{1}{4(\nu-2)(\nu-3)},\qquad
m_{-1}:=\mathbb E[\bm \lambda^{-1}]=2(\nu-1)
\]
are finite.

By definition,
\[
\bm Z\bm \delta = \bm Z\frac{\bm X-\mu}{\mu}=\frac{\bm Z\bm X}{\mu}-\bm Z
\quad\Longrightarrow\quad
\mathbb E_0[\bm Z\bm \delta]=\frac{\mathbb E_0[\bm Z\bm X]}{\mu}-\mathbb E_0[\bm Z].
\]
We have $\mu=\mathbb E_0[\bm X]=m_1$ because
$\bm X=\sum_{p=1}^K \bm \lambda_p \bm u_{kp}^2$ and $\mathbb E[\bm u_{kp}^2]=1/K$.
Also, the known exact identity (computed with Haar 4th moments) is
\[
\mathbb E_0[\bm Z]=\mathbb E_0[\bSigma_{kk'}(\bSigma^{-1})_{kk'}]
=-\frac{1}{(\nu-2)(K+2)}.
\]
Hence it suffices to compute $\mathbb E_0[\bm Z\bm X]=\mathbb E_0[\bSigma_{kk'}(\bSigma^{-1})_{kk'}\bSigma_{kk}]$.

Let $\bm a_i:=\bm u_{ki}\bm u_{k'i}$ and $\bm b_p:=\bm u_{kp}^2$ (with $k\neq k'$).
Then
\[
\bSigma_{kk'}=\sum_{i=1}^K \bm \lambda_i \bm a_i,\qquad
(\bSigma^{-1})_{kk'}=\sum_{j=1}^K \bm \lambda_j^{-1} \bm a_j,\qquad
\bSigma_{kk}=\sum_{p=1}^K \bm \lambda_p \bm b_p,
\]
so
\[
\bm Z\bm X=\sum_{i,j,p=1}^K \frac{\bm \lambda_i\bm \lambda_p}{\bm \lambda_j}\,\bm a_i \bm a_j \bm b_p.
\]
Since $(\bm \lambda_i)$ are independent of $\bm U$ and iid, and $\bm U$ is Haar,
\[
\mathbb E_0[\bm Z\bm X]=\sum_{i,j,p}\mathbb E\!\Big[\frac{\bm \lambda_i\bm \lambda_p}{\bm \lambda_j}\Big]\,
\mathbb E[\bm a_i \bm a_j \bm b_p].
\]

For a fixed row $k$ and distinct row $k'$, the needed sixth-order Haar integrals can be computed via
Weingarten calculus / orthogonal polynomial integrals \citep{banica2011polynomial}.
The final identities (valid for $K > 3$) are:
\[
\mathbb E[\bm a_i^2 \bm b_i]=\frac{3}{K(K+2)(K+4)},
\qquad
\mathbb E[\bm a_i^2 \bm b_p]=\frac{K+1}{(K-1)K(K+2)(K+4)}\ (p\neq i),
\]
\[
\mathbb E[\bm a_i \bm a_j \bm b_p]=-\frac{1}{(K-1)K(K+2)(K+4)}\quad (i\neq j,\ p\notin\{i,j\}),
\]
\[
\mathbb E[\bm a_i \bm a_j \bm b_i]=\mathbb E[\bm a_i \bm a_j \bm b_j]=-\frac{3}{(K-1)K(K+2)(K+4)}\quad (i\neq j).
\]

By iid inverse-gamma:
\[
\mathbb E\!\Big[\frac{\bm \lambda_i\bm \lambda_p}{\bm \lambda_j}\Big]=
\begin{cases}
m_1, & j=i, \forall p,\\
m_2\,m_{-1}, & i=p\neq j,\\
m_1^2\,m_{-1}, & i\neq j,\ p\neq j,\ p\neq i,\\
m_1, & p=j\neq i.
\end{cases}
\]
Using the combinatorial counts of triples $(i,j,p)$ in each case and the Haar moments above, we have
\[
\mathbb E_0[\bm Z\bm X]
=
-\frac{K\nu-3K+4\nu-6}{2\,(K+2)(K+4)\,(\nu-3)\,(\nu-2)^2}.
\]

Since $\mu=m_1=\frac{1}{2(\nu-2)}$ and $\mathbb E_0[\bm Z]=-\frac{1}{(\nu-2)(K+2)}$,
\[
\mathbb E_0[\bm Z\bm \delta]
=\frac{\mathbb E_0[\bm Z\bm X]}{\mu}-\mathbb E_0[\bm Z]
=
2(\nu-2)\,\mathbb E_0[\bm Z\bm X]+\frac{1}{(\nu-2)(K+2)}
=
-\frac{6}{(K+2)(K+4)(\nu-3)(\nu-2)}.
\]
\end{proof}

\subsection{Proposition \ref{propboundeigengap}}\label{app: propboundeigengap}

\propboundeigengap*

\begin{proof}
Let $\Sigma\in\mathbb{R}^{K\times K}$ be a covariance matrix, hence symmetric positive definite, and define
\[
\kappa(\Sigma)=\frac{\lambda_{\max}(\Sigma)}{\lambda_{\min}(\Sigma)}=: \frac{M}{m}\qquad(M>m>0).
\]
Let $D=\mathrm{diag}(\Sigma_{11},\dots,\Sigma_{KK})$ and 
\[
R \;=\; D^{-1/2}\Sigma D^{-1/2}
\]
be the corresponding correlation matrix. Fix $k\neq k'$ and set
\[
a:=\Sigma_{kk}>0,\qquad b:=\Sigma_{k'k'}>0,\qquad r:=R_{kk'}=\frac{\Sigma_{kk'}}{\sqrt{ab}}.
\]
Define the  orthogonal vectors $u:=e_k/\sqrt{a}$ and $v:=e_{k'}/\sqrt{b}$. Then
\[
u^\top \Sigma u = 1,\qquad v^\top \Sigma v = 1,\qquad u^\top \Sigma v = r.
\]
Let $\varepsilon=\mathrm{sign}(r)\in\{-1,0,1\}$ (with $\mathrm{sign}(0)=0$), and set
\[
w:=u+\varepsilon v,\qquad z:=u-\varepsilon v.
\]
Using $u^\top\Sigma v=r$ we get
\[
w^\top \Sigma w
= u^\top\Sigma u + v^\top\Sigma v + 2\varepsilon\,u^\top\Sigma v
= 2 + 2|r| = 2(1+|r|),
\]
\[
z^\top \Sigma z
= u^\top\Sigma u + v^\top\Sigma v - 2\varepsilon\,u^\top\Sigma v
= 2 - 2|r| = 2(1-|r|).
\]
Moreover, since $u\perp v$, we have
\[
\|w\|^2=\|z\|^2=\|u\|^2+\|v\|^2=\frac1a+\frac1b=:s>0.
\]

Now, for every nonzero vector $x$, the Rayleigh quotient bound for a symmetric positive definite matrix gives
\[
m\|x\|^2 \;\le\; x^\top\Sigma x \;\le\; M\|x\|^2.
\]
Applying this to $w$ and $z$ yields
\[
ms \le 2(1+|r|) \le Ms,
\qquad
ms \le 2(1-|r|) \le Ms.
\]
In particular,
\[
1+|r| \le \frac{Ms}{2}
\quad\text{and}\quad
1-|r| \ge \frac{ms}{2}.
\]
Dividing the two inequalities we obtain
\[
\frac{1+|r|}{1-|r|}
\;\le\;
\frac{(Ms/2)}{(ms/2)}
=
\frac{M}{m}
=
\kappa(\Sigma).
\]
Since $1-|r|>0$, we can solve for $|r|$:
\[
1+|r| \le \kappa(\Sigma)\,(1-|r|)
\;\Longleftrightarrow\;
|r|\,(1+\kappa(\Sigma)) \le \kappa(\Sigma)-1
\;\Longleftrightarrow\;
|r| \le \frac{\kappa(\Sigma)-1}{\kappa(\Sigma)+1}.
\]
\end{proof}

\subsection{Proposition \ref{propExchange}}\label{app: exchange}

\propExchange*

\begin{proof}
We have for any permutation matrix $Q$,  $Q\bR Q^{-1}$ is the induced correlation matrix of $Q \bSigma Q^{-1}$. Thus if $Q \bSigma Q^{-1} \stackrel{d}{=} \bSigma$, we have the desired result: $Q \bR Q^{-1} \stackrel{d}{=} \bR$. In the following, we show that $Q \bSigma Q^{-1} \stackrel{d}{=} \bSigma$. 

Since in the case of $\siw_1$, we do not have the handy tool of sample representation, we rely directly on the density. Given the $\siw_1$ density of $\bSigma$: 
\begin{equation}
    \pi(\Sigma) \propto \frac{\exp\left(-\frac{1}{2}\mbox{tr}\left( \Psi \Sigma^{-1}\right)\right)}{det(\Sigma)^\nu \left[\prod_{i < j} (\lambda_i - \lambda_j)\right]},
\end{equation} 
the density of $Q \bSigma Q^{-1}$ can be calculated through the change of variables:
\begin{equation}
\pi_Q(\Sigma') =    \pi(f^{-1}(\Sigma')) |det J_{f^{-1}}(\Sigma')|, 
\end{equation}
where $f(\Sigma) = Q \Sigma Q^{-1}$, thus $f^{-1}(\Sigma') = Q^{-1} \Sigma' Q$. Note that 
\[det( Q^{-1} \Sigma' Q) = det(\Sigma'), \; \lambda_i(Q^{-1} \Sigma' Q) = \lambda(\Sigma'), \forall i =1, ..., K, \]
and 
\[tr(\Psi (Q^{-1} \Sigma' Q)^{-1}) = tr(\Psi Q^{-1} \Sigma' Q)= tr(Q\Psi Q^{-1} \Sigma').\]
$\Psi = \Delta P \Delta = \sigma^2 P$ under the identity assumption of $\sigma_k$, thus is permutation-invariant, namely, $Q\Psi Q^{-1} = \Psi.$ Thus $\pi(f^{-1}(\Sigma')) = \pi(\Sigma').$ Lastly, 
\[
J_{f^{-1}}(\Sigma') = \frac{d \operatorname{vec}( f^{-1}(\Sigma'))}{d \operatorname{vec}(\Sigma')} = \frac{d \operatorname{vec}(Q^{-1}\Sigma' Q)}{d \operatorname{vec}(\Sigma') } = \frac{d (Q^{-1}\otimes Q^{-1}) \operatorname{vec}(\Sigma') }{d \operatorname{vec}(\Sigma') } = Q^{-1}\otimes Q^{-1}.
\]
Thus $|det J_{f^{-1}}(\Sigma')| = |det Q^{-1}|^{2k} = 1$. We have $\pi_Q(\Sigma') = \pi(\Sigma').$

\end{proof}

\subsection{Property \ref{propertyIWSIWprior}}\label{app: mean var mix}

\propertyIWSIWprior*

\begin{proof}
$\bR$ induced from $\bSigma \sim \iw/\siw_1(\eta, \P_r, \Vec{\sigma}_r, \nu_r, r=0,1)$ is also a mixture of $\bR_0$ and $\bR_1$, respectively induced from $\bSigma_0 \sim \mathcal{IW}(\P_0, \Vec{\sigma}_0, \nu_0)$ and $\bSigma_1 \sim \siw_1(\P_1, \Vec{\sigma}_1, \nu_1)$ with $b =1$. Direct calculation shows that:\begin{equation}
\begin{aligned}
\small 
&\mathbb{E} \bR_{kk'} = \eta \mathbb{E} \bR_{0,kk'} + (1 - \eta) \mathbb{E} \bR_{1,kk'}, \\  
&\mathbb{V} \bR_{kk'} = \eta \mathbb{V} \bR_{0,kk'} + (1 - \eta) \mathbb{V} \bR_{1,kk'} + \eta (1 - \eta) (\mathbb{E} \bR_{0,kk'} -  \mathbb{E}\bR_{1,kk'})^2.
\end{aligned}
\end{equation}
Plug in the formulas of $\mathbb{E} \bR_{r,kk'}$ and $\mathbb{V} \bR_{r,kk'}$ for $r=0,1$, given in Proposition \ref{prop: IW prior} and Property \ref{prop: siw}, we have the results for the mixture. Note that, we ignore the terms of $\nu$ in $\mathbb{E}(\bR_{kk'})$, and term of $\P_{1,kk'}$ in $\mathbb{V}(\bR_{kk'})$ in Proposition \ref{prop: IW prior} since they are not the main predictors of the expectation and variance. We ignore as well all terms of expectation in  $\mathbb{E}\bR_{1,kk'}$.
\end{proof}

\subsection{Proposition \ref{propIWSIWpost}}\label{app: IW/SIW post}

\propIWSIWpost*

\begin{proof}
Given the prior on $\bSigma$:
\begin{equation}
\begin{aligned}
    &\bSigma \mid \bm z = 0 \sim \mathcal{IW}(\P_0, \Vec{\sigma}_0, \nu_0), \\ 
    &\bSigma \mid \bm z=1 \sim \siw_1(\P_1, \Vec{\sigma}_1, \nu_1),  \\
    & \bm z \sim \mathcal{B}(\eta),
\end{aligned}
\end{equation}
we have the posterior distribution:
\begin{equation}
\begin{aligned}
    \mathbb{P}(\bSigma \mid \x_n) &= \frac{\mathbb{P}(\bSigma, \x_n)}{\mathbb{P}(\x_n)}  =  \frac{\sum\limits_{r=0}^1 \mathbb{P}(\bm z = r)\mathbb{P}(\bSigma, \x_n \mid \bm z = r)}{\mathbb{P}(\x_n)} \\
    &=\frac{\sum\limits_{r=0}^1 \mathbb{P}(\bm z = r)\mathbb{P}(\x_n \mid \bm z = r)\mathbb{P}(\bSigma \mid \x_n, \bm z = r)}{\mathbb{P}(\x_n)}.
\end{aligned}
\end{equation}
Note that 
$$\mathbb{P}(\x_n \mid \bm z = 0) = \int \mathbb{P}(\x_n \mid \bSigma) \mathbb{P}(\bSigma\mid \bm z = 0) d \bSigma = \mathbb{E}_{\bSigma \sim \mathcal{IW}(\P_0, \Vec{\sigma}_0, \nu_0)} \mathbb{P}(\x_n \mid \bSigma) = L(\x_{1:n} \mid \bm z = 0).$$
Similarly, $\mathbb{P}(\x_n \mid \bm z=1) = L(\x_{1:n} \mid \bm z=1).$ Given that $\mathbb{P}(\bm z = r)$, we have 
$$
\begin{aligned}
 &\frac{\sum\limits_{r=0}^1 \mathbb{P}(\bm z = r)\mathbb{P}(\x_n \mid \bm z = r)\mathbb{P}(\bSigma \mid \x_n, \bm z = r)}{\mathbb{P}(\x_n)} \\
 =& \; \eta_n\mathbb{P}(\bSigma \mid \x_n, \bm z = r) + (1-\eta_n)\mathbb{P}(\bSigma \mid \x_n, \bm z = r)   
\end{aligned}
$$

On the other hand, $\mathbb{P}(\bSigma \mid \bm z = 0)$ is the $\iw(\P^{1}, \Vec{\sigma}_0, \nu_{1})$, given its Gaussian conjugacy, $\mathbb{P}(\bSigma \mid \x_n, \bm z = 0)$ is $\iw(\P^{1,n}, \Vec{\sigma}_0^{(n)}, \nu_{0,n})$. Similarly, $\mathbb{P}(\bSigma \mid \x_n, \bm z=1)$  is $\siw_1(\P^{2,n}, \Vec{\sigma}_1^{(n)}, \nu_{1,n})$. Thus the posterior distribution is still a $\iw/\siw_1$ mixture with the weight $\eta_n$.     
\end{proof}

\subsection{Proposition \ref{propL}}\label{app: L1}
\ppropL*

\begin{proof}
    Denote $p_\iw(\bSigma; \Psi_0, \nu_0)$, $k_\iw(\bSigma; \Psi_0, \nu_0)$, and $c_\iw(\Psi_0, \nu_0)$ respectively the probability density function, the kernel (or unnormalized density function), and normalization constant of $\iw(\Psi_0, \nu_0)$, then we have 
\begin{equation}\label{eq: L1 calcul}
\begin{aligned}
     &L(\x_{1:n} \mid \bm z = 0) = \int\left[\prod_{i=1}^n p_{\mathcal{N}}(\bm x_i; 0, \bSigma)\right] p_{\iw}(\bSigma ; \Psi_0, \nu_{1}) d \bSigma   \\
     &= c_\iw(\Psi_0, \nu_0)\int\left[\prod_{i=1}^n p_{\mathcal{N}}(\bm x_i; 0, \bSigma)\right] k_\iw(\bSigma, \Psi_0, \nu_0) d \bSigma \\
     &= c_\iw(\Psi_0, \nu_0)\int(2\pi)^{-\frac{KT}{2}} |\bSigma|^{-\frac{n}{2}} \exp\left(\mbox{tr}\left( -\frac{1}{2}\bSigma^{-1}\bm S\right)\right) k_\iw(\bSigma, \Psi_0, \nu_0) d \bSigma \\
     &= (2\pi)^{-\frac{KT}{2}}c_\iw(\Psi_0, \nu_0)\int |\bSigma|^{-\frac{(n+\nu_0) + K+1}{2}} \exp\left(\mbox{tr}\left( -\frac{1}{2}\bSigma^{-1}\bm (\bm S+\Psi_0)\right)\right) k_\iw(\bSigma, \Psi_0, \nu_0) d \bSigma\\
     &=\frac{c_\iw(\Psi_0, \nu_0)}{(2\pi)^{\frac{KT}{2}}c_\iw(\Psi_0+ \bm S, \nu_0+n)}.
\end{aligned}
\end{equation}
where $\bm S = \sum\limits_{i=1}^n\bm x_i\bm x_i^\top$. 
Given the closed form of constant $c_\iw(\Psi_0, \nu_0) = |\Psi_0|^{\frac{\nu_0}{2}}/(2^{\frac{\nu_0K}{2}})\Gamma_K(\frac{\nu_0}{2})$ with $\Gamma_K$ multivariate gamma function, $L(\x_{1:n} \mid \bm z = 0)$ can be calculated explicitly. 
\end{proof}

\newpage

\section{Posterior inference for $\iw/\siw_1$ mixture}\label{app: post inf}
\subsection{Algorithm to calculate \texorpdfstring{$L(\x_{1:n} \mid \bm z=1)$}{}}

\begin{algorithm}
\caption{Estimation of $L(\x_{1:n} \mid \bm z=1)$ for $\x_{1:n} \stackrel{iid}{\sim} \mathcal{N}(0, \bSigma), \, 
\bSigma \sim \siw_1(\Psi_1, \nu_1)$}
\label{alg}
\begin{algorithmic}[1]
\Require degree of freedom $\nu_1$, scale matrix $\Psi_1$, data $\x_{1:n}$, sample size for proposal $M$, clipping size $M_n$.
\Ensure $\widehat{\log L}_1$.

\vspace{0.1in}
\textsc{/* Sampling from proposal */}

\For{$m = 1,\dots,M$}
  \State Draw $\alpha_{i,j} \stackrel{iid}{\sim} \mathcal{N}(0, 1), \; i,j =1, \ldots, K,$ and form the matrix $A := [\alpha_{i,j}]$
  \State Compute QR decomposition of $A$, denoted by $A = QR$, and set $\Gamma = Q$
  \For{$i = 1,\dots,K$}
      \State Draw $\lambda_i \sim \mathcal{IG}\bigl(\nu_1-1,\frac{1}{2}\Gamma_{i}^\top\Psi_1\Gamma_{i}\bigr)$
  \EndFor
  \State $\Log w_m = K\Log f_\Gamma(\nu_1-1) - \sum_{i=1}^K(\nu_1-1)\Log\left(\frac{1}{2}\Gamma_{i}^\top \Psi_1 \Gamma_{i}\right).$
  \State  Set $I_i, i =1, \ldots, K$ such that $ \lambda_{I_1} \geq \ldots \geq \lambda_{I_K}.$
    \State Define matrix $\Tilde{\Delta} := \mbox{Diag}\{\lambda_{I_1}, ..., \lambda_{I_K}\}$, and matrix $\Tilde{\Gamma}$ such that $\Tilde{\Gamma}_{:,i} = \left[\Gamma\right]_{:,I_i}$.
  \State Set $\Tilde{\Sigma}^{(m)} \leftarrow \Tilde{\Gamma} \,\Tilde{\Delta}\,\Tilde{\Gamma}^\top$.
\EndFor

\vspace{0.1in}
\textsc{/* Clipping */}
\State 
Let $\Log w_{(M_n)}$ the $M_n$-th greatest weight. For $m = 1, \ldots, M$, define  
\vspace{-0.1in}
\begin{equation}
    \begin{cases}
      \Log w_{m}^\prime =  \Log w_{(M_n)}, \; &\mbox{if} \;  \Log w_{m} >  \Log w_{(M_n)}, \\
      \Log w_{m}^\prime = \Log w_{m}, \; &\mbox{otherwise}.
    \end{cases}
\end{equation}

\textsc{/* Calculating the normalized weights */}
\State \label{line} 
For $m = 1, \ldots, M$, compute 
$p_m = \Tilde{w}_m/\sum_{m=1}^M \Tilde{w}_m,$ where $\Tilde{w}_m = \Exp(\Log w_m^\prime - \max_m\{ \Log w_m^\prime \}).$

\vspace{0.1in}
\textsc{/* Calculating the estimation */}

\vspace{0.05in}
\State Define $\log f_m = \sum_{i=1}^n\log p_\mathcal{N}(\bm x_i; 0, \Tilde{\Sigma}^{(m)})$, then $\log f^p_m = \log f_m + \log p_m$.

\vspace{0.03in}
\State $log L_1 =  \max_m\{\log f^p_m\} + \log(\sum_{i=1}^n\exp( \log f^p_m - \max_m\{\log f^p_m\}))$. 

\vspace{0.1in}
\Return $\widehat{\log L}_1.$
\end{algorithmic}
\end{algorithm}
The $\mbox{LogSumExp}$ operation on line $15$ is to calculate  $\log(\sum_{i=1}^n f^p_m)$ without explicitly using $f^p_m$ since they are either too large or too small thus causing overflow or underflow. This algorithm is an adaption of Algorithm 3 in \cite{jiang2025bayesian}, where a clipping step is integrated into the importance sampling framework in order to increase the robustness of the output. We offer the convergence result in the proposition \ref{prop: algo IS}. 
\begin{prop}\label{prop: algo IS}
Provided $\lim_{M\rightarrow \infty}M_n/ M = 0$ and $\nu_1 > 1$, we have 
$$\widehat{\log L}_1 \stackrel{P}{\longrightarrow} \log L(\x_{1:n} \mid \bm z=1).$$ Moreover, given $\lim_{M\rightarrow \infty}M_n/ \sqrt{M} = 0$, we have 
$$\sqrt{M}\left(\widehat{\log L}_1 - \log L(\x_{1:n} \mid \bm z=1)\right) \stackrel{D}{\longrightarrow} \mathcal{N}(0, \sigma^2),$$
with $\sigma^2$ some constant. 
\end{prop}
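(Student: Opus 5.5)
Write $L:=L(\x_{1:n}\mid \bm z=1)=\int f(\Sigma)\,\pi_{\siw_1}(\Sigma)\,d\Sigma$, where $f(\Sigma)=\prod_i p_{\mathcal N}(\bm x_i;0,\Sigma)$. The plan is to view the output of Algorithm~\ref{alg} as a clipped self-normalised importance sampling estimator,
\[
\widehat L=\sum_{m=1}^M p_m f(\tilde\Sigma^{(m)}),
\]
so that $\widehat{\log L}_1=\log\widehat L$ by the log-sum-exp step. Consistency of $\widehat L$ then transfers to the log by the continuous mapping theorem, since $L>0$, and the CLT transfers by the delta method with asymptotic variance $\sigma^2=\tau^2/L^2$.

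\textbf{Step 1 (identify the proposal and the weights).} Following \cite{jiang2025bayesian}, I would check that the proposal is $q(\Gamma,\Lambda)$, with $\Gamma$ Haar (obtained from the QR step) and $\lambda_i\mid\Gamma$ independent $\mathcal{IG}(\nu_1-1,\tfrac12\Gamma_i^\top\Psi_1\Gamma_i)$. Changing variables $\Sigma=\Gamma\Lambda\Gamma^\top$, the Jacobian $\prod_{i<j}(\lambda_i-\lambda_j)$ exactly cancels the $b=1$ eigengap factor of the $\siw_1$ kernel. Integrating out $\lambda_i$ then gives $\pi_{\siw_1}/q\propto w(\Gamma)=\prod_i(\tfrac12\Gamma_i^\top\Psi_1\Gamma_i)^{-(\nu_1-1)}$, up to a constant, which is the log-weight in the algorithm. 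The reordering of eigenvalues does not change $\Sigma$. The condition $\nu_1>1$ ensures the inverse-gamma shape is positive. Since $\Gamma_i^\top\Psi_1\Gamma_i\in[\lambda_{\min}(\Psi_1),\lambda_{\max}(\Psi_1)]$, the weights are bounded above and below, so every moment of $w$ is finite. Moreover $f(\Sigma)\le (2\pi)^{-nK/2}|\Sigma|^{-n/2}e^{-\mathrm{tr}(\Sigma^{-1}S)/2}$ is bounded whenever $S\succ0$; if $S$ is singular, I would instead control its moments under $q$ using the inverse-gamma tails. Hence $\E_q[w f]$, $\E_q[w^2f^2]$ and $\E_q[w^2]$ are finite.

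\textbf{Step 2 (unclipped estimator).} By the strong law of large numbers applied to numerator and denominator separately, $\frac1M\sum w_mf_m/\frac1M\sum w_m\to \E_q[wf]/\E_q[w]=L$. The bivariate CLT plus the delta method then gives $\sqrt M(\widehat L^{\rm SN}-L)\Rightarrow\mathcal N(0,\tau^2)$ with $\tau^2=\E_q[w^2(f-L)^2]/\E_q[w]^2$.

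\textbf{Step 3 (effect of clipping).} This is the step I expect to be the main obstacle. Clipping modifies only the $M_n$ largest weights, replacing each by $w_{(M_n)}$. Since $w$ is bounded by some $\bar w$, the change in the numerator sum is at most $M_n\bar w\sup f$ (or a corresponding bound with $f$ controlled in probability), and similarly for the denominator. After dividing by $M$, the perturbation is $O_P(M_n/M)\to0$, which yields consistency. After multiplying by $\sqrt M$, it is $O_P(M_n/\sqrt M)\to0$, so Slutsky's lemma preserves the Step~2 CLT. The delicate points are: (i) if $f$ is unbounded (singular $S$), bounding $\max_m f_m$ over the $M_n$ clipped indices, which I would handle with a union/moment bound showing $\max_{m\le M} f_m=o_P(M^{1/2}/M_n)$ using higher moments of $f$ under $q$; and (ii) ensuring that the normalising constant of $\pi_{\siw_1}$, which is absorbed in the self-normalisation, causes no bias. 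Finally I would apply the log via the delta method to conclude.
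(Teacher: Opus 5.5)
Your argument is correct and has the same skeleton as the paper's proof. You read the output of Algorithm~\ref{alg} as the log of a clipped self-normalised importance sampling estimate of $L=\E_{\siw_1}[f]$. You then get consistency and a CLT for the unclipped ratio, and show the clipping is negligible when $M_n/M\to0$, resp.\ $M_n/\sqrt M\to0$.

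The difference is in how these steps are justified. The paper just invokes Lemmas C.3--C.4 of \cite{jiangsiw25} together with a finite-moment condition from their Example~2.5. You make the argument self-contained through one observation: the weights $w\propto\prod_i\beta_i^{-(\nu_1-1)}$, with $\beta_i=\tfrac12\Gamma_i^\top\Psi_1\Gamma_i\in[\tfrac12\lambda_{\min}(\Psi_1),\tfrac12\lambda_{\max}(\Psi_1)]$, are bounded above and below. Hence, for $S\succ0$ (where $f$ is bounded), clipping perturbs each normalised sum by a deterministic $O(M_n/M)$. The cited lemmas are more general, since they do not require bounded weights. Your route is elementary and makes the role of $\nu_1>1$ transparent.

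You are also more careful about the logarithm than the paper's write-up. The paper takes $f$ to be the log-likelihood, and read literally that estimator would target the prior-expected log-likelihood rather than $\log L$. Your continuous-mapping/delta-method step, with $\sigma^2=\tau^2/L^2$, is the right way to finish.

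One caveat concerns your side case of singular $S$, which does not arise in the application since $n>K$. Bounding the clipped terms by $M_n\max_{m\le M}f_m$ is too crude there: that maximum diverges, while $M_n/\sqrt M\to0$ may hold arbitrarily slowly. Use instead that the clipped index set depends only on the $\Gamma^{(m)}$. Meanwhile $\E_q[f\mid\Gamma]$ is uniformly bounded, by negative inverse-gamma moments with $\beta_i\ge\tfrac12\lambda_{\min}(\Psi_1)$. This gives $\sum_{m\ \mathrm{clipped}}f_m=O_P(M_n)$, which suffices.
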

\textit{Proof:} 
Note that $\widehat{\log L}_1$ is an importance sampling estimator taking the form 
\begin{equation}
\Hat{\mu}^{IS}_{M,M_n}(f) =
\frac{\sum_{m=1}^M w_m^\prime f(\Tilde{\Sigma}^{(m)})}{\sum_{m=1}^M w_m^\prime}, 
\end{equation}
with $f(\Sigma) = \sum_{i=1}^n\log p_\mathcal{N}(\bm x_i; 0, \Sigma)$, namely the Gaussian likelihood. We apply Lemma C.3 and C.4 in the supplemental materials of \cite{jiangsiw25}. It suffices to show that $\mathbb{E}f(\Tilde{\Sigma}^{(m)})$ is finite, which is shown true in Example 2.5 in \cite{jiangsiw25} provided $\nu_1 > 1$. This completes the proof. 

The theoretical results imply that we can fix $M_n$ by a value less than $M$ to have a consistent estimator. For the real data set, $M$ is set $500000$ and $M_n = M^0.9$. The algorithm has computation advantage, thus we can set $M$ very large with large $M_n$ to get more precise and meanwhile more robust result. For example, for $K=20$, one run takes $2$ minutes. 3 runs give the estimates for when $\rho_1 = 0.3, \nu_1 = 40$: $-3745$, $-3741$, and $-3742$, which are very stable.

When $\rho_1 = 0$ thus $\Psi_1$ is proportional to the identity matrix $I$, we use the naive method to estimate $\log L(\x_{1:n} \mid \bm z=1)$, which is a sample mean estimator of $\mathbb{E}_{\bSigma \sim \siw}f(\bSigma)$. The sampling from $\siw_1$ is done by Algorithm 2 in \cite{jiangsiw25}, with $N=50000$, which is an exact and fast sampling. The large $N$ is enough for giving a robust result even the naive method can be less robust with smaller sample size $N$. 

\subsection{Calculation of posterior mean}\label{sec-app: cal post mean}
The posterior correlation mean is a weighted average of the correlation mean of $\iw$-inducing posterior and the one of $\siw_1$-inducing posterior. For the $\iw$ mean, we use R function and generate $1000$ covariance samples, then convert them to correlation sample, and calculate the sample mean as the estimation. For the $\siw_1$ mean, we use Algorithm 3 in \cite{jiangsiw25}, and generate $N=50000$ covariance samples with $M=50000, M_n = M^{0.8}$, then convert to the correlation samples, finally calculate the sample mean.

\subsection{Supplementary results on the rat data of $\iw / \siw$ model}\label{app: supp results}
\subsubsection{Dead rats}\label{sec: dead}
In the section, we report the results from other hyperparameter values of $\rho_r, \nu_r, r=0,1$ over the dead rat data set. In particular, we would like to illustrate two types of results that $\iw/\siw_1$ prior will produce and their corresponding scenarios, which are \textit{model selection} and \textit{model mixture}. The model selection happens when one of the two component priors is too far from the data, thus the mixture prior will choose the good prior with the posterior weight $\eta_n \in \{0, 1\}$. This can bee seen through the updating formula:
\begin{equation}
    \eta_{n} = \frac{\eta L(\x_{1:n} \mid \bm z = 0)}{\eta L(\x_{1:n} \mid \bm z = 0) + (1-\eta) L(\x_{1:n} \mid \bm z=1)}.  
\end{equation}
For example when $\siw_1$ prior is much further away from the data, resulting in that $L(\x_{1:n} \mid \bm z = 0) \gg L(\x_{1:n} \mid \bm z=1)$, then $\eta L(\x_{1:n} \mid \bm z = 0) + (1-\eta) L(\x_{1:n} \mid \bm z=1) \approx \eta L(\x_{1:n} \mid \bm z = 0)$, thus $\eta_{n} \approx 1$. This example also implies that the model selection does not have to happen when the prior weight is $0.5$, and both priors can be bad as long as one is much worse than the other. By contrast, the model mixture happens when the two priors are comparable, which similarly can be both good or both bad. Then the posterior weight $\eta_n$ will fall into $(0, 1)$, with both priors taken into account by the posterior results. 

We simulate both scenarios. Because the dead rat has ``ground truth'', which is that the correlations of regional pairs are all zero. Thus we create bad priors by setting $\P_{r,kk'}$ non-zero. Higher absolute values of $\P_{r,kk'}$ or/and higher $\nu_r$ means the prior is further away from the data. 
Firstly, Table \ref{tab: model choice} reports the simulations for differents scenarios, with a range of $\rho_{0,r}, \nu_r, r =0, 1$. 
\begin{table}[ht]
    \centering
    (a)
    
    \begin{tabular}{|c|ccc|}
      \hline
      \diagbox[height=2.5em, width=8em]{$(\rho_0, \nu_0)$}{$(\rho_1, \nu_1)$} &  $(0,10)$ & $(0,18)$ & $(0,36)$\\
    \hline
        $(0.1,50)$ & 0 & 0 & 0 \\
        $(0.1,90)$ & 0 & 0 & 0\\
        \hline
    \end{tabular}

    (b)
    
    \begin{tabular}{|c|cc|cc|cc|}
    \hline
      \diagbox[height=2.5em, width=8em]{$(\rho_0, \nu_0)$}{$(\rho_1, \nu_1)$} &  $(0.3,10)$ & $(0.3,36)$ & $(0.5,10)$ & $(0.5,36)$ & $(0.8,10)$  & $(0.8,36)$\\
    \hline
        $(0,50)$ & 0 & 0 & 0 & 1 & 1  & 1\\
        $(0,90)$ & 0 & 1 & 1 & 1 & 1  & 1\\
        \hline
    \end{tabular}
    \caption{Posterior weight $\eta_n$ when one of the prior components is much further away from the data. Recall that  $\rho_{0,r}, r =0, 1$ determine the prior correlation mean in the way: $\mathbb{E}(\bR_{kk'} \mid \small \bSigma \sim \mathcal{IW} \,) \approx 
    \rho_0 \mbox{ and }  \mathbb{E}(\bR_{kk'} \mid \small \bSigma \sim \siw \, ) \approx 0.066\rho_1$. }
    \label{tab: model choice}
\end{table}

In the subtable (a) of Table \ref{tab: model choice}, $\iw$ prior is further away from the data given that $\rho = 0.1$, thus the induced correlation entries $\bR_{kk'}, k \neq k'$ will centered around $0.1$. Meanwhile the induced correlation entries from $\siw_1$ prior will centered around $0$, which is the ground truth of the data. Consequently, all $\eta_n =0 $, implying that the mixture model chose the $\siw_1$ prior. In the case, the mixture model reduces to the individual model $\x_{1:n} \stackrel{iid}{\sim} \mathcal{N}(0, \bSigma), \, \bSigma \sim \siw_1(\P_1, \Vec{\sigma}_1, \nu_1).$ In the subtable (b) of Table \ref{tab: model choice}, $\siw_1$ prior is given positive $\rho_1$. However the induced correlation entries $\bR_{kk'}, k \neq k'$ are not exactly centered around $\rho_1$, instead, around $0.066\rho_1$ (recall the result in Figure \ref{fig: ER_K20}). Thus when $\rho_1$ is small such as $0.3$, the center of $\bR_{kk'}, k \neq k'$ is still very close to $0$. They can only become far away from the data when a sufficiently high $\nu_1$ is adopted, leading to a very peaky and concentrated prior distribution of $\bR_{kk'}$, hence leaving little mass on $0$. 

This explains why the model chose $\siw_1$ on the left part of the subtable (b). More specifically, when $\rho_1 = 0.3$, the prior mean of $\bR_{kk'}$ is approximately $0.02$, almost $0$. Thus only when $\nu_1$ is high with more mass on $0.02$ for $\siw_1$, meanwhile, $\nu_0$ is high as well with more mass on $0$ for $\iw$, the mixture can choose $\iw$, as for the case $(\rho_0, \nu_0) = (0, 90)$ versus $(\rho_1, \nu_1) = (0.3, 36)$. When increasing $\rho_1$, the mixture choose almost always $\iw$ for all $\nu_r, \, r =0, 1$. 

\begin{table}[ht]
    \centering
    \setlength{\tabcolsep}{4pt} 
    \renewcommand{\arraystretch}{1.2} 

    \caption{Posterior weight $\eta_n$ when the two priors are comparable. 
    $\rho_0$ is fixed at $-0.05$ in all models.}
    \label{tab: mixture}
    
    \begin{tabular}{@{}lc|cccccccc@{}}
        \toprule
        $\rho_1$ & $\nu_1$ & \multicolumn{8}{c}{$\nu_0$} \\
        \midrule
        \multicolumn{2}{c|}{} & $[24,54]$ & $56$ & $58$ & $[60,166]$ & $168$ & $170$ & $172$ & $174$ \\
        \cmidrule(lr){3-10}
        \rowcolor{lightergray} 
        0.5 & 18 & 0 & 0.22 & 0.87 & 1 & 0.99 & 0.96 & 0.91 & 0.80 \\
        \midrule
        \multicolumn{2}{c|}{} & $[24,38]$ & $40$ & $[42,258]$ & $260$ & $262$ & $264$ & $266$ & $268$ \\
        \cmidrule(lr){3-10}
         \rowcolor{lightergray}
         0.5 & 36 & 0 & 0.10 & 1 & 0.97 & 0.93 & 0.82 & 0.60 & 0.33 \\
        \midrule
        \multicolumn{2}{c|}{} & $[24,60]$ & $62$ & $64$ & $66$ & $[68,148]$ & $150$ & $152$ & $154$ \\
        \cmidrule(lr){3-10}
         \rowcolor{lightergray}
         0.6 & 10 & 0 & 0.36 & 0.86 & 0.98 & 1 & 0.98 & 0.95 & 0.90 \\
        \midrule
        \multicolumn{2}{c|}{} & $[24,32]$ & $34$ & $[36,330]$ & $332$ & $334$ & $336$ & $338$ & $340$ \\
        \cmidrule(lr){3-10}
        \rowcolor{lightergray}
         0.6 & 18 & 0 & 0.14 & 1 & 0.98 & 0.96 & 0.88 & 0.71 & 0.45 \\
        \bottomrule
    \end{tabular}
\end{table}

Next, we report the simulations for the scenario--model mixture--. To make the priors comparable, we set both $\rho_{0,r}, r =0, 1$ non-zero. More specifically, we focus\footnote{The lower bound of $\rho_{0,r}$ is $-0.053$ (see Definition \ref{def: iw}).} on $\rho_0 = -0.05$ and $\rho_1 \in \{0.5, 0.6\}$. We test $3$ values for $\nu_1$: $10,  18, 36$. A finer grid is adopted for $\nu_0$ in order to reach comparable priors, that we vary it from $24$ to $500$ by step size of $2$. The results are shown in Table \ref{tab: mixture}. First we can see that in all cases, $\eta_n$ starts from $0$ to $1$ then decreases again. This implies that the mixture considers that the $\iw$ prior gets closer to the data then drifts away again as $\nu_0$ constantly increases. This can be explained by a nearly zero $\rho_0$. Because when $\nu_0$ is very small, the prior distribution of correlation is too flat hence not enough mass on $0$, as it becomes more concentrated on $-0.05$ more mass come to $0$ alongside, then the mass leaves again when the distribution becomes peaky. Secondly, one can observe that the $\nu_0$ value for $\eta_n$ to decrease from $1$ (hence the two priors become comparable again) varies, depending on the far the $\siw_1$ prior away from the data. Further away, higher $\nu_0$ is needed to make the $\iw$ prior worse. For example, the $\siw_1$ prior with $(\rho_1, \nu_1) = (0.5, 18)$ is further away from the data than $(\rho_1, \nu_1) = (0.5, 36)$, thus $\nu_0$ needs to reach $260$ to make the $\iw$ prior comparably bad, instead of $170$.

\subsubsection{Live rats}\label{sec: alive}
Now, we show the result of live rat. We only focus on the --mixture model-- scenario for the live rat. We test $2$ $\siw_1$ priors: $(\rho_1, \nu_1) = (0.3, 18)$ and $(\rho_1, \nu_1) = (0.3, 40)$, and $2$ $\rho_0$ values: 0.4 and 0.5. We vary $\nu_0$ to see the evolution of mixture weight. Table \ref{tab: mixture alive} reports the results. 

\begin{table}[ht]
    \centering
    \caption{Posterior weight $\eta_n$ for the live rat. }
    \label{tab: mixture alive}

    \begin{minipage}[t]{0.48\textwidth} 
        \centering
        \setlength{\tabcolsep}{3pt} 
        \renewcommand{\arraystretch}{1.2}
        \subcaption*{$\rho_0 = 0.4$} 
        
        \begin{tabular}{@{}lc|cccccc@{}}
            \toprule
            $\rho_1$ & $\nu_1$ & \multicolumn{6}{c}{$\nu_0$} \\
            \midrule
            \multicolumn{2}{c|}{} & $30.3$ & $30.4$ & $30.5$ & $30.6$ & $30.7$ & $30.8$ \\
            \cmidrule(lr){3-8}
            \rowcolor{lightergray} 
            0.3 & 18 & 0.08 & 0.24 & 0.53 & 0.79 & 0.93 & 0.98 \\
            \midrule
            \multicolumn{2}{c|}{} &$37.4$ & $37.6$ & $37.8$ & $38.0$ & $38.2$  & $38.4$\\
            \cmidrule(lr){3-8}
             \rowcolor{lightergray}
             0.3 & 40 & 0.09 & 0.22 & 0.46 & 0.72 & 0.88 & 0.95\\
            \bottomrule
        \end{tabular}
    \end{minipage}%
    \hfill 
    \begin{minipage}[t]{0.48\textwidth} 
        \centering
        \setlength{\tabcolsep}{3pt} 
        \renewcommand{\arraystretch}{1.2}
        \subcaption*{$\rho_0 = 0.5$} 

        \begin{tabular}{@{}lc|cccccc@{}}
            \toprule
            $\rho_1$ & $\nu_1$ & \multicolumn{6}{c}{$\nu_0$} \\
            \midrule
            \multicolumn{2}{c|}{} & $34.5$ & $34.7$ & $34.9$ & $35.1$ & $35.3$ & $35.5$ \\
            \cmidrule(lr){3-8}
            \rowcolor{lightergray}
             0.3 & 18 & 0.10 & 0.31 & 0.64 & 0.87 & 0.96 & 0.99 \\
            \midrule
            \multicolumn{2}{c|}{} & $52$ & $53$ & $54$ & $55$ & $56$ & $57$ \\
            \cmidrule(lr){3-8}
            \rowcolor{lightergray}
             0.3 & 40 & 0.10 & 0.32 & 0.64 & 0.86 & 0.95 & 0.96\\
            \bottomrule
        \end{tabular}
    \end{minipage}
\end{table}

For both $\rho_0$, as $\nu_1$ increases from 18 to 40, $\nu_0$ need to increase correspondingly for $\iw$ prior to be comparable. This is because the average sample correlation of the live rat data is around $0.1$ as shown in Figure \ref{fig: 20 regions alive}, thus both $\rho_1 = 0.3$ and $\rho_0 = 0.4$ are far away from the data with increased $\nu_r$ bringing more gap. This is a consistent pattern with the results on the dead rat. Additionally, we observe that the range of $\nu_0$ within which the two priors are comparable is of different lengths across hyperparameter values.

\subsection{Edge detection}\label{app: detection mix}
To have a reliable detection result, we need a large amount of posterior samples so that we can have enough batches $L$ with each having enough size hence large $S$. If we use Algorithm 3 with a sufficiently large $N$, the machine will fail due to the memory limit. Thus in practice, we adapt the sampling algorithm to the edge detection procedure as in Algorithm \ref{too tired to give a name}. 
\begin{algorithm}
\caption{Edge detection}
\label{too tired to give a name}
\begin{algorithmic}[1]
\Require degree of freedom $\nu_1$, scale matrix $\Psi_1$, data $\x_{1:n}$, sample size for proposal $M$, clipping size $M_n$, number of batches $L$, batch size $S$, estimate $\Bar{R}$ for overall posterior mean $\mathbb{E}(\bR_{kk',n})$ obtained from Section \ref{sec-app: cal post mean}
\Ensure Binary graph $\mathcal{G}$.

\vspace{0.1in}
\textsc{/* Sampling from proposal */}

\State Run Algorithm \ref{alg} until Line \ref{line} to obtain importance samples and normalized weights $\Tilde{\Sigma}^{(m)}, p_m, m = 1, ..., M$.

\vspace{0.1in}
\textsc{/* Importance Resampling */}
\For{$l = 1, ..., L$}
\For{$s = 1, \ldots, S$}
\State $ m^{\star} \sim \mathcal{M}_M(1 |  P_0, \ldots,  p_M),$ 
\State $\Sigma^{(s)} := \Tilde{\Sigma}^{(m^\star)}.$ Convert $\Sigma^{(s)}$ to correlation matrix $R^{(s)}$.
\EndFor
\State Calculate the batch correlation mean $\Bar{R}_{S}$. 
\State Calculate the transformed sample mean estimator $M^{(l)} = \sqrt{S}(\Bar{R}_{S} - \Bar{R}) + \Bar{R}.$
\EndFor

\vspace{0.1in}
\textsc{/* Edge detection */}
\State For all $k, k = 1, ..., K$, perform Step \ref{edge detection 3} in the detection procedure with $M^{(l)}_{k,k'}, l = 1, ..., L$, and return a binary graph $\mathcal{G}$. 

\Return $\mathcal{G}$
\end{algorithmic}
\end{algorithm}
For the real data set in Section \ref{sec: app}, we set $M=10000, M_n = M^{0.6}, L = 1000, S= 500$. 

\end{document}